\newcommand{\discsoldiers}{k}
\newcommand{\contsoldiers}{\sigma}
\newcommand{\cmark}{\checkmark}
\newcommand{\xmark}{\ding{55}} 
\title{Colonel Blotto with Battlefield Games}
\author {
    Salam Afiouni\textsuperscript{\rm 1},
    Jakub \v{C}ern\'{y}\textsuperscript{\rm 1},
    Chun Kai Ling\textsuperscript{\rm 2},
    Christian Kroer\textsuperscript{\rm 1}
}
\begin{document}

\maketitle

\begin{abstract}
We study a class of two-player zero-sum Colonel Blotto games in which, after allocating soldiers across battlefields, players engage in (possibly distinct) normal-form games on each battlefield. Per-battlefield payoffs are parameterized by the soldier allocations. This generalizes the classical Blotto setting, where outcomes depend only on relative soldier allocations. We consider both discrete and continuous allocation models and examine two types of aggregate objectives: linear aggregation and worst-case battlefield value. For each setting, we analyze the existence and computability of Nash equilibrium. The general problem is not convex-concave, which limits the applicability of standard convex optimization techniques. However, we show that in several settings it is possible to reformulate the strategy space in a way where convex-concave structure is recovered. We evaluate the proposed methods on synthetic and real-world instances inspired by security applications, suggesting that our approaches scale well in practice.
\end{abstract}


\section{Introduction}
Colonel Blotto games are a classic game class used to model competitive resource allocation \cite{borel1991application,roberson2006colonel,kovenock2012conflicts}. Introduced by Émile Borel in the 1920s, Blotto games describe situations where players simultaneously allocate soldiers (resources) between battlefields, after which each battlefield is ``won'' by the player who allocated more soldiers to it. In the classic setting, each battlefield features a winner-takes-all property, creating an interesting game theoretic conundrum where both players would like to either just barely win or badly lose every battlefield. Blotto games and their extensions have been studied in many disciplines including computer and social sciences, with applications to politics \citep{che2008caps,laslier2002distributive,myerson1993incentives} , warfare \citep{gross1950continuous,shubik1981systems}, and behavioral science \citep{chowdhury2013experimental,arad2012multi}. 

In this paper, we consider a variant of a two-player Colonel Blotto game played over $n$ heterogeneous battlefields that features two levels of play. In our setting, each battlefield $i\in[n]$ also comprises a zero-sum ``subgame'' $G_i$, the structure and payoffs of which depends on the soldiers each player allocated to $i$. This captures many real-world settings where players follow-up with individual battlefield-level strategies. For instance, a political party first allocates money between state elections, after which, each individual state decides how they should spend that money. Likewise in warfare, the tactics at a battlefield level play a huge part in the success of the overarching conflict; these battlefield level tactics are themselves games played at a lower level, the outcomes of which depend greatly on the number of soldiers assigned.
In more abstract terms, players first simultaneously determine a soldier assignment over battlefields (unobserved by the opponent), then play, in \textit{every} battlefield, a subgame whose payoff matrix is parametrized by the player's soldier allocation on that battlefield. We refer to this game as the ``two-level Blotto game''. By convention, Player 1 (resp. Player 2) seeks to minimize (resp. maximize) the payoff. Accordingly, we often refer to Player 1 (resp. Player 2) as the minimizing (resp. maximizing) player.

We analyze three dimensions of the proposed two-level Blotto games. First, we study \textit{discrete versus continuous} soldier types; the former implies that individual soldiers cannot be subdivided, though players are allowed to adopt randomization. Second, we study two ways of \textit{aggregating} payoffs from individual battlefields -- either with sum or min aggregators. Both these axes have been extensively explored in the Blotto literature (see, e.g,~\citet{doi:https://doi.org/10.1002/9780470400531.eorms1022,doi:https://doi.org/10.1002/9780470400531.eorms0913}, or~\citet{vu2020models}).
Third, we study two \textit{strategic settings}: the two-sided case, where both players allocate soldiers across battlefields and the simpler one-sided case, in which only the maximizing player is allowed soldier allocations. The one-sided setting is motivated by security applications, where only the ``defending'' player has soldiers to allocate, and provides stronger guarantees of equilibrium existence and computational tractability.

Our contributions are as follows. 
First, for every combination of soldier type (discrete vs. continuous), aggregator (sum vs. min), and strategic setting (two-sided vs. one-sided), we address the following: (i) does a Nash equilibrium exist, and (ii) are the max-min and min-max strategies well-defined and attained?
\footnote{A positive answer to both questions implies a valid minimax theorem in that regime. Moreover, any existence result in the two-sided model implies the one-sided case by fixing the minimizer’s allocations to zero.}
Our results are summarized in~\Cref{tab:results}. Second, for each of the cases where an equilibrium exists, we provide algorithms for solving them. In most cases, these problems reduce to linear programs, though in one special case the problem reduces to a quasiconcave problem. Third, we empirically evaluate these algorithms on synthetic data, demonstrating that these algorithms are usable.

\paragraph{Related work.} Due to space constraints, we only provide a small sampling of relevant work. Computationally, \citet{ahmadinejad2019duels} provided an LP based approach for separable battlefields (similar to one of our settings) to solve Blotto games but using the ellipsoid method; their method extends to continuous \textit{Lotto} games~\cite{hart2008discrete,dziubinski2013non}, which admit infinite action spaces, though they do not have subgames. \citet{behnezhad2023fast} avoid the impractical ellipsoid algorithm and propose a practical LP formulation based on layered graphs, and is very closely related to one of our approaches. \citet{kvasov2007contests} and \citet{roberson2012non} study a non-constant (zero) sum variant of Blotto games, while \citet{hortala2012pure} study pure strategy equilibria in such variants. \citet{kovenock2012coalitional} study a team variant where a single player plays against a coalition of 2 other players while \citet{boix2020multiplayer} study more general multiplayer settings. \citet{stephenson2024multi} study payoffs which depend on a function of allocated soldiers per player. \citet{powell2009sequential} and \citet{hausken2012impossibility} consider sequential variants of Blotto games.

\begin{table}[t]
\centering
\begin{tabular}{@{}llcc@{}}
\toprule
\textbf{Setting} &  & \textbf{NE} & \textbf{Mx/Mn exist.} \\
\midrule
\multirow{3}{*}{Discrete} 
  & Two-sided sum    & \cmark & \cmark  \\
  & Two-sided min    & \xmark & \cmark  \\
  & One-sided min    & \cmark & \cmark  \\
\midrule
\multirow{6}{*}{Continuous} 
  & Two-sided sum     & \xmark & $\;\checkmark^*$ \\
  & One-sided sum     & \xmark & $\;\checkmark^*$ \\
  & One-sided sum (L) & \cmark & \cmark          \\
  & Two-sided min     & \xmark & $\;\checkmark^*$ \\
  & One-sided min     & $\;\checkmark^*$ & $\;\checkmark^*$ \\
  & One-sided min (L) & \cmark & \cmark          \\
\bottomrule
\end{tabular}
\caption{Existence of NE and max-min strategies in different settings of discrete and continuous Blotto games. $^*$ holds only when the utility \(u\) is continuous in the soldiers allocation. (L) means that battlefield utilities are linear in the maximizing player's soldier allocation.}
\label{tab:results}
\end{table}
\section{Preliminaries}
Let $n > 0$ be the number of battlefields, and $m^{1}, m^{2}$ be the total number of soldiers available to Players 1 and 2. 
Let $\mathcal Z^j$ denote the set of all possible soldier assignments $z^j$ for player $j\in\{1,2\}$, where $z^j=(z^j_1,\dots,z^j_n)$ is a vector specifying soldier distributions across the $n$ battlefields, subject to the budget constraint $\sum_{i=1}^n z_i^j=m^j$. 
When soldiers are discrete, we let $z^j=k^j\in \mathbb Z^n_{\geq 0}$, and when soldiers are continuous, $z^j=\sigma^j\in \mathbb R^n_{\geq 0}$. We denote by $\mathcal K^j$ and $\Sigma^j$ the sets of all possible soldier assignments for player $j$ in the discrete and continuous case, respectively. 
Each subgame $G_i$ is a finite normal-form zero-sum game defined as a tuple $(\mathcal A_i^1,\mathcal A_i^2,u_i)$, where $\mathcal A_i^j$ denotes the action space of player $j\in\{1,2\}$ in battlefield $i$, and $u_i:\mathcal A_i^1\times \mathcal A_i^2\times \mathcal Z^1\times \mathcal Z^2\rightarrow \mathbb R$ denotes the payoff in that battlefield that Player 2 (resp. Player 1) seeks to maximize (resp. minimize). For each battlefield $i\in[n]$, the payoff $u_i$ is a function of both the number of soldiers players allocated to that battlefield and the actions they play in the subgame $G_i$.  We denote by $\alpha_i^{j}\in \mathcal{A}_i^{j} $ the action player $j$ plays on battlefield $i$.
For battlefield $i$ and any allocation profile $(z_i^1,z_i^2)$, we let $v_i^*(z_i^1,z_i^2)$ be the equilibrium value of subgame $G_i$ when player $j$ allocates $z_i^j$ soldiers to it. 
To compute the overall utility in the two-level Blotto game, we consider two aggregation functions: 1) the sum of the expected utilities across battlefields $U= \sum_{i\in [n]} \mathbb E[u_i(.)]$, and 2) the minimum of the expected utilities in all battlefields $U=\min_{i\in[n]} \mathbb E[u_i(.)]$. 

Across the variants of the two-level Blotto game we analyze, players choose mixed strategies over their subgame actions. 
The use of randomization at the allocation stage depends on whether the model is discrete or continuous. In the discrete case, the allocating players (the maximizing player in the one-sided case and both players in the two-sided case) randomize over integer allocation vectors. In the continuous case, players could also mix over allocations; yet we show that in some settings, equilibrium can be guaranteed even when players commit to a deterministic (pure) allocation. This motivates our focus on pure strategies at the allocation stage, which offer simpler representation and tractable computation.
The various strategy representations are summarized in~\Cref{app:notation}. Finally, a strategy pair $(x^*,y^*)\in X\times Y$ is a Nash equilibrium (NE) if $U(x^*,y)\leq U(x^*,y^*)\leq U(x,y^*)$ for all $x\in X,y\in Y$, where $X$ and $Y$ are the strategy spaces of Player 1 and 2, respectively.

Minimax theorems play a central role in our analysis, particularly in establishing equilibrium existence. In fact, the existence of a minimax theorem for a setting is equivalent to equilibrium existence (\Cref{prop:nash_minimax}). We rely on Sion’s minimax theorem~\citep{pjm/1103040253} in several of our proofs and also examine its applicability across various settings. In the setting where only Player 2 allocates a continuum of soldiers across battlefields and individual subgame payoffs are aggregated via the minimum operator, we apply a variant known as the Kneser-Fan minimax theorem~\citep{pjm/1103040253}. 
\begin{restatable}{proposition}{nashminimax}
\label{prop:nash_minimax}
    Let $u:X\times Y\to \mathbb{R}$ be an arbitrary function on arbitrary sets $X,Y$. Then $\max_{x\in X}\inf_{y\in Y}u(x,y) = \min_{y\in Y}\sup_{x\in X} u(x,y)$ if and only if there exists a NE in a two-player zero-sum game with action spaces $X$ and $Y$ and utility function $u$.
\end{restatable}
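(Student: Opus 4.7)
The statement is essentially a ``saddle-point $\Leftrightarrow$ minimax value'' result, and I would prove both directions by direct manipulation of the defining inequalities, using nothing beyond the weak duality inequality $\sup_{x}\inf_{y}u(x,y)\le \inf_{y}\sup_{x}u(x,y)$.

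For the ``only if'' direction, I assume the equality holds and denote the common value by $v$. Because the statement uses $\max$ (respectively $\min$) on the outside, attainment is given: pick $x^{*}\in X$ with $\inf_{y\in Y}u(x^{*},y)=v$ and $y^{*}\in Y$ with $\sup_{x\in X}u(x,y^{*})=v$. From the inner infimum, $u(x^{*},y)\ge v$ for every $y\in Y$; from the inner supremum, $u(x,y^{*})\le v$ for every $x\in X$. Evaluating both inequalities at $(x^{*},y^{*})$ forces $u(x^{*},y^{*})=v$, which is exactly the saddle-point condition $u(x,y^{*})\le u(x^{*},y^{*})\le u(x^{*},y)$, i.e., a Nash equilibrium.

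For the ``if'' direction, I start from a NE $(x^{*},y^{*})$ and put $v:=u(x^{*},y^{*})$. The NE inequalities give $u(x,y^{*})\le v$ for all $x$ and $u(x^{*},y)\ge v$ for all $y$. Taking $\sup_{x}$ in the first yields $\sup_{x}u(x,y^{*})\le v$, so $\inf_{y}\sup_{x}u(x,y)\le v$; taking $\inf_{y}$ in the second yields $\inf_{y}u(x^{*},y)\ge v$, so $\sup_{x}\inf_{y}u(x,y)\ge v$. Combined with the generic weak duality inequality, all four quantities coincide at $v$. Moreover, the bounds are tight at $x^{*}$ and $y^{*}$, so these witnesses attain the outer supremum and infimum, respectively, which means the outer $\sup$ is a $\max$ and the outer $\inf$ is a $\min$, matching the exact form of the stated equality.

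There is no real obstacle here; the only subtlety worth flagging in the writeup is that the statement bakes attainment into its formulation ($\max$/$\min$ rather than $\sup$/$\inf$ on the outside), so in the ``only if'' direction attainment is a hypothesis we use to pick $x^{*},y^{*}$, while in the ``if'' direction attainment is a conclusion we derive from the existence of a saddle point. I would therefore present the argument as two short paragraphs, emphasizing that the same witnesses $(x^{*},y^{*})$ play the roles of both the equilibrium strategies and the arg-opt of the outer optimizations.
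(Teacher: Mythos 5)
Your proof is correct and follows essentially the same route as the paper's: in the forward direction you extract the witnesses $x^{*},y^{*}$ from the attained outer $\max/\min$ and squeeze $u(x^{*},y^{*})$ between the two bounds to get the saddle point, and in the reverse direction you chain the NE inequalities with weak duality to force all four quantities to coincide. Your remark that attainment is a hypothesis in one direction and a conclusion in the other matches the paper's own caveat that the equivalence would fail with only $\sup/\inf$ on the outside.
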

 \begin{theorem}[Sion's minimax theorem] Let $X$ and $Y$ be nonempty convex and compact subsets of two linear topological spaces, and $f: X \times Y \rightarrow \mathbb R$ be a function that is upper semicontinuous and quasiconcave in the first variable and lower semicontinuous and quasiconvex in the second. Then \(
 \min_{y \in Y} \max_{x\in X} f(x,y) = \max_{x\in X} \min_{y \in Y} f(x,y)
 \).
 \end{theorem}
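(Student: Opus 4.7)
The plan is to split the equality into two inequalities. The easy direction $\sup_{x} \inf_{y} f(x,y) \leq \inf_{y} \sup_{x} f(x,y)$ is immediate: for any fixed $x_0 \in X$ and $y_0 \in Y$ we have $\inf_{y} f(x_0,y) \leq f(x_0,y_0) \leq \sup_{x} f(x,y_0)$, so taking $\sup$ over $x_0$ on the left and $\inf$ over $y_0$ on the right yields the bound; compactness of $X,Y$ together with the semicontinuity assumptions then ensures that both extrema are attained, so the inequality reads $\max_{x} \min_{y} f \leq \min_{y} \max_{x} f$.

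For the reverse, I would argue by contradiction. Suppose $\max_{x} \min_{y} f(x,y) < \alpha < \beta < \min_{y} \max_{x} f(x,y)$ for some reals $\alpha < \beta$. For each $y \in Y$ define the upper level set $L_y := \{x \in X : f(x,y) \geq \beta\}$; upper semicontinuity of $f(\cdot,y)$ makes $L_y$ closed (hence compact), quasiconcavity in the first variable makes it convex, and the bound $\beta < \max_{x} f(x,y)$ guarantees nonemptiness. Any point $x^{*} \in \bigcap_{y \in Y} L_y$ would satisfy $\min_{y} f(x^{*},y) \geq \beta$, contradicting $\max_{x} \min_{y} f < \alpha$. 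By compactness of $X$ it therefore suffices to show the family $\{L_y\}_{y \in Y}$ has the finite intersection property, and by induction on the number of sets this reduces to the two-point case $L_{y_1} \cap L_{y_2} \neq \emptyset$; the step from $n-1$ to $n$ is handled by restricting $f$ to the nonempty convex compact set $\bigcap_{i=1}^{n-1} L_{y_i}$, on which all hypotheses of the theorem are inherited.

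The main obstacle is the two-point lemma. Assuming for contradiction that $L_{y_1} \cap L_{y_2} = \emptyset$, I would work along the segment $y_t := (1-t)y_1 + t y_2$ for $t \in [0,1]$. For each such $t$, $L_{y_t}$ is nonempty, closed, and convex; dually, for each $x \in X$ the set $M_x := \{y \in [y_1,y_2] : f(x,y) \leq \alpha\}$ is nonempty, closed, and convex by the symmetric hypotheses on $y$ and the bound $\min_{y} f(x,y) < \alpha$. Defining $T_i := \{t \in [0,1] : L_{y_t} \cap L_{y_i} \neq \emptyset\}$ for $i = 1,2$, I would show each $T_i$ is closed using upper semicontinuity in $x$ together with compactness. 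The crux of the argument is to prove $T_1 \cup T_2 = [0,1]$ by exploiting the standing assumption $L_{y_1} \cap L_{y_2} = \emptyset$ and the convex/closed structure of $L_{y_t}$ and $M_x$; the standard treatment, due to Komiya, accomplishes this by a recursive bisection of the segment that invokes one-dimensional connectivity at each stage. Once $T_1 \cup T_2 = [0,1]$ is established, connectedness of $[0,1]$ forces $T_1 \cap T_2 \neq \emptyset$, producing a common point in $L_{y_1} \cap L_{y_2}$ and contradicting the standing assumption, which completes the proof.
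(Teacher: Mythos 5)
The paper does not prove this statement: Sion's minimax theorem is quoted verbatim as a known result, with a citation to Sion (1958), so there is no in-paper proof to compare against. Your sketch follows the standard elementary route (essentially Komiya's 1988 proof): weak duality plus semicontinuity for one inequality, reduction of the other to the finite-intersection property of the level sets $L_y$, induction down to the two-point case, and a connectedness argument along the segment $[y_1,y_2]$. That architecture is the correct one.

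There is, however, a genuine gap at the step you treat as routine. You claim each $T_i=\{t: L_{y_t}\cap L_{y_i}\neq\emptyset\}$ is closed ``using upper semicontinuity in $x$ together with compactness.'' This does not go through: if $t_n\to t$ and $x_n\in L_{y_{t_n}}\cap L_{y_i}$ with $x_n\to x$ along a subnet, u.s.c. in $x$ does give $x\in L_{y_i}$, but to conclude $x\in L_{y_t}$ you must relate $f(x,y_t)$ to $f(x_n,y_{t_n})$, and neither u.s.c. in the first variable nor l.s.c. in the second (which points the wrong way) controls this without joint continuity. This is precisely the delicate point of Sion's theorem; Komiya's bisection argument exists to circumvent it, and it uses both levels $\alpha<\beta$ in an essential way. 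Meanwhile the part you flag as the crux, $T_1\cup T_2=[0,1]$, is actually immediate: each $L_{y_t}$ is nonempty because $\beta<\min_{y}\max_{x}f$ and $y_t\in Y$ by convexity, and $L_{y_t}\subseteq L_{y_1}\cup L_{y_2}$ by quasiconvexity of $f(x,\cdot)$ on the segment. Two smaller points: the final contradiction is that the connected set $L_{y_{t^*}}$ meets both of the disjoint closed sets $L_{y_1}$ and $L_{y_2}$ --- it does not literally produce a point of $L_{y_1}\cap L_{y_2}$; and in the induction for the finite-intersection property you must re-establish nonemptiness of the level sets over the restricted domain $\bigcap_{i<n}L_{y_i}$, which is where the inductive hypothesis actually enters, so ``all hypotheses are inherited'' glosses over the substantive step. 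As written, the proposal is a correct outline that misidentifies where the difficulty lies and delegates that difficulty to an external reference rather than resolving it.
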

\begin{definition}[Concavelike, Convexlike, and Concave-convexlike]\label{def:ccv-cvx-like}
Consider two spaces $X$ and $Y$, and let $f$ be a function $f$ on $X\times Y$. We say $f$ is \textit{concavelike} in $X$ if for every $x_1,x_2\in X$ and $0\leq t \le 1$, there is an $x \in X$ such that
    \(
    tf(x_1,y)+(1-t)f(x_2,y) \leq f(x,y) \; \forall y \in Y
    \), and $f$ is \textit{convexlike} in $Y$ if for every $y_1,y_2\in Y$ and $0\leq t \le 1$, there is a $y \in Y$ such that
    \(
    tf(x,y_1)+(1-t)f(x,y_2) \geq f(x,y) \; \forall x \in X
    \). 
  Finally, $f$ is called concave-convexlike if it is concavelike in $X$ and convexlike in $Y$.
\end{definition}
\begin{theorem}[Kneser-Fan minimax theorem]\label{thm:kneser-fan}
    Let $X$ be compact, $Y$ any space, and $f$ a function on $X \times Y$ that is concave-convexlike. If $f(x, y)$ is upper semicontinuous in $x$ for each $y$, then \(\sup_{x\in X} \inf_{y\in Y} f =\inf_{y\in Y} \sup_{x\in X} f. \)
\end{theorem}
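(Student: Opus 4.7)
The plan is to prove the nontrivial inequality $\sup_{x}\inf_{y} f \ge \inf_{y}\sup_{x} f$ by contradiction: assume there is a gap, use compactness together with upper semicontinuity to reduce to a finite subset of $Y$, and then combine concavelikeness in $X$ with a finite-dimensional separation argument to synthesize a single $y^{\ast}\in Y$ that refutes the strict inequality. Let $\alpha = \sup_{x\in X}\inf_{y\in Y} f(x,y)$ and $\beta = \inf_{y\in Y}\sup_{x\in X} f(x,y)$, assume for contradiction $\alpha < c < \beta$ for some $c$, and define $C_y = \{x\in X : f(x,y) \ge c\}$. Upper semicontinuity of $f(\cdot,y)$ makes each $C_y$ closed in the compact set $X$, and $\alpha < c$ yields $\bigcap_{y\in Y} C_y = \emptyset$; the finite intersection property then produces $y_1,\dots,y_k\in Y$ with $\min_i f(x,y_i) < c$ for every $x\in X$.

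Next I would transfer this pointwise failure into a uniform one via convex geometry in $\mathbb{R}^k$. Define $A = \{v\in \mathbb{R}^k : \exists\, x\in X \text{ with } v_i \le f(x,y_i) \text{ for all } i\}$. Concavelikeness of $f$ in $X$ produces, for any two witnesses $x^{(1)}, x^{(2)}$ and any $t\in[0,1]$, a single $x\in X$ satisfying $tf(x^{(1)},y)+(1-t)f(x^{(2)},y)\le f(x,y)$ for \emph{all} $y\in Y$ simultaneously, and hence at each $y_i$; it follows that $A$ is a convex, downward-closed subset of $\mathbb{R}^k$. The finite-cover conclusion rules out any point of $A$ with all coordinates strictly greater than $c$, i.e., $A\cap(c,\infty)^k=\emptyset$. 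Since $(c,\infty)^k$ is open and convex, a separating hyperplane exists: a nonzero $\lambda\in\mathbb{R}^k$ with $\lambda\cdot v \le \lambda\cdot w$ for all $v\in A,\ w\in(c,\infty)^k$. The downward closure of $A$ forces $\lambda\ge 0$ componentwise, and normalizing $\sum_i \lambda_i = 1$ yields $\sum_{i=1}^k \lambda_i f(x,y_i) \le c$ for every $x\in X$.

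Finally, convexlikeness in $Y$ applied inductively collapses the weighted tuple $((\lambda_1,y_1),\dots,(\lambda_k,y_k))$ into a single $y^{\ast}\in Y$ with $f(x,y^{\ast}) \le \sum_i \lambda_i f(x,y_i) \le c$ for every $x\in X$. Taking the supremum in $x$ gives $\sup_x f(x,y^{\ast}) \le c < \beta$, contradicting the definition of $\beta$. The main obstacle is the middle step: $Y$ carries no linear or topological structure, so ``convex combinations'' of the $y_i$'s have no direct meaning, and packaging concavelikeness into the convexity of the auxiliary set $A\subset\mathbb{R}^k$ is what allows a finite-dimensional separation to stand in for that missing structure, after which convexlikeness provides the bridge back to an actual element of $Y$.
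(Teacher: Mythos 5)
The paper does not prove this statement: it is imported as a known result (the Kneser--Fan minimax theorem, cited alongside Sion's theorem), so there is no in-paper argument to compare yours against. Judged on its own, your proof is correct and is essentially the classical separation-based argument for this theorem. Each step checks out: upper semicontinuity makes the sets $C_y=\{x: f(x,y)\ge c\}$ closed, and $\alpha<c$ gives $\bigcap_y C_y=\emptyset$, so compactness of $X$ yields the finite subfamily $y_1,\dots,y_k$; concavelikeness in $X$ is exactly what makes the auxiliary set $A\subset\mathbb{R}^k$ convex (a single $x$ dominates the $t$-combination at \emph{all} $y$ simultaneously, hence at each $y_i$); downward closure forces $\lambda\ge 0$ in the separation, and normalization gives $\sum_i\lambda_i f(x,y_i)\le c$ for all $x$; and the two-point convexlikeness in $Y$ does extend by the standard induction (splitting off $\lambda_k$ and rescaling) to produce a single $y^\ast$ with $f(x,y^\ast)\le\sum_i\lambda_i f(x,y_i)$ for all $x$, yielding the contradiction with $\beta$. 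The one place worth being slightly more explicit in a written version is the induction for the $k$-fold convexlike combination (including the degenerate case $\lambda_k=1$) and the observation that $\inf_{w\in(c,\infty)^k}\lambda\cdot w=c$ once $\lambda\ge0$ and $\sum_i\lambda_i=1$; neither is a gap, just detail to spell out. Note also that your argument uses no structure on $Y$ whatsoever, consistent with the theorem's hypothesis that $Y$ is an arbitrary set.
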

\section{Colonel Blotto with Discrete Soldiers}
In the \textit{discrete two-level Blotto} game, players allocate a finite number of indivisible soldiers across battlefields and randomize over both soldier assignments and subgame strategies. Let $\discsoldiers^j=(\discsoldiers^j_1,\cdots,\discsoldiers^j_n)\in \mathcal K^j$ be player $j$’s discrete allocation vector. Then the payoff in battlefield $i$ is given by $u_i(\alpha^{1}_i, \alpha^{2}_i, \discsoldiers_i^{1}, \discsoldiers_i^{2})$.
In our two-level formulation, each player 
$j$ adopts a two-level strategy composed of 1) a probability distribution $\gamma^j$ over soldier assignments, and 2) for each battlefield $i$ and each soldier count $k_i^j$, a mixed strategy $\delta^j_{i,k_i^j}$ over actions $\alpha_i^j\in \mathcal A_i^j$ in subgame $G_i$. The action set in each battlefield is independent of the number of soldiers allocated to that battlefield (this may be relaxed easily). Since the number of possible assignments is finite, each subgame can be viewed as a Bayesian game in which player $j$ has types $\{0,\dots,m^{j}\}$, each occurring with probability equal to the chance that the player assigns that number of soldiers to the battlefield under $\gamma^j$.  

\subsection{Sum Aggregator}
\label{sec:sum-agg}
The overall utility under the sum aggregator is given by 
\begin{multline*}
U(\delta,\gamma)
= \sum_{i\in[n]}\sum_{k^1\in\mathcal K^1}\sum_{k^2\in \mathcal K^2}
  \gamma^1(k^1)\,\gamma^2(k^2)\\
  \times
  \sum_{\alpha_i^1\in \mathcal A_i^1}\sum_{\alpha_i^2\in \mathcal A_i^2}
  \delta^1_{i,k_i^1}(\alpha_i^1)\,\delta^2_{i,k_i^2}(\alpha_i^2)\,
  u_i(\alpha_i^1,\alpha_i^2,k_i^1,k_i^2)\,.
\end{multline*}
In this setting, we can formulate the two-level Blotto game as a linear program. For each battlefield $i$, and given a fixed distribution over $k^{1}_i$ and $k^{2}_i$, the optimal subgame strategy corresponds to solving a Bayesian extensive-form game where nature randomizes over the $(m^{1}+1)\cdot (m^{2}+1)$ possible types. We model this randomization in two sequential phases -- first determining Player 1’s probabilities, then Player 2’s -- thereby inducing a product distribution over the joint types. 
By translating the type assignment into two stages, we can assign the first phase to Player 1 and the second phase to Player 2 while retaining perfect recall. 

More precisely, for player $j\in\{1,2\}$, we define a flow polytope $\Gamma^j$ with two types of variables. First, the variable $x^{j}_{i,k_i^j,\alpha_i^j}$ denotes the probability that player $j$ allocates $k_i^j$ soldiers to battlefield $i$ and plays action $\alpha_i^j$ in subgame $G_i$. Second, the flow variable $h^{j}_{i,a,b}$ denotes the flow through battlefield $i$ for player $j$: it represents the transition from $a$ remaining soldiers to $b$ remaining soldiers after allocating $a-b$ soldiers to battlefield $i$. The polytope $\Gamma^j$ is defined by the following constraints:
\begin{flalign*}\label{eq:Gamma-def}
\Gamma^j = \biggl\{&h^{j}_{i,a,b},
x^{j}_{i,k_i^j,\alpha_i^j} \in [0,1]
\text{ such that :} && \\
&h^j_{0,m^j,m^j}=1, \quad h^j_{n-1,0,0}=1, \\
&h^j_{0,a,b}=0 \,\, \forall a,b\in[0,m^j],\, (a,b)\ne(m^j,m^j), \\
&\sum_{a=c}^{m^j}h^j_{i-1,a,c}
        = \sum_{b=0}^{c}h^j_{i,c,b}
        \,\, \forall\,i\in[1,n-1],\,c\in[0,m^j],\\
&\sum_{\alpha_i^j\in\mathcal A^j_i}x^j_{i,k_i^j,\alpha_i^j}
        = \sum_{k_i^j \leq r \leq m^j} h^{j}_{i, r, r -k_i^{j}} 
        \,\, \forall i\in [1,n-1],\\ &k_i^j\in [0,m^{j}]\biggr\}.
\end{flalign*}
By encoding strategies with a flow polytope, we obtain a polynomial‑size representation of strategies that is payoff equivalent to the original strategy space $\Delta(\mathcal K^j)\times \prod_{i\in [n]}\Delta(\mathcal A_i^j)$. This flow polytope is a generalization of the layered graph approach of \citet{behnezhad2023fast}, with the addition of subgames.

\begin{restatable}[Kuhn's theorem under two-sided sum aggregator]{theorem}{kuhntwosidedsumdisc}
\label{thm:kuhn-2sided-sum-disc}
    Consider a discrete two-level Blotto game with sum aggregator, and suppose $\gamma^j\in\Delta(\mathcal K^j)$, $\delta_i^j \in \Delta(\mathcal A_i^j)$, $j\in\{1,2\}$. Then the overall payoff is bilinear in the strategies $x^{1}$ and $x^{2}$. Specifically, 
    \(
    U = \sum_{i} \sum_{\alpha_i^{1}} \sum_{\alpha_i^{2}}
    \sum_{k_i^{1}}
    \sum_{k_i^{2}}
    u_i (\alpha_i^{1}, \alpha_i^{2}, k^{1}_i, k^{2}_i) \cdot x^{1}_{i,k_i^{1},\alpha_i^{1}}
    \cdot x^{2}_{i,k_i^{2},\alpha_i^{2}}.
    \)
\end{restatable}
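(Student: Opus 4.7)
The plan is to exploit two structural facts: (i) under the sum aggregator, each battlefield's contribution depends only on the \emph{marginal} distribution of allocations to that battlefield, not on the full joint over allocation vectors; and (ii) the flow polytope variables $x^{j}_{i,k_i^j,\alpha_i^j}$ are exactly the sequence-form realization probabilities encoding the combined event ``player $j$ allocates $k_i^j$ soldiers to battlefield $i$ and then plays $\alpha_i^j$ in subgame $G_i$.'' Combining the two gives the claimed bilinear form directly.

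First, I would rewrite $U(\delta,\gamma)$ by pushing the sum over battlefields outside and, for each fixed $i$, collapsing the inner sums $\sum_{k^1\in\mathcal K^1}\sum_{k^2\in \mathcal K^2}\gamma^1(k^1)\gamma^2(k^2)$ onto the marginals $\gamma^{j}_i(k_i^j) := \sum_{k^j \in \mathcal K^j : k^j_i = k_i^j}\gamma^j(k^j)$. This is valid because $u_i(\alpha_i^1,\alpha_i^2,k_i^1,k_i^2)$ depends on $k^1,k^2$ only through their $i$-th coordinates; no such simplification would be possible under the min aggregator. This yields
\[
U = \sum_{i\in[n]} \sum_{k_i^1,k_i^2} \sum_{\alpha_i^1,\alpha_i^2} \gamma^{1}_i(k_i^1)\gamma^{2}_i(k_i^2)\,\delta^1_{i,k_i^1}(\alpha_i^1)\,\delta^2_{i,k_i^2}(\alpha_i^2)\, u_i(\alpha_i^1,\alpha_i^2,k_i^1,k_i^2).
\]

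Next, I would establish the identification $x^{j}_{i,k_i^j,\alpha_i^j} = \gamma^{j}_i(k_i^j)\,\delta^{j}_{i,k_i^j}(\alpha_i^j)$. For this, interpret the flow variables $h^{j}_{i,a,b}$ as the probability that player $j$, having $a$ soldiers remaining before battlefield $i$, commits $a-b$ of them there. The flow conservation constraints together with the boundary conditions ($h^j_{0,m^j,m^j}=1$ and $h^j_{n-1,0,0}=1$) ensure that these flows induce a valid joint distribution over allocation vectors, and the sum $\sum_{k_i^j \le r \le m^j} h^{j}_{i,r,r-k_i^j}$ is precisely the marginal probability $\gamma^j_i(k_i^j)$. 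The coupling constraint $\sum_{\alpha_i^j} x^{j}_{i,k_i^j,\alpha_i^j} = \sum_r h^{j}_{i,r,r-k_i^j}$ then forces the $\alpha$-marginal of $x^{j}_{i,k_i^j,\cdot}$ to coincide with $\gamma^{j}_i(k_i^j)$, while allowing the conditional distribution over $\alpha_i^j$ to be an arbitrary $\delta^{j}_{i,k_i^j}$. Independence of the two players' randomizations factors the joint distribution over $(k_i^1,k_i^2,\alpha_i^1,\alpha_i^2)$ as the product of per-player terms, so substituting $x^{1} x^{2}$ into the expression above yields the claimed bilinear form.

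The main obstacle is the second step: carefully verifying that the flow polytope $\Gamma^j$, which is a compact sequence-form-style representation, indeed realizes the same set of marginal-action distributions as $\Delta(\mathcal K^j)\times\prod_i \Delta(\mathcal A_i^j)$. This is the ``Kuhn's theorem'' content of the statement and relies on the perfect-recall structure of the induced tree, in which a player's own past soldier commitments are encoded in the residual budget node $r$. The sum aggregator is essential here because it makes battlefields payoff-separable, so only per-battlefield marginals (not higher-order correlations across battlefields) need to be preserved by $\Gamma^j$ for payoff equivalence; this is exactly what the flow conservation constraints guarantee.
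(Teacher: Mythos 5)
Your proposal is correct and follows essentially the same route as the paper's proof: regroup the sums over full allocation vectors into per-battlefield marginals (valid because the sum aggregator makes $u_i$ depend on $k^j$ only through $k^j_i$), and identify $x^{j}_{i,k_i^j,\alpha_i^j}$ with the product of the marginal allocation probability and the conditional subgame mixture. You are somewhat more explicit than the paper about verifying that the flow polytope realizes exactly these marginal-action distributions, but the underlying argument is the same.
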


\Cref{thm:kuhn-2sided-sum-disc} implies that rather than optimizing over $\Delta(\mathcal K^j)\times \prod_{i\in [n]}\Delta(\mathcal A_i^j)$, we can optimize over a flow polytope $\Gamma^j$. Computing a NE may be expressed as a bilinear saddle point problem $\max_{x^2\in\Gamma^2} \min_{x^1\in \Gamma^1} U$.
Since $\Gamma^1$ and $\Gamma^2$ are compact and convex sets, Sion's minimax theorem holds. Then, by~\Cref{prop:nash_minimax}, we have:
\begin{corollary}\label{cor:two-sided-sum-disc}
     Under the sum-aggregator, the discrete two-level Blotto game admits a NE.
\end{corollary}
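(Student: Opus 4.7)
The plan is to chain together the three ingredients that have already been developed: the flow-polytope reformulation, Sion's minimax theorem, and Proposition~\ref{prop:nash_minimax}. Concretely, I would first invoke Theorem~\ref{thm:kuhn-2sided-sum-disc} to replace the original behavioral strategy space $\Delta(\mathcal K^j)\times\prod_{i\in[n]}\Delta(\mathcal A_i^j)$ with the flow polytope $\Gamma^j$, noting that the theorem's payoff formula exhibits $U$ as a bilinear form in $(x^1,x^2)$ and establishes payoff equivalence between the two representations.

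Next I would verify the hypotheses of Sion's minimax theorem for the reformulated game. The polytopes $\Gamma^1,\Gamma^2$ are cut out by finitely many linear (equality and box) constraints in a finite-dimensional Euclidean space; they are therefore nonempty, convex, and compact subsets of a linear topological space. Because $U$ is bilinear, for any fixed $x^2\in\Gamma^2$ the map $x^1\mapsto U(x^1,x^2)$ is linear, hence continuous and both quasiconcave and quasiconvex; the symmetric statement holds for $x^2$. Sion's theorem then immediately yields
\[
\max_{x^2\in\Gamma^2}\min_{x^1\in\Gamma^1} U(x^1,x^2)=\min_{x^1\in\Gamma^1}\max_{x^2\in\Gamma^2} U(x^1,x^2).
\]
A direct application of Proposition~\ref{prop:nash_minimax} to the game on $\Gamma^1\times\Gamma^2$ with payoff $U$ converts this minimax equality into the existence of a Nash equilibrium $(x^{1*},x^{2*})\in\Gamma^1\times\Gamma^2$.

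Finally, I would translate this equilibrium back to the original two-level strategy representation. Because $\Gamma^j$ was constructed to be payoff equivalent to $\Delta(\mathcal K^j)\times\prod_{i\in[n]}\Delta(\mathcal A_i^j)$, any $x^{j*}\in\Gamma^j$ can be decomposed into a distribution $\gamma^{j*}$ over allocations together with per-battlefield conditional mixed strategies $\delta^{j*}_{i,k_i^j}$, and the expected payoff under this decomposition matches $U(x^{1*},x^{2*})$. Combining this decomposition with the best-response inequalities given by the Nash equilibrium on the flow polytopes yields a NE in the original game.

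The main obstacle in this plan is arguably the payoff-equivalence step, i.e., verifying that strategies in $\Gamma^j$ genuinely correspond to behavioral strategies of the two-level game (and not to some larger class of correlated objects). Fortunately this is exactly what Theorem~\ref{thm:kuhn-2sided-sum-disc} packages up, so the remainder of the corollary is a routine verification of Sion's hypotheses together with an invocation of Proposition~\ref{prop:nash_minimax}.
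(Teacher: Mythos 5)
Your proposal is correct and follows the same route as the paper: reformulate via Theorem~\ref{thm:kuhn-2sided-sum-disc} as a bilinear saddle-point problem over the compact convex polytopes $\Gamma^1,\Gamma^2$, apply Sion's minimax theorem, and convert the minimax equality into NE existence via Proposition~\ref{prop:nash_minimax}. The extra care you take in verifying Sion's hypotheses and translating the equilibrium back to the behavioral representation is a faithful elaboration of exactly the argument the paper sketches.
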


We propose two methods to solve the bilinear saddle point problem. The first uses a common trick of taking the dual of the inner optimization problem, converting the min-max problem into a single minimization problem that can be solved by a linear program \cite{behnezhad2023fast}. For brevity, details of the LP are deferred to Appendix~\ref{dual:two-sided-disc-sum}. The second method is based on online learning and self-play. In the classic setting of zero-sum matrix games, each player's strategy space is the probability simplex, and it is known that the recommendations given by the online learners converge on average to a NE \citep{hart2000simple,roughgarden2010algorithmic}. This is readily adapted to imperfect information extensive-form games by defining regret minimizers over the \textit{treeplex} 
 \citep{zinkevich2007regret}. In our setting, the regret minimizers are over the flow polytopes $\Gamma^1$ and $\Gamma^2$, which may be done using scaled extensions \citep{farina2019efficient} or kernel-based approaches \citep{farina2022kernelized,takimoto2003path}. For brevity, we defer details to Appendix~\ref{sec:online-learning-appendix-discrete-sum-agg}, but remark that the regret minimizers require $\mathcal{O}(n\cdot \max_{i,j} \{ m^j (m^j + |\mathcal{A}^j_i|)\})$ space (i.e., \textit{linear} in the size of the flow polytope), $\mathcal{O}(n\cdot \max_{i,j} \{ m^j (m^j + |\mathcal{A}^j_i|)\} + \max_{i} \{ m^1_i m^2_i |\mathcal{A}^1_i| |\mathcal{A}^2_i| \} )$ time-per-iteration  and incurs total regret at a rate of $\mathcal{O}(\sqrt{T})$, depending on the implementation details. 
 Our methods readily extend to settings where each subgame is a perfect recall extensive-form game. This is done by reformulating the strategy space in each $G_i$ by the treeplex \citep{von1996efficient}. 
 Our results here extend trivially to the one‑sided case. %

\begin{restatable}{proposition}{comptwosidedsumdisc} \label{prop:comp-2sided-sum-disc}
Under the sum aggregator, a NE of the discrete two-level Blotto game is polynomial-time computable.
\end{restatable}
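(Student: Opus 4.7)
The plan is a direct reduction to a polynomially sized linear program whose optimal primal-dual pair yields a NE. By Theorem~\ref{thm:kuhn-2sided-sum-disc}, the payoff is bilinear in the flow variables $x^1 \in \Gamma^1$ and $x^2 \in \Gamma^2$, so NE computation reduces to the bilinear saddle-point problem $\max_{x^2 \in \Gamma^2} \min_{x^1 \in \Gamma^1} (x^1)^\top A\, x^2$, where $A$ is the sparse matrix whose nonzero entries are the payoffs $u_i(\alpha^1_i,\alpha^2_i,k^1_i,k^2_i)$. Corollary~\ref{cor:two-sided-sum-disc} already guarantees the saddle value exists.

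I would next convert this saddle-point problem into a single LP by dualizing the inner minimization. For fixed $x^2 \in \Gamma^2$, the inner problem $\min_{x^1 \in \Gamma^1}(x^1)^\top A\, x^2$ is a linear program in $x^1$ with cost vector $A x^2$. Its LP dual can be folded back into the outer maximization over $x^2$, producing one LP whose primal variables are $x^2$ together with the dual multipliers of the flow and marginalization constraints defining $\Gamma^1$; the constraints of $\Gamma^2$ appear on the primal side and the standard dual feasibility inequalities (one per primal variable of $\Gamma^1$) appear on the dual side. This is the standard Dantzig-style reformulation, and the explicit LP is given in Appendix~\ref{dual:two-sided-disc-sum}, paralleling the construction of \citet{behnezhad2023fast}.

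Finally, I would count sizes. Each polytope $\Gamma^j$ has $\mathcal{O}(n\, m^j(m^j + \max_i|\mathcal{A}^j_i|))$ variables and comparably many constraints, and the payoff tensor has $\mathcal{O}(\sum_i m^1 m^2 |\mathcal{A}^1_i||\mathcal{A}^2_i|)$ nonzero entries; dualization at most doubles these counts. Hence the combined LP has size polynomial in $n, m^1, m^2$, and $\sum_i(|\mathcal{A}^1_i|+|\mathcal{A}^2_i|)$, and can be solved in polynomial time by any weakly polynomial LP algorithm (ellipsoid or interior point), simultaneously recovering $(x^1,x^2)$ and the game value.

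There is no deep obstacle: the conceptual work -- showing that $\Gamma^1,\Gamma^2$ faithfully represent the two-level mixed strategies and that the payoff stays bilinear in the flow variables -- is already delivered by Theorem~\ref{thm:kuhn-2sided-sum-disc}, and strong LP duality takes care of the saddle value crossing dualization. The only care needed is bookkeeping: matching the duals with the specific flow and marginalization constraints of $\Gamma^1$ so the final LP is explicit and of the claimed polynomial size, which is what the appendix formulation handles.
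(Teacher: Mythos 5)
Your proposal is correct and follows essentially the same route as the paper: reduce to a bilinear saddle point over the flow polytopes via Theorem~\ref{thm:kuhn-2sided-sum-disc}, dualize the inner minimization to obtain the single LP of Appendix~\ref{dual:two-sided-disc-sum}, and conclude by counting that $\Gamma^1,\Gamma^2$ have polynomially many variables and constraints. The paper's own proof text consists only of that final counting step (the dualization being handled in the main text and appendix), so your write-up is just a slightly more explicit version of the same argument.
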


\begin{remark}
    Note that all results extend directly to the one-sided setting, in which the minimizing player does not participate in the soldier allocation.
\end{remark}

\subsection{Min Aggregator}\label{sect:disc-min}
When the overall utility is defined as the minimum across all battlefields, the aggregate utility is given by
\begin{multline*}
    U(\delta,\gamma) = \min_{i\in[n]}\sum_{k^1\in\mathcal K^1}\sum_{k^2\in\mathcal K^2} \gamma^1(k^1)  \gamma^2(k^2) \\
    \times \biggl(\sum_{\alpha_i^1\in \mathcal A_i^1} \sum_{\alpha_i^2\in \mathcal A_i^2} \delta_{i,k_i^1}^1(\alpha_i^1)\delta^2_{i,k_i^2}(\alpha_i^2) u_i(\alpha_i^1,\alpha_i^2,k_i^1,k_i^2) \biggr).
\end{multline*}
Unlike with the sum aggregator, when both players allocate soldiers to battlefields, the two-level Blotto game may not admit a NE even when battlefield utilities are continuous.

\begin{restatable}{theorem}{existtwosidedmindisc}\label{thm:non-exist-2sided-min-disc} Consider a two-sided discrete two-level Blotto game under the min aggregator. Even if all battlefield utilities are continuous, a NE may not exist. Nevertheless, both the max-min and min-max values are well‐defined and are attained.
\end{restatable}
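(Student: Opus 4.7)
The plan is to split the statement into (i) an explicit counterexample for non-existence of a NE and (ii) a compactness-continuity argument for the attainment of the max-min and min-max values.

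For (i), I would construct a minimal instance with a strict duality gap and invoke \Cref{prop:nash_minimax}. Take $n=2$, $m^1=m^2=1$, trivial singleton subgames, and battlefield payoffs $u_i(k_i^1,k_i^2)=k_i^2-k_i^1$, which are continuous in the soldier counts. Parametrize mixed allocations by $p,q\in[0,1]$, the respective probabilities that Players 1 and 2 place their lone soldier on battlefield $1$. Direct computation yields expected per-battlefield utilities $q-p$ and $p-q$, so $U(p,q)=\min(q-p,\,p-q)=-|p-q|$. Then $\min_p\max_q U=0$ (attained when Player 2 matches $p$), whereas $\max_q\min_p U=-\tfrac{1}{2}$ (attained at $q=\tfrac{1}{2}$ with Player 1 choosing an extreme allocation). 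The strict gap rules out a NE via \Cref{prop:nash_minimax}.

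For (ii), I would leverage the flow-polytope representation from \Cref{thm:kuhn-2sided-sum-disc}. Kuhn's theorem transfers verbatim to the min aggregator because bilinearity is a per-battlefield fact: the expected payoff in battlefield $i$ is $f_i(x^1,x^2)=\sum_{k_i^1,k_i^2,\alpha_i^1,\alpha_i^2} u_i(\alpha_i^1,\alpha_i^2,k_i^1,k_i^2)\,x^1_{i,k_i^1,\alpha_i^1}\,x^2_{i,k_i^2,\alpha_i^2}$, bilinear and hence continuous in the sequence-form variables. Since $\Gamma^1$ and $\Gamma^2$ are compact convex polytopes, $U(x^1,x^2)=\min_{i\in[n]} f_i(x^1,x^2)$ is continuous as a minimum of finitely many continuous functions. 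The parametric value functions $V_1(x^1):=\max_{x^2\in\Gamma^2} U(x^1,x^2)$ and $V_2(x^2):=\min_{x^1\in\Gamma^1} U(x^1,x^2)$ are then continuous in their respective arguments (a standard compactness fact, e.g., Berge's maximum theorem specialized to a constant constraint correspondence), so $\min_{x^1\in\Gamma^1} V_1$ and $\max_{x^2\in\Gamma^2} V_2$ are attained on compact domains.

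The main obstacle is making the counterexample unambiguously rule out the easy escape routes—it must use continuous per-battlefield utilities so that the failure cannot be attributed to discontinuity. The structural reason the min aggregator breaks equilibrium existence is that $\min_i f_i$ is concave in each player's flow-polytope variable separately, which removes the convex-concave structure that Sion's theorem needs on the minimizing side, and unlike the sum-aggregator case no reformulation restores it. The attainment claim is then routine once we inherit the compact flow polytope and continuity of $U$.
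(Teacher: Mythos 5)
Your proposal is correct and follows the same overall strategy as the paper's proof: exhibit an instance with continuous battlefield utilities whose max-min and min-max values differ, then conclude non-existence of a NE from \Cref{prop:nash_minimax}. The differences are worth recording. Your counterexample ($n=2$, one soldier per player, trivial subgames, $u_i=k_i^2-k_i^1$, so that $U(p,q)=-|p-q|$ with min-max value $0$ and max-min value $-\tfrac12$) is simpler and more transparent than the paper's, which gives each player two soldiers, uses $u_i=(k_i^1+1)/(k_i^2+1)$, and must optimize piecewise-minimum expressions over a two-dimensional simplex to arrive at the values $0.8$ and $\tfrac23$; I verified your computations and the gap is genuine. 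Conversely, your part (ii) is more complete than the paper's: the paper only computes the two values for its specific example and never argues attainment in general, whereas you correctly note that the per-battlefield regrouping underlying \Cref{thm:kuhn-2sided-sum-disc} is agnostic to the aggregator, so $U=\min_{i} f_i$ with each $f_i$ bilinear on the compact polytopes $\Gamma^1\times\Gamma^2$, and attainment of both values follows from continuity of the parametric value functions on compact domains (Berge is more than you need here; uniform continuity of $U$ on the compact product already gives continuity of $x^1\mapsto\max_{x^2}U(x^1,x^2)$). Your closing structural remark --- that $\min_i f_i$ is concave in the minimizer's flow variable, destroying the quasiconvexity Sion requires on that side, and that the two-sided setting admits no reformulation absorbing the min into the minimizer's strategy (unlike the one-sided case) --- is accurate and consistent with the paper's treatment.
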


This motivates our study of the one-sided model, in which only the maximizing player allocates soldiers to battlefields, then both players engage in independent zero-sum subgames. This variant captures common security scenarios where for example, only the defender allocates resources, and guarantees equilibrium existence while admitting efficient computation.
In this setting, the total utility is 
\begin{multline*}
    U(\delta,\gamma)= \min_{i\in[n]}\sum_{k^2\in\mathcal K^2} \gamma^2(k^2) \\
    \times \biggl(\sum_{\alpha_i^1\in \mathcal A_i^1} \sum_{\alpha_i^2\in \mathcal A_i^2}\delta_i^1(\alpha_i^1)\delta^2_{i,k_i^2}(\alpha_i^2) u_i(\alpha_i^1,\alpha_i^2,k_i^2) \biggr).
\end{multline*}
This formulation admits a tractable NE.  
\begin{restatable}[Kuhn's theorem under one-sided min aggregator]{theorem}{kuhnonesidedmindisc}\label{thm:kuhn-one-sided-disc-min}
    Consider a one-sided discrete two-level Blotto game with min aggregator, and
    suppose $\gamma^2\in\Delta(\mathcal K^2)$, $\delta_i^j\in \Delta(\mathcal A_i^j),j\in\{1,2\}$.
    Then the overall payoff is convex and bilinear in $p$ and $x^{2}$, where $p\in \Delta(n)$ is a convexification variable and $x^2$ is the treeplex strategy for Player 2. Specifically, 
    \( 
    U =
    \min_{p} \sum_{i} \sum_{\alpha_i^2} \sum_{k_i^2} p_i \, \mathbb E_{\alpha_i^1 \sim \delta^1_i} [u_i(\alpha_i^1,\alpha_i^2,k_i^2)] x^2_{i,k_i^2,\alpha_i^2}.
    \)
\end{restatable}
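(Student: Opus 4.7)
The plan is to linearize the outer minimum over battlefields via the convexification identity $\min_i V_i = \min_{p \in \Delta(n)} \sum_i p_i V_i$, and to represent Player 2's behavioral strategy $(\gamma^2,\{\delta^2_{i,k_i^2}\})$ through the flow polytope $\Gamma^2$ already constructed in \Cref{sec:sum-agg}. Combining these reformulations turns each per-battlefield expected utility into a linear functional of $x^2$, and the overall payoff into a bilinear expression in $(p, x^2)$ over the convex domains $\Delta(n)$ and $\Gamma^2$, as required.

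First, I would expand the per-battlefield expected utility $V_i := \sum_{k^2} \gamma^2(k^2) \sum_{\alpha_i^1,\alpha_i^2} \delta^1_i(\alpha_i^1) \delta^2_{i,k_i^2}(\alpha_i^2)\, u_i(\alpha_i^1,\alpha_i^2,k_i^2)$ and exploit the fact that only the coordinate $k_i^2$ of $k^2$ appears in the summand. Marginalizing yields
\[
V_i = \sum_{k_i^2} \gamma^2_i(k_i^2) \sum_{\alpha_i^2} \delta^2_{i,k_i^2}(\alpha_i^2)\, \mathbb E_{\alpha_i^1 \sim \delta^1_i}[u_i(\alpha_i^1,\alpha_i^2,k_i^2)],
\]
where $\gamma^2_i(k_i^2) := \sum_{k^2 : k^2_i = k_i^2} \gamma^2(k^2)$. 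Second, I would appeal to the Kuhn-style realization of Player 2's perfect-recall extensive form (choose an allocation, then act per battlefield): the flow constraints in $\Gamma^2$ force $\sum_{\alpha_i^2} x^2_{i,k_i^2,\alpha_i^2} = \sum_{r \geq k_i^2} h^2_{i,r,r-k_i^2} = \gamma^2_i(k_i^2)$, and combined with the conditional interpretation of $\delta^2_{i,k_i^2}$, this gives $x^2_{i,k_i^2,\alpha_i^2} = \gamma^2_i(k_i^2)\, \delta^2_{i,k_i^2}(\alpha_i^2)$. Substituting collapses $V_i$ into $\sum_{k_i^2,\alpha_i^2} \mathbb E_{\alpha_i^1}[u_i(\alpha_i^1,\alpha_i^2,k_i^2)] \cdot x^2_{i,k_i^2,\alpha_i^2}$, which is linear in $x^2$. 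Plugging into $U = \min_{p \in \Delta(n)} \sum_i p_i V_i$ then yields the claimed form; since $p$ and $x^2$ both live in convex sets and the expression is linear in each separately, the induced saddle-point problem is convex-concave.

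The main obstacle, and really the only substantive step, is justifying the flow-polytope realization of $(\gamma^2,\{\delta^2_{i,k_i^2}\})$; this is the one-sided analogue of \Cref{thm:kuhn-2sided-sum-disc}, and I would appeal directly to the same layered-graph construction generalizing \citet{behnezhad2023fast}, since nothing in that argument depends on the aggregator or on Player 1 also allocating. The remaining ingredients—marginalization of $\gamma^2$ onto battlefield $i$ and the standard linearization of $\min_i$ via a simplex variable—are elementary.
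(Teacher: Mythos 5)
Your proposal is correct and follows essentially the same route as the paper's proof: convexify the outer minimum via a simplex variable $p$, marginalize $\gamma^2$ onto each battlefield by regrouping allocations sharing the same $k_i^2$, and absorb the product $\gamma^2_i(k_i^2)\,\delta^2_{i,k_i^2}(\alpha_i^2)$ into the flow-polytope variable $x^2_{i,k_i^2,\alpha_i^2}$ exactly as in the two-sided sum case. The extra care you take in justifying the realization identity from the $\Gamma^2$ constraints is consistent with (and slightly more explicit than) what the paper does.
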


Using the bilinear formulation from~\Cref{thm:kuhn-one-sided-disc-min} we can now express the min-max problem as 
\(
\min_{\delta^{1}}\max_{x^2}\min_{p}  \sum_{i}\sum_{\alpha_i^2}\sum_{k_i^2}p_i x^2_{i,k_i^2,\alpha_i^2} \mathbb E[u_i(\alpha_i^{1},\alpha_i^2,k_i^2)].
\)
Exploiting the bilinear structure of the payoff, we invoke Sion’s minimax theorem to interchange the inner maximization and minimization operators. By then combining the two minimizations, we define a sequence-form strategy over a two-level game for Player 1, in which he first chooses a battlefield and then plays an action in it, using $y^1(i,\alpha^1_i)$ to denote the product $p_i\cdot\delta^1_i(\alpha^1_i).$ Under this reformulation, we recover the game's max-min structure, which yields the following result.

\begin{restatable}{theorem}{siononesidedmindisc}\label{thm:sion-one-sided-disc-min}
    Consider a one-sided discrete two-level Blotto game with min aggregator. Then the minimax theorem holds. Specifically, \(
\min_{\delta^{1}}\max_{\delta^{2},\gamma^2}\min_{i} \mathbb E[u_i(\alpha_i^{1},\alpha^{2}_i,k_i^2)] =  \max_{\delta^{2},\gamma^2}\min_{\delta^{1}}\min_{i} \mathbb E[u_i(\alpha_i^{1},\alpha_i^{2},k_i^2)]
    \).
\end{restatable}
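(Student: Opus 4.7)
The plan is to chain two applications of Sion's minimax theorem around a change of variables that fuses two outer minimizations into one, as sketched in the paragraph preceding the theorem.

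Starting from the LHS, I would invoke Theorem~\ref{thm:kuhn-one-sided-disc-min} to rewrite $\min_i \mathbb{E}[u_i]$ as $\min_{p \in \Delta(n)} \sum_i p_i \mathbb{E}[u_i]$, and to encode $(\delta^2,\gamma^2)$ via the treeplex variable $x^2 \in \Gamma^2$. The LHS becomes
\[
\min_{\delta^1}\max_{x^2 \in \Gamma^2}\min_{p \in \Delta(n)} F(\delta^1,p,x^2),
\]
where $F(\delta^1,p,x^2) = \sum_{i,\alpha^2_i,k^2_i} p_i\, x^2_{i,k^2_i,\alpha^2_i}\, \mathbb{E}_{\alpha^1_i \sim \delta^1_i}[u_i(\alpha^1_i,\alpha^2_i,k^2_i)]$. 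For each fixed $\delta^1$, $F$ is bilinear in $p$ and $x^2$ on the compact convex sets $\Delta(n)$ and $\Gamma^2$, so Sion's theorem permits swapping $\max_{x^2}$ with $\min_p$, yielding $\min_{\delta^1}\min_p\max_{x^2} F$.

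Next I would introduce the sequence-form variable $y^1(i,\alpha^1_i) := p_i\,\delta^1_i(\alpha^1_i)$. As $(p,\delta^1)$ ranges over $\Delta(n) \times \prod_i \Delta(\mathcal{A}^1_i)$, $y^1$ ranges surjectively onto the simplex $Y^1 := \Delta(\bigcup_i \{i\}\times\mathcal{A}^1_i)$: given $y^1 \in Y^1$, set $p_i = \sum_{\alpha^1_i} y^1(i,\alpha^1_i)$ and, when $p_i > 0$, $\delta^1_i(\alpha^1_i) = y^1(i,\alpha^1_i)/p_i$ (arbitrary otherwise, value-irrelevant). The joint minimization $\min_{\delta^1}\min_p$ collapses to $\min_{y^1 \in Y^1}$, and the objective rewrites as the bilinear form
\[
G(y^1,x^2) = \sum_{i,\alpha^1_i,\alpha^2_i,k^2_i} y^1(i,\alpha^1_i)\, x^2_{i,k^2_i,\alpha^2_i}\, u_i(\alpha^1_i,\alpha^2_i,k^2_i)
\]
on the compact convex product $Y^1 \times \Gamma^2$. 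A second application of Sion's theorem gives $\min_{y^1}\max_{x^2} G = \max_{x^2}\min_{y^1} G$. Finally, unpacking $y^1$ back into $(p,\delta^1)$ and applying Theorem~\ref{thm:kuhn-one-sided-disc-min} in reverse, i.e.\ $\min_{p \in \Delta(n)} \sum_i p_i (\cdot)_i = \min_i (\cdot)_i$, yields $\max_{\delta^2,\gamma^2}\min_{\delta^1}\min_i \mathbb{E}[u_i]$, which is the RHS.

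The main obstacle I expect is justifying the change of variables $y^1 = p\cdot\delta^1$: the map $(p,\delta^1)\mapsto y^1$ is non-injective (at $p_i = 0$ any $\delta^1_i$ is consistent), and the original domain is a product of simplices rather than the single simplex $Y^1$. The resolution is that this map is surjective onto $Y^1$ and that $F$ depends on $(p,\delta^1)$ only through the products $p_i\delta^1_i(\alpha^1_i)$, so the ambiguity at $p_i = 0$ does not affect the optimum. Once this is in place, both invocations of Sion's theorem are routine since the payoffs are bilinear on compact convex domains.
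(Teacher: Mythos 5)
Your proposal is correct and follows essentially the same route as the paper's proof: convexify the inner minimum via $p\in\Delta(n)$, apply Sion's theorem to the bilinear inner $\max_{x^2}\min_p$ pair, merge $(p,\delta^1)$ into the sequence-form variable $y^1$, apply Sion's theorem once more to the resulting bilinear form over $Y^1\times\Gamma^2$, and then disaggregate. Your explicit justification of the surjectivity and value-irrelevance of the non-injective change of variables $y^1 = p\cdot\delta^1$ is a welcome detail that the paper leaves implicit, but it does not constitute a different approach.
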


The flow polytope representation of Player 2's strategy introduces a set of constraints captured by the polytope $\Gamma^2$. Analogously, the sequence-form product representation of Player 1's strategy induces the constraints defined over the following polytope 
\[
\mathcal{P} = \left\{
\begin{array}{ll}
y^1(i) \geq 0 & \\
y^1(i, \alpha_i^1) \geq 0 & \\
\end{array} \middle|\
\begin{array}{ll}
    y^1(\emptyset) = 1, \\ 
    \sum\limits_{i \in [n]} y^1(i) = y^1(\emptyset), \\ 
    \sum\limits_{\alpha_i^1 \in \mathcal{A}_i^1} y^1(i, \alpha_i^1) = y^1(i). \\ 
\end{array}
\right\}
\]
Therefore, a NE can be computed by solving a LP over the polytopes $\Gamma^2$ and $\mathcal P$, obtained by dualizing either player’s best-response program (see \Cref{dual:one-sided-disc-min}). Since both $\Gamma^2$ and $\mathcal P$ have polynomial-size representations, we have:
 
\begin{restatable}{corollary}{componesidedmindisc}\label{cor:comp-one-sided-min-disc}
In the one-sided model, a NE of the two-level Blotto game with min aggregator exists and can be computed in polynomial time.
\end{restatable}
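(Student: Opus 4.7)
The plan is to combine the two one-sided min results immediately above with a standard LP reduction, split into an existence half and a computation half.

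For \emph{existence}, \Cref{thm:sion-one-sided-disc-min} already establishes the minimax theorem for the one-sided discrete two-level Blotto game with min aggregator. Applying \Cref{prop:nash_minimax} to the corresponding pair of strategy spaces (Player 2's $\Delta(\mathcal{K}^2) \times \prod_i \Delta(\mathcal{A}_i^2)$ and Player 1's $\Delta(n) \times \prod_i \Delta(\mathcal{A}_i^1)$) then immediately yields a NE.

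For \emph{computation}, I would use \Cref{thm:kuhn-one-sided-disc-min} to rewrite the min-max problem as the bilinear saddle-point problem $\min_{y^1 \in \mathcal{P}} \max_{x^2 \in \Gamma^2} U(y^1, x^2)$, where $U$ is bilinear in the sequence-form variable $y^1(i,\alpha_i^1) = p_i\, \delta^1_i(\alpha_i^1)$ and Player 2's flow variables $x^2$. Following the same dualization trick used in \Cref{sec:sum-agg}, I would fix $y^1$, write the inner maximization over $\Gamma^2$ as an LP, dualize it, and merge the dual minimization with the outer minimization over $\mathcal{P}$, yielding a single LP (the one displayed in \Cref{dual:one-sided-disc-min}). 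Because both $\Gamma^2$ and $\mathcal{P}$ admit polynomial-size descriptions in $n$, $m^2$, and $\max_i |\mathcal{A}_i^j|$, the resulting LP has polynomial size and is thus solvable in polynomial time, producing a compact representation of a NE.

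The main obstacle is more bookkeeping than conceptual: one must confirm that the compact representations $y^1 \in \mathcal{P}$ and $x^2 \in \Gamma^2$ are payoff-equivalent to the behavioral-strategy representations used in \Cref{thm:sion-one-sided-disc-min}, so that an LP optimum in these variables is genuinely a NE of the original game. For $\Gamma^2$ this equivalence is already recorded via the flow-polytope reformulation from \Cref{sec:sum-agg} (following \citet{behnezhad2023fast}); for $\mathcal{P}$ it follows from the standard sequence-form/behavioral-strategy correspondence under perfect recall, applied to the two-level decision tree that first picks a battlefield $i$ and then an action $\alpha_i^1$. Once these equivalences are in place, the rest is a routine LP-duality argument and a polynomial-size check.
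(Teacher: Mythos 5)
Your proposal is correct and follows essentially the same route as the paper: existence via \Cref{thm:sion-one-sided-disc-min} together with \Cref{prop:nash_minimax}, and computation by dualizing the inner best-response LP over $\Gamma^2$ and $\mathcal{P}$ (the LP in \Cref{dual:one-sided-disc-min}) and noting both polytopes have polynomial-size descriptions. The paper's own proof is simply a terser version of this argument (it counts the variables and constraints of $\mathcal{P}$ explicitly), so no further commentary is needed.
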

\section{Colonel Blotto with Continuous Soldiers}
In the \textit{continuous two-level Blotto game}, we focus on equilibria where players choose deterministic (pure) soldier allocations and may randomize over subgame strategies. 
Unlike the discrete setting where equilibrium existence requires randomization at the resource allocation level, we show that when soldier allocations are continuous, deterministic allocations suffice to guarantee the existence of equilibrium in some cases. 
This yields several practical benefits: such equilibria admit exact representation, are computationally tractable, and, in applications such as security, can be easier to deploy when randomization at the allocation level is impractical or undesirable. 
We view this setting as a natural starting point for analyzing randomized allocation strategies, whose equilibria may lack finite representations and are more difficult to characterize. We leave the analysis of such strategies to future work.

In this setting, players allocate any real-valued fraction of soldiers to each battlefield. Let $\contsoldiers^j=(\contsoldiers^j_1,\cdots,\contsoldiers^j_n)\in\Sigma^j$ be player $j$’s continuous allocation vector. Then the payoff in battlefield $i$ is $u_i(\alpha_i^1,\alpha_i^2,\contsoldiers^1_i,\contsoldiers^2_i)$. 
Considering both the sum and min aggregators, we show that, in the two-sided setting, a partially pure Nash equilibrium (PPNE) -- i.e. one in which players use pure soldier allocation strategies and mixed subgame strategies  -- may fail to exist under either aggregator, even when the utility functions in all battlefields are continuous in the allocation variables.

\begin{restatable}{theorem}{existtwosidedcont}\label{thm:exist-two-sided-cont} 
Consider a two-sided continuous two-level Blotto game under the sum or min aggregator. Even if battlefield utilities are continuous in $\sigma$, a PPNE may not exist.
\end{restatable}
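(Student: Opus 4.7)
The plan is to establish non-existence of a PPNE by constructing explicit counter-examples, one for each aggregator. In both cases I take $n=2$ battlefields with continuous unit budgets $m^1=m^2=1$, and \emph{singleton} subgames (a single action per player on each battlefield) so that $u_i$ reduces to a continuous function of the allocations alone, and mixed subgame strategies become vacuous. Parameterizing the allocations as $\sigma^1=(a,1-a)$ and $\sigma^2=(b,1-b)$ with $a,b\in[0,1]$, the PPNE question collapses to the existence of a pure Nash equilibrium in a two-dimensional continuous game on $[0,1]^2$.

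For the sum aggregator I take $u_i(\sigma^1_i,\sigma^2_i)=(\sigma^2_i-\sigma^1_i)^2$, so that $U(a,b)=2(a-b)^2$, which is manifestly continuous in $\sigma$. A direct computation shows that the minimizer's unique best response is $a=b$ (giving $U=0$), while the maximizer's best response is $b\in\{0,1\}$, chosen to maximize $|a-b|$, and yields $U\geq 1/2$ against any $a$. These requirements are mutually exclusive: at any profile with $a^*=b^*$ the maximizer strictly prefers an extreme $b$, and at any profile with $a^*\neq b^*$ the minimizer strictly prefers to match. For the min aggregator I take $u_i(\sigma^1_i,\sigma^2_i)=\sigma^2_i-\sigma^1_i$, so that $U(a,b)=\min(b-a,\,a-b)=-|a-b|$. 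Now the roles reverse: the maximizer's unique best response is $b=a$ with value $0$, while the minimizer's best response is an extreme $a\in\{0,1\}$, chosen to maximize $|a-b|$, and yields value $-\max(b,1-b)<0$ for any interior $b$ (and value $-1$ for $b\in\{0,1\}$). Again the two best-response conditions cannot be simultaneously satisfied.

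The underlying reason is the same in both constructions: aggregating per-battlefield utilities via sum or min destroys quasi-concavity in own allocation, which is precisely what a Sion-type existence argument would require. The maximizer wants to concentrate (sum) or match (min), while the minimizer wants to match (sum) or concentrate (min), and these two incentives cannot be reconciled on the one-parameter deviation set available with unit budgets over two battlefields. The main subtlety I expect is not in the arithmetic but in checking that trivial singleton subgames fit within the definition of the two-sided continuous two-level Blotto game, that $u_i$ is continuous in $\sigma$ (it is, by inspection), and that the non-existence claim applies to the full PPNE strategy class (it does vacuously, since mixing in a singleton subgame is trivial). Once these bookkeeping checks are dispensed with, the rest is a short best-response case analysis on $(a,b)\in[0,1]^2$.
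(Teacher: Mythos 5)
Your proof is correct and follows essentially the same strategy as the paper's: both reduce to an allocation-only game by placing trivial singleton subgames on two battlefields and then exhibit continuous utilities for which the two players' matching/spreading incentives cannot be reconciled, so no pure allocation profile is a mutual best response. The differences are cosmetic --- the paper uses a single utility $u=\max(\sigma^1-\sigma^2,0)$ with asymmetric budgets ($2$ vs.\ $1$ soldiers) for both aggregators and additionally computes the distinct max-min and min-max values, whereas you use two tailored utilities, $(\sigma^2_i-\sigma^1_i)^2$ and $\sigma^2_i-\sigma^1_i$, with unit budgets and argue directly via a best-response cycle; both suffice for the stated claim.
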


We proceed to analyze the one-sided setting. We show that under the sum aggregator, a PPNE may still fail to exist, whereas under the min aggregator, a PPNE is guaranteed to exist.  We explain this divergence at the end of the section.

\subsection{Sum Aggregator}\label{sect:cont-sum}
In the continuous two-sided model, each player $j$ deterministically selects a soldier allocation $\contsoldiers^j\in\Sigma^j$, and employs a mixed strategy  $\delta^j_{i,\contsoldiers_i^j}$ over actions $\alpha_i^j\in \mathcal A_i^j$ in subgame $G_i$.
We can write the overall  in the two-sided model as:
 \(
U(\delta,\contsoldiers) =\sum_{i} \biggl(\sum_{\alpha_i^1} \sum_{\alpha_i^2} \delta_{i,\contsoldiers_i^1}^1(\alpha_i^1)\delta^2_{i,\contsoldiers_i^2}(\alpha_i^2) u_i(\alpha_i^1,\alpha_i^2,\contsoldiers_i^1,\contsoldiers_i^2) \biggr) .\)
The total payoff is non-convex non-concave in the minimizing and maximizing player's strategies, respectively. This remains true even in the one-sided allocation model. Consequently, a PPNE may still fail to exist, even when each $u_i$ is continuous in the soldier allocation variables.

\begin{restatable}{proposition}{existonesidedsumcont}\label{prop:exist-one-sided-sum-cont} Even in the one-sided case, a continuous two-level Blotto with continuous battlefield utilities may not admit a PPNE under the sum aggregator. The max-min and min-max values are attained but do not coincide.
\end{restatable}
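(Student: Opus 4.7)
The plan is to exhibit an explicit one-sided continuous two-level Blotto instance with sum aggregator and continuous battlefield utilities in which the max-min and min-max values are both attained yet differ; non-existence of a PPNE then follows from \Cref{prop:nash_minimax}, since a PPNE would be a NE in the strategy product $\prod_i \Delta(\mathcal A^1_i) \times (\Sigma^2 \times \prod_i \Delta(\mathcal A^2_i))$ and would force the two values to coincide. Attainment is the easy step: on each side the inner optimum is of a continuous function over a compact simplex (a pointwise min/max of finitely many affine maps), and the outer optimum is of a continuous function over the compact product of simplices and the allocation polytope, so both extrema are attained.

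For the gap, I would take $n = 2$, $m^2 = 1$, let Player 2 have a single action per battlefield, and Player 1 have two actions per battlefield. Write $\sigma \equiv \contsoldiers^2_1 \in [0,1]$, so $\contsoldiers^2_2 = 1 - \sigma$, and set $u_i(0, \contsoldiers^2_i) = (\contsoldiers^2_i)^2$ and $u_i(1, \contsoldiers^2_i) = (1 - \contsoldiers^2_i)^2$ in both battlefields; both are clearly continuous in $\contsoldiers^2_i$. Letting $p_i \in [0,1]$ denote Player 1's probability of action $0$ in battlefield $i$, the total payoff reduces to $U = p_1\sigma^2 + (1-p_1)(1-\sigma)^2 + p_2(1-\sigma)^2 + (1-p_2)\sigma^2$.

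If Player 2 commits first, Player 1 best responds pointwise by selecting the smaller of $(\contsoldiers^2_i)^2$ and $(1-\contsoldiers^2_i)^2$ in each battlefield, so the inner value becomes $2\min(\sigma^2, (1-\sigma)^2)$, which is maximized over $\sigma \in [0,1]$ at $\sigma = 1/2$ with value $1/2$. If Player 1 commits first, a direct expansion gives $U = 2\sigma^2 - 2b\sigma + b$ with $b = 1 - p_1 + p_2$, which is a convex quadratic in $\sigma$ with leading coefficient $2$; hence the outer max over $[0,1]$ is attained at an endpoint and equals $1 + |p_1 - p_2|$, and minimizing over $(p_1, p_2)$ yields $1$ at $p_1 = p_2$. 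Thus max-min $= 1/2 < 1 =$ min-max, and by \Cref{prop:nash_minimax} no PPNE exists.

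The conceptual point, and the only place where one has to think, is to pick each $u_i$ \emph{convex} rather than concave in the allocation: averaging two opposing quadratics over Player 1's committed mixture produces a convex function of $\sigma$ whose max is pushed to a corner of $[0,1]$, whereas Player 1's best response to a committed $\contsoldiers^2$ can pick up the pointwise minimum of the two quadratics and so pull Player 2 back toward the symmetric allocation $\sigma = 1/2$. Once that intuition is in hand, the remaining work is routine algebra and the proposition follows from \Cref{prop:nash_minimax}.
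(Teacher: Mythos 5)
Your proof is correct, and the numbers check out: with $b = 1-p_1+p_2$ the expanded payoff is indeed $2\sigma^2 - 2b\sigma + b$, whose maximum over $[0,1]$ is $\max\{b,\,2-b\} = 1+|p_1-p_2|$, giving min-max $=1$ against max-min $=\tfrac12$; the appeal to \Cref{prop:nash_minimax} then rules out a PPNE, and your compactness/continuity argument for attainment is sound. However, your counterexample is genuinely different from the paper's. The paper uses one nontrivial $2\times 2$ subgame with payoff matrix $\bigl(\begin{smallmatrix}\sigma_1^2 & 2\\ 0 & 0\end{smallmatrix}\bigr)$ plus a second battlefield that is linear in the remaining allocation; it explicitly traces out a best-response cycle $(y=1,\sigma_1>\sqrt2)\to(y=0,\sigma_1<\sqrt2)\to\cdots$ before computing max-min $=4-\sqrt2$ and min-max $=3$. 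Your construction instead makes the allocating player's subgames degenerate (one action each) and gives the minimizer two actions per battlefield with the symmetric convex utilities $(\sigma^2_i)^2$ and $(1-\sigma^2_i)^2$. What the paper's version buys is a demonstration that non-existence persists even with genuine two-player subgame interaction, plus the dynamical intuition of the cycle; what yours buys is a more minimal and symmetric instance with cleaner values ($\tfrac12$ vs.\ $1$) and a crisper articulation of the underlying mechanism — averaging over the minimizer's committed mixture leaves a payoff that is \emph{convex} in $\sigma$, pushing the best-responding maximizer to a corner, whereas the minimizer's pointwise best response to a committed $\sigma$ rewards the balanced allocation. That mechanism is exactly the one driving the paper's example too (via the $\sigma_1^2$ entry), so the two proofs are different instantiations of the same structural obstruction.
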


Although equilibrium existence cannot be guaranteed in the one-sided model under general continuous utilities, we show that if each battlefield payoff is linear in the maximizing player’s allocation, namely $u_i\bigl(\contsoldiers_i^2\bigr)=c_i\,\contsoldiers_i^2,\; c_i>0$, a PPNE is guaranteed to exist.
In fact, with linear battlefield utilities, the game can be cast as a bilinear saddle point problem by absorbing the soldier allocation variable $\sigma^2$ and the subgame  distribution over actions $\delta^2$ into a single sequence‑form variable  
$y^2_{\contsoldiers^2}$ via the treeplex construction. The product variable $y^2_{\contsoldiers^2}$ is defined over a polytope $\mathcal Q$ that is represented by the following constraints: \[
\mathcal{Q} = \left\{ 
\begin{array}{ll}
y^2_{\contsoldiers_i^2}(i) \geq 0 & \\
y^2_{\contsoldiers_i^2}(i, \alpha_i^2) \geq 0 & \\
\end{array} \middle| 
\begin{array}{ll}
    y^2(\emptyset) = m^2, \\ 
    \sum\limits_{i \in [n]} y^2_{\contsoldiers_i^2}(i) = y^2(\emptyset), \\ 
    \sum\limits_{\alpha_i^2 \in \mathcal{A}_i^2} y^2_{\contsoldiers_i^2}(i, \alpha_i^2) = y^2_{\contsoldiers_i^2}(i). \\ 
\end{array}
\right\}
\] 
The resulting payoff is bilinear in the variables $y^2_{\contsoldiers^2}$ and $\delta^1$, which allows us to invoke the minimax theorem and guarantee the existence of a PPNE.  

\begin{restatable}
{theorem}{siononesidedsumcontlinear}\label{thm:sion-one-sided-cont-sum}
Consider a one-sided continuous two-level Blotto game with sum aggregator, and suppose $\contsoldiers^2 \in \Sigma^2$, $\delta_i^j\in \Delta(\mathcal A_i^j),j\in\{1,2\}$. Assume that in each battlefield, the utility is linear in the maximizing player's allocation, i.e. $u_i=c_i\cdot\contsoldiers_i^2$ for some constant $c_i>0$. 
Then the minimax theorem applies. 
\end{restatable}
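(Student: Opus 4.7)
The plan is to use the linearity of $u_i$ in $\contsoldiers_i^2$ to absorb the product $\contsoldiers_i^2\,\delta^2_{i,\contsoldiers_i^2}(\alpha_i^2)$ into a single sequence-form variable $y^2_{\contsoldiers^2}(i,\alpha_i^2)\in\mathcal{Q}$, so that the game reduces to a bilinear saddle-point problem on compact convex sets to which Sion's minimax theorem applies directly.

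First, I would introduce the change of variables $y^2_{\contsoldiers^2}(i):=\contsoldiers_i^2$ and $y^2_{\contsoldiers^2}(i,\alpha_i^2):=\contsoldiers_i^2\,\delta^2_{i,\contsoldiers_i^2}(\alpha_i^2)$, and verify that the induced map from $\Sigma^2\times\prod_i\Delta(\mathcal{A}_i^2)$ onto $\mathcal{Q}$ is a surjection. The budget constraint $\sum_i\contsoldiers_i^2=m^2$ becomes $\sum_i y^2_{\contsoldiers^2}(i)=m^2=y^2(\emptyset)$, and the simplex constraint $\sum_{\alpha_i^2}\delta^2_{i,\contsoldiers_i^2}(\alpha_i^2)=1$ becomes $\sum_{\alpha_i^2}y^2_{\contsoldiers^2}(i,\alpha_i^2)=y^2_{\contsoldiers^2}(i)$. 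Conversely, every $y^2\in\mathcal{Q}$ induces a valid pair by taking $\contsoldiers_i^2=y^2_{\contsoldiers^2}(i)$ and $\delta^2_{i,\contsoldiers_i^2}(\alpha_i^2)=y^2_{\contsoldiers^2}(i,\alpha_i^2)/y^2_{\contsoldiers^2}(i)$ when $y^2_{\contsoldiers^2}(i)>0$; the choice is immaterial when $y^2_{\contsoldiers^2}(i)=0$, since both the allocation and the battlefield's payoff contribution vanish there.

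Next, I would substitute the linearity assumption $u_i=c_i\,\contsoldiers_i^2$ (allowing $c_i$ to depend on the actions $\alpha_i^1,\alpha_i^2$) to rewrite the aggregate utility as
\[
U=\sum_i\sum_{\alpha_i^1}\sum_{\alpha_i^2}\delta_i^1(\alpha_i^1)\,c_i\,y^2_{\contsoldiers^2}(i,\alpha_i^2),
\]
which is bilinear, hence continuous and both concave and convex in each argument, in $\delta^1\in\prod_i\Delta(\mathcal{A}_i^1)$ and $y^2\in\mathcal{Q}$ separately. The set $\mathcal{Q}$ is cut out by finitely many linear (in)equalities in variables bounded by $m^2$, so it is a compact convex polytope; the product of simplices is likewise compact and convex. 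Sion's theorem then yields $\min_{\delta^1}\max_{y^2\in\mathcal{Q}}U=\max_{y^2\in\mathcal{Q}}\min_{\delta^1}U$, and pulling this identity back through the surjective change of variables gives the claimed minimax identity over the original strategies $\delta^1$ and $(\contsoldiers^2,\delta^2)$.

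The main subtlety, rather than a sharp obstacle, is that the ambient strategy set $\Sigma^2\times\prod_i\Delta(\mathcal{A}_i^2)$ is neither naturally compact nor finite-dimensional as written, since $\delta^2_{i,\contsoldiers_i^2}$ is indexed by the continuum of allocations in $[0,m^2]$. Linearity in $\contsoldiers_i^2$ is precisely what lets the pointwise product collapse into the finite-dimensional polytope $\mathcal{Q}$, and is therefore essential; without it, the reformulation would fail in exactly the manner witnessed by \Cref{prop:exist-one-sided-sum-cont}.
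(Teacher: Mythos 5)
Your proposal is correct and follows essentially the same route as the paper: both absorb $\contsoldiers^2$ and $\delta^2$ into the sequence-form variable $y^2$ over the polytope $\mathcal{Q}$, obtain a bilinear objective in $\delta^1$ and $y^2$ over compact convex sets, apply Sion's minimax theorem, and then disaggregate back to the original variables. Your additional care about the surjectivity of the change of variables (including the degenerate case $y^2_{\contsoldiers^2}(i)=0$) is a detail the paper leaves implicit but does not change the argument.
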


It follows from~\Cref{thm:sion-one-sided-cont-sum} that the continuous two-level Blotto game with sum aggregator and linear utilities admits an equilibrium, which can be computed by solving a linear program (see~\Cref{dual:one-sided-cont-sum}) defined over the polytope $\mathcal Q$.
Since $\mathcal Q$ has a polynomial size representation, we have:

\begin{corollary}\label{cor:exist-one-sided-sum-linear-disc}
    When battlefield utilities are linear in the maximizing player's allocation, a PPNE of the one-sided continuous two-level Blotto game with sum aggregator exists and can be computed in polynomial time.
\end{corollary}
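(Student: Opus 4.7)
The plan is to derive the corollary directly from Theorem~\ref{thm:sion-one-sided-cont-sum} and Proposition~\ref{prop:nash_minimax}, and then bound the complexity by exhibiting a polynomial-size linear program defined over the polytope $\mathcal{Q}$.

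For existence, I would invoke Theorem~\ref{thm:sion-one-sided-cont-sum} to rewrite the game as the bilinear saddle point problem $\min_{\delta^1 \in \prod_i \Delta(\mathcal{A}_i^1)} \max_{y^2 \in \mathcal{Q}} U(\delta^1, y^2)$. The linear utility assumption $u_i = c_i \sigma_i^2$ is exactly what allows the product $\sigma_i^2 \cdot \delta^2_{i,\sigma_i^2}(\alpha_i^2)$ to be collapsed into the single sequence-form variable $y^2_{\sigma_i^2}(i,\alpha_i^2)$, making $U$ bilinear on two compact convex sets. Sion's minimax theorem then gives the min-max/max-min equality, which by Proposition~\ref{prop:nash_minimax} is equivalent to the existence of a Nash equilibrium $(\delta^{1,*}, y^{2,*})$ in the reformulated game. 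To extract a PPNE in the original game, I read off the pure allocation $\sigma_i^{2,*} = \sum_{\alpha_i^2} y^{2,*}_{\sigma_i^2}(i,\alpha_i^2)$ and the conditional mixed subgame strategy $\delta^{2,*}_{i,\sigma_i^{2,*}}(\alpha_i^2) = y^{2,*}_{\sigma_i^2}(i,\alpha_i^2)/\sigma_i^{2,*}$ on battlefields with positive allocation (defined arbitrarily otherwise); this recovery is payoff-equivalent by construction of $\mathcal{Q}$, so the resulting profile is a PPNE.

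For polynomial-time computability, I would convert the bilinear saddle point problem to a single linear program by dualizing the inner maximization over $\mathcal{Q}$, mirroring the approach used for the sum aggregator in the discrete setting. The polytope $\mathcal{Q}$ is defined by $O(n \cdot \max_i |\mathcal{A}_i^2|)$ variables and $O(n)$ linking constraints, while Player 1's strategy space $\prod_i \Delta(\mathcal{A}_i^1)$ has size $O(\sum_i |\mathcal{A}_i^1|)$. The resulting LP therefore has polynomial size in the input, and any polynomial-time LP algorithm finishes the argument.

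The main obstacle is essentially already handled by Theorem~\ref{thm:sion-one-sided-cont-sum}: one must verify that the sequence-form reformulation is lossless in both directions, so that the NE of the bilinear reformulation maps back to a PPNE of the original game without introducing spurious mixing at the allocation stage. Once this payoff equivalence is in place, both existence of a PPNE and its polynomial-time computability follow with only routine LP bookkeeping.
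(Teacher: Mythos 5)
Your proposal is correct and follows essentially the same route as the paper: existence via the sequence-form reformulation over $\mathcal Q$, Sion's minimax theorem, and Proposition~\ref{prop:nash_minimax}, then polynomial-time computation by dualizing the inner problem into the LP of Appendix~\ref{dual:one-sided-cont-sum} over the polynomial-size polytope $\mathcal Q$. Your explicit recovery of the pure allocation $\sigma_i^{2*}=y^{2*}(i)$ and conditional subgame strategy from $y^{2*}$ is exactly the disaggregation step the paper performs inside the proof of Theorem~\ref{thm:sion-one-sided-cont-sum}.
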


\subsection{Min aggregator}
In the two-sided model, when the overall payoff is defined as the minimum of the expected utilities across battlefields, it can be written as 
\(
U(\delta,\contsoldiers) = \min_{i\in[n]} \biggl(\sum_{\alpha_i^1} \sum_{\alpha_i^2} \delta_{i,\contsoldiers_i^1}^1(\alpha_i^1)\delta^2_{i,\contsoldiers_i^2}(\alpha_i^2) u_i(\alpha_i^1,\alpha_i^2,\contsoldiers_i^1,\contsoldiers_i^2) \biggr) .\)
 
Since a PPNE may fail to exist in the two-sided setting (\Cref{thm:exist-two-sided-cont}), we focus on the one-sided model. We show that equilibrium non-existence arises only if the utility function in every battlefield $i$ is discontinuous in $\sigma^2_i$.

\begin{restatable}{theorem}{existonesidedmindisc}\label{thm:exist-one-sided-min-cont} Consider a one-sided continuous two-level Blotto game with min aggregator. Then a PPNE may not exist when battlefield utilities are discontinuous. 
\end{restatable}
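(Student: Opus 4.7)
The statement only asserts that a PPNE \emph{may} fail to exist, so it is enough to exhibit one concrete discontinuous instance. I would use the minimal non-trivial setting: $n=2$ battlefields, a budget $m^{2}=1$ for the maximizer, and trivial single-action subgames in each battlefield, so that all strategic content reduces to the maximizer's pure allocation $\contsoldiers^{2}=(\contsoldiers^{2}_{1},\contsoldiers^{2}_{2})$ on the simplex $\contsoldiers^{2}_{1}+\contsoldiers^{2}_{2}=1$. Under this reduction, a PPNE exists iff the supremum of $U(\contsoldiers^{2})=\min(u_{1}(\contsoldiers^{2}_{1}),u_{2}(\contsoldiers^{2}_{2}))$ is attained by some pure $\contsoldiers^{2}$.

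The core construction is to take $u_{2}(\contsoldiers^{2}_{2})=\contsoldiers^{2}_{2}$ (continuous) and
\[
u_{1}(\contsoldiers^{2}_{1}) \;=\; \begin{cases} \contsoldiers^{2}_{1}, & \contsoldiers^{2}_{1} < 1/2, \\ \contsoldiers^{2}_{1}-1, & \contsoldiers^{2}_{1} \ge 1/2, \end{cases}
\]
so that $u_{1}$ has a downward jump of size $1$ at $\contsoldiers^{2}_{1}=1/2$ while $u_{2}$ is continuous. I would then analyze two regimes. On $\contsoldiers^{2}_{1}<1/2$, the budget constraint forces $\contsoldiers^{2}_{2}=1-\contsoldiers^{2}_{1}>1/2>\contsoldiers^{2}_{1}$, so $U=\contsoldiers^{2}_{1}$, whose supremum $1/2$ is approached as $\contsoldiers^{2}_{1}\uparrow 1/2$ but never attained. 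On $\contsoldiers^{2}_{1}\ge 1/2$, we have $u_{1}\le 0\le u_{2}$, hence $U=\contsoldiers^{2}_{1}-1$, which attains its maximum $0$ at $\contsoldiers^{2}_{1}=1$. Therefore $\sup U = 1/2$, strictly larger than the best attained value on either regime.

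To conclude, I would verify the NE conditions directly: the minimizer's best response is trivially attained because of the singleton subgame action sets, but the maximizer's is not, since from any candidate allocation Player 2 strictly improves by switching to $\contsoldiers^{2}_{1}=1/2-\epsilon$ for small enough $\epsilon>0$. Thus no PPNE exists for this instance, which proves the theorem.

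The main obstacle is calibrating \emph{where} to place the discontinuity: many natural choices — e.g., jumps only at the corners $\contsoldiers^{2}_{i}\in\{0,1\}$ — still admit interior maximizers because the simplex constraint lets the maximizer balance around the jump. Placing the discontinuity in the interior of $u_{1}$ and pairing it with a strictly increasing, continuous $u_{2}$ whose value at the boundary dominates the post-jump value of $u_{1}$ is what pins the supremum to a point the maximizer cannot reach.
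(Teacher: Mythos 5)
Your construction is correct and does prove the theorem as stated: the instance is a legitimate one-sided continuous two-level Blotto game with min aggregator ($1\times 1$ subgames are finite zero-sum games, and the paper itself uses trivial subgames in other counterexamples), the battlefield-1 utility is discontinuous in $\sigma^2_1$, and your case analysis showing $\sup U = 1/2$ is approached but never attained is right; hence the maximizer has no best response and no PPNE exists. Your route is, however, genuinely different from the paper's and exhibits less. The paper keeps the minimizer strategically active: battlefield 1 carries a $2\times 2$ subgame whose entries involve indicators of $\sigma_1<1$ versus $\sigma_1\ge 1$, the minimizer's best response jumps between its two pure actions as the allocation crosses that threshold, and non-existence follows from a best-response cycle in which each player's best response \emph{is} attained but no mutual fixed point exists; the paper additionally computes a strict duality gap (max-min value $0$ versus min-max value $\approx 0.647$), certifying that the minimax identity itself fails, which is the point of the asterisk in \Cref{tab:results}. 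In your example the minimizer is a dummy, so $\sup\inf=\inf\sup$ and the only failure mode is non-attainment of the supremum of a discontinuous single-agent objective. That suffices for the literal claim that ``a PPNE may not exist,'' and it is more elementary, but it does not show that discontinuity can produce a genuinely two-player obstruction or a positive duality gap, which is the extra content of the paper's construction.
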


However, when the payoff in each battlefield is a continuous function of the soldier allocation, the continuous two-level Blotto game with min aggregator admits a saddle point in the one-sided formulation: the max-min and min-max values coincide and are attained, so a PPNE exists. We establish this by invoking the Kneser-Fan minimax theorem.
Suppose that only Player 2 allocates soldiers to battlefields, and both players employ mixed strategies to engage in the battlefield subgames. Then, the min-max formulation of the continuous two-level Blotto game is:
\[
\begin{aligned}
  &\min_{\delta^1}\max_{\delta^2,\contsoldiers^2}\min_{i\in[n]}
    \mathbb E\bigl[u_i(\alpha_i^1,\alpha_i^2,\contsoldiers_i^2)\bigr] \\
    &=\min_{\delta^1}\max_{\delta^2,\contsoldiers^2}\min_{i\in[n]} 
    \sum_{\alpha_i^1}\sum_{\alpha_i^2}
    \delta_i^1(\alpha_i^1)\,\delta^2_{i,\contsoldiers_i^2}(\alpha_i^2)
    u_i(\alpha_i^1,\alpha_i^2,\contsoldiers_i^2).
\end{aligned}
\]
Ideally, we would like to invoke Sion’s minimax theorem to interchange the outer minimization and maximization, thereby recovering the equivalent max-min formulation of the game and establishing the minimax identity. However, the inner objective function cannot be made simultaneously quasiconvex in the minimizing player’s strategy and quasiconcave in the maximizing player’s strategy under any reordering of the operators and/or regrouping of the variables. Consequently, Sion’s conditions are not met. Therefore, we appeal to the Kneser-Fan minimax theorem (\Cref{thm:kneser-fan}). 
Taking the dual of the inner minimization problem over $i\in[n]$, allows us to obtain the following result:

\begin{restatable}{theorem}{siononesidedmincont}\label{thm:sion-one-sided-min-cont} Consider a one-sided continuous two-level Blotto game with continuous battlefield utilities. Then the minimax theorem holds. Specifically,
    \(
    \min_{\delta^1} \max_{\delta^2,\contsoldiers^2} \min_{i} \mathbb E[u_i(\alpha_i^1,\alpha_i^2,\contsoldiers_i^2)] =
    \max_{\delta^2,\contsoldiers^2} \min_{\delta^1} \min_{i} \mathbb E[u_i(\alpha_i^1,\alpha_i^2,\contsoldiers_i^2)].
    \)
\end{restatable}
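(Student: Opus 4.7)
The plan is to dualize the innermost $\min$ over battlefields, merge the minimizing player's two resulting minimizations into a single sequence-form variable on the polytope $\mathcal P$ from \Cref{sect:disc-min}, and then invoke \Cref{thm:kneser-fan} (Kneser-Fan) to swap the remaining outer $\min$ and $\max$. As noted in the text preceding the statement, Sion's theorem is unavailable because, under any regrouping, the payoff is not quasiconcave in $(\sigma^2,\delta^2)$: $u_i$ is only assumed continuous in $\sigma^2_i$, not concave.

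Concretely, I would (i) rewrite $\min_i v_i = \min_{p\in\Delta(n)} \sum_i p_i v_i$, turning the min-max into $\min_{\delta^1}\max_{\delta^2,\sigma^2}\min_p \sum_i p_i \mathbb{E}[u_i]$; (ii) merge $\min_{\delta^1}$ with $\min_p$ via $y^1(i,\alpha_i^1) = p_i\,\delta^1_i(\alpha_i^1)$, producing the payoff
\[
f(y^1,\delta^2,\sigma^2) \;=\; \sum_{i,\alpha_i^1,\alpha_i^2} y^1(i,\alpha_i^1)\,\delta^2_i(\alpha_i^2)\,u_i(\alpha_i^1,\alpha_i^2,\sigma^2_i),
\]
which is linear in $y^1\in\mathcal P$ and continuous on $X := \Sigma^2 \times \prod_i \Delta(\mathcal A_i^2)$; (iii) apply Kneser-Fan to obtain $\min_{y^1}\max_X f = \max_X \min_{y^1} f$; and (iv) unwind the sequence-form encoding by noting that $\min_{y^1\in\mathcal P}\sum_{i,\alpha_i^1}y^1(i,\alpha_i^1)c(i,\alpha_i^1) = \min_{i,\alpha_i^1} c(i,\alpha_i^1) = \min_{\delta^1}\min_i\mathbb{E}_{\delta^1}[c]$, which recovers $V_{\max\min} = \max_{\delta^2,\sigma^2}\min_{\delta^1}\min_i \mathbb{E}[u_i]$ as desired.

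The main obstacle is verifying the Kneser-Fan hypotheses in step (iii), specifically the concave-like condition on the maximizer's space $X$. The other hypotheses are immediate: $X$ is compact (product of simplices), $f$ is continuous (hence upper semicontinuous) in $x$ for each $y^1$ by continuity of $u_i$ in $\sigma^2_i$, and $f$ is convex-like in $y^1$ by its linearity together with convexity of $\mathcal P$. For concave-like, I would use linearity of $f$ in $y^1$ to reduce the ``for all $y^1 \in \mathcal P$'' quantifier to the extreme points of $\mathcal P$ -- Diracs on pairs $(i,\alpha_i^1)$ -- giving a per-battlefield system of inequalities of the form $\sum_{\alpha_i^2}\delta^{2*}_i(\alpha_i^2)u_i(\alpha_i^1,\alpha_i^2,\sigma^{2*}_i) \geq t(\cdot) + (1-t)(\cdot)$ that must be satisfied simultaneously for every $(i,\alpha_i^1)$. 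Constructing a single $(\sigma^{2*},\delta^{2*}) \in X$ that meets the entire system -- while respecting the budget coupling $\sum_i \sigma^{2*}_i = m^2$ -- is the crux; the required flexibility comes from the continuous nature of $\Sigma^2$ (which allows fine adjustments to $\sigma^{2*}$) and the freedom to pick each $\delta^{2*}_i$ as a best response to its per-battlefield target. This is precisely the gap between concave-like and genuine quasiconcavity, and is why continuity of $u_i$ is essential here.
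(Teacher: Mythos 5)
There is a genuine gap, and it sits exactly where you flagged ``the crux'': the concave-like condition you defer is in fact \emph{false} for the function $f(y^1,\delta^2,\sigma^2)$ on $X=\Sigma^2\times\prod_i\Delta(\mathcal A_i^2)$. Take two battlefields with trivial one-action subgames, $u_i(\sigma^2_i)=(\sigma^2_i)^2$, and budget $m^2=1$. With $\sigma'=(1,0)$, $\sigma''=(0,1)$, $t=\tfrac12$, concave-likeness would require some budget-feasible $\bar\sigma$ with $\bar\sigma_1^2\ge\tfrac12$ and $\bar\sigma_2^2\ge\tfrac12$ (testing against the two Dirac extreme points of $\mathcal P$), forcing $\bar\sigma_1+\bar\sigma_2\ge\sqrt2>1$. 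So no amount of ``fine adjustment'' of $\sigma^{2*}$ closes the system; continuity of $u_i$ does not rescue it. This failure is not incidental: it must occur, because your step (ii) already computes the wrong quantity on the min-max side. Merging $\min_p$ with $\min_{\delta^1}$ into $y^1$ \emph{before} the $\max$ replaces $\min_{\delta^1}\max_X\min_p\sum_i p_i\mathbb E[u_i]$ by $\min_{\delta^1,p}\max_X\sum_i p_i\mathbb E[u_i]$, and pushing $\min_p$ past $\max_X$ is itself a minimax swap that fails here (in the same example, $\min_p\max_\sigma(p_1\sigma_1^2+p_2\sigma_2^2)=\tfrac12$ while $\max_\sigma\min_i\sigma_i^2=\tfrac14$). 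Had Kneser-Fan applied to your $f$, it would equate $\tfrac12$ with $\tfrac14$.

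The paper's proof takes a different route precisely to avoid this. It keeps $\delta^1$ (not $y^1$) as the minimizer's variable, convexifies the inner $\min_i$ via $p$, and then \emph{dualizes} $\min_p$ into $\max_{\lambda}\lambda$ subject to $\lambda\le T_i(\delta^1,\delta^2,\sigma^2)$ for all $i$, with $\lambda$ restricted to the compact interval $[\underline u,\overline u]$. The maximizer's variable becomes $(\delta^2,\sigma^2,\lambda)$ and the payoff is the extended-real function $F$ equal to $\lambda$ when feasible and $-\infty$ otherwise. Because $F$ only ever takes the value $\lambda$ or $-\infty$, concave-likeness is witnessed by simply selecting whichever of the two given points carries the larger $\lambda$ --- \Cref{def:ccv-cvx-like} only asks for \emph{some} point of $X$, not a convex combination --- so no budget-feasible interpolant of the two allocations is ever needed. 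Convex-likeness in $\delta^1$ follows from linearity of each $T_i$ in $\delta^1$, and upper semicontinuity of $F$ in the maximizer's block follows from closedness of the feasibility region. If you want to salvage your outline, you must adopt this epigraph-style reformulation (or an equivalent device that encodes the $\min_i$ as constraints rather than as a weighted sum) before invoking \Cref{thm:kneser-fan}.
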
   

\begin{remark}
    This approach does not extend to any of the other settings we analyzed in the continuous domain, which is consistent with the non‑existence results established in~\Cref{thm:exist-two-sided-cont} and~\Cref{prop:exist-one-sided-sum-cont}. In particular, (1) in the one‑sided continuous setting with sum aggregator, the payoff fails to be concavelike in the maximizing player’s strategy, which immediately extends to the sum-aggregated two-sided setting; whereas (2) in the two-sided continuous setting with min aggregator, convexlikeness of the payoff in the minimizing player’s strategy cannot be guaranteed. A formal discussion is in~\Cref{discussion:KF}. 
\end{remark}

For the special case where battlefield utilities are linear in $\contsoldiers^2$, we obtain a sequence-form representation of the strategies of both players defined over the polytopes $\mathcal P$ and $\mathcal Q$. Hence, we have the following theorem:
\begin{restatable}
{theorem}{componesidedmincontlinear}\label{thm:ne-comp-one-sided-cont-min}
    Consider a one-sided continuous two-level Blotto game under the min aggregator, and suppose $\contsoldiers^2 \in \Sigma^2$, $\delta_i^j\in \Delta(\mathcal A_i^j),j\in\{1,2\}$. Assume that in every battlefield, the utility is linear in the maximizing player's allocation, i.e. $u_i=c_i\cdot\contsoldiers_i^2$ for some constant $c_i>0$. 
    Then we can compute a PPNE of the game by solving a linear program.
\end{restatable}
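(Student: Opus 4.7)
The plan is to exploit the linearity of each $u_i$ in $\sigma_i^2$ together with a sequence-form reparametrization on both sides, so as to cast the game as a bilinear saddle-point problem on the polynomial-size polytopes $\mathcal{P}\times\mathcal{Q}$, and then apply LP duality to obtain a single linear program.

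On Player 2's side, I would absorb the product $\sigma_i^2\,\delta^2_{i,\sigma_i^2}(\alpha_i^2)$ into the sequence-form variable $y^2_{\sigma_i^2}(i,\alpha_i^2)\in\mathcal{Q}$, exactly as in the construction preceding~\Cref{thm:sion-one-sided-cont-sum}: the constraints of $\mathcal{Q}$ enforce both the total budget $m^2$ and the consistency $\sigma_i^2=\sum_{\alpha_i^2} y^2_{\sigma_i^2}(i,\alpha_i^2)$. Under $u_i=c_i\sigma_i^2$, the expected battlefield-$i$ payoff becomes linear in $y^2_{\sigma_i^2}(i,\cdot)$, since the allocation variable and the subgame mixing are fused into a single coordinate. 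On Player 1's side, I would handle the min aggregator via the convexification $\min_{i} X_i=\min_{p\in\Delta(n)}\sum_i p_i X_i$ from~\Cref{thm:kuhn-one-sided-disc-min}, and fold $(p,\delta^1)$ into the sequence-form variable $y^1(i,\alpha_i^1)=p_i\,\delta^1_i(\alpha_i^1)\in\mathcal{P}$. The resulting objective is then bilinear in $y^1\in\mathcal{P}$ and $y^2\in\mathcal{Q}$.

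Since linearity makes each $u_i$ continuous in $\sigma^2$,~\Cref{thm:sion-one-sided-min-cont} applies and guarantees both the minimax identity and the existence of a saddle point in the reformulated game. With a bilinear payoff over two polynomial-size polytopes, I would then dualize the inner maximization over $y^2$ for each fixed $y^1$ and merge it with the outer minimization, mirroring the LP derivation of the discrete one-sided min case (\Cref{dual:one-sided-disc-min}). This yields a single LP whose size is polynomial in the sizes of $\mathcal{P}$ and $\mathcal{Q}$. An optimal $(y^{1\star},y^{2\star})$ then decodes into a PPNE by reading off $\sigma_i^{2\star}=\sum_{\alpha_i^2} y^2_{\sigma_i^2}{}^{\star}(i,\alpha_i^2)$ and normalizing the remaining sequence-form entries to recover $\delta^{1\star}$ and $\delta^{2\star}$ in the usual way.

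The main obstacle is verifying that this joint reparametrization -- convexifying the min on Player 1's side while simultaneously absorbing both the allocation and the subgame distribution on Player 2's side -- really yields a payoff that is jointly bilinear on $\mathcal{P}\times\mathcal{Q}$, with no residual nonlinear coupling between $\sigma^2$ and $\delta^2$. In particular, one has to check that the structural assumption $u_i=c_i\sigma_i^2$ is precisely what allows the allocation $\sigma_i^2$ to be pulled through the subgame expectation and merged into a single $y^2$-coordinate; without this linearity, the coefficient on $y^2_{\sigma_i^2}(i,\alpha_i^2)$ would itself depend on $\sigma_i^2$, breaking bilinearity. Once bilinearity is in place, the LP construction, polynomial size, and extraction of a valid pure-allocation mixed-subgame equilibrium follow from the standard template.
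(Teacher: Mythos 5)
Your proposal is correct and follows essentially the same route as the paper: convexify the min via $p\in\Delta(n)$, fold $(p,\delta^1)$ into $y^1\in\mathcal P$ and $(\sigma^2,\delta^2)$ into $y^2\in\mathcal Q$ using the linearity $u_i=c_i\sigma_i^2$, obtain a bilinear saddle point over polynomial-size polytopes, and dualize to a single LP. The only cosmetic difference is that you invoke the Kneser--Fan-based Theorem~\ref{thm:sion-one-sided-min-cont} for the minimax identity, whereas once the bilinear form over compact convex polytopes is in hand, Sion's theorem (as used in Corollary~\ref{cor:one-sided-min-linear-cont} and Appendix~\ref{dual:one-sided-cont-min}) already suffices.
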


\begin{restatable}{corollary}{existonesidedminlinear}\label{cor:one-sided-min-linear-cont}
When battlefield utilities are linear in the soldier allocation of the maximizing player, the one-sided continuous two-level Blotto game with min aggregator admits a PPNE that is polynomial-time computable.
\end{restatable}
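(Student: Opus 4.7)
The plan is to obtain the corollary as an immediate synthesis of two previously established results. Existence follows from \Cref{thm:sion-one-sided-min-cont} together with \Cref{prop:nash_minimax}: since the battlefield utilities $u_i = c_i\,\contsoldiers_i^2$ are continuous in $\contsoldiers^2$, the hypothesis of \Cref{thm:sion-one-sided-min-cont} is satisfied, delivering the minimax identity $\min_{\delta^1}\max_{\delta^2,\contsoldiers^2}\min_i \mathbb{E}[u_i] = \max_{\delta^2,\contsoldiers^2}\min_{\delta^1}\min_i \mathbb{E}[u_i]$, which by \Cref{prop:nash_minimax} is equivalent to the existence of a Nash equilibrium. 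The equilibrium is in fact a PPNE because, as in \Cref{thm:ne-comp-one-sided-cont-min}, the continuous allocation $\contsoldiers^2$ is absorbed into the sequence-form variable $y^2_{\contsoldiers^2}$ over $\mathcal Q$; marginalizing over the subgame action layer recovers a deterministic allocation profile.

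For the computational claim, I would invoke the LP formulation from \Cref{thm:ne-comp-one-sided-cont-min} and verify that it has polynomial size in the input. Polytope $\mathcal P$ consists of $O(n + \sum_i |\mathcal A_i^1|)$ nonnegative variables and $O(n)$ linear equalities, while $\mathcal Q$ has comparable dimensions together with the budget constraint $\sum_i y^2_{\contsoldiers_i^2}(i) = m^2$. Dualizing one side of the bilinear saddle-point formulation (analogously to the LPs constructed for earlier settings in the appendices) converts the problem into a single linear program of polynomial size, with $O(n)$ additional dual variables accounting for the min-over-battlefields aggregator. Standard interior-point or ellipsoid methods then return a solution in polynomial time.

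The main step requiring care is the first: verifying that the equilibrium strategies returned by the LP correspond to a partially pure equilibrium rather than only a mixed-allocation one. This is where linearity of $u_i$ in $\contsoldiers^2_i$ is essential; it allows the continuous allocation to be folded into the sequence-form variable $y^2_{\contsoldiers^2}$ without introducing any nonconvexity, so the optimum admits a pure allocation interpretation. Beyond this conceptual point, no substantive obstacle arises, since \Cref{thm:sion-one-sided-min-cont} and \Cref{thm:ne-comp-one-sided-cont-min} carry the technical load.
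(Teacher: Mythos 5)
Your proposal is correct and takes essentially the same route as the paper: the paper's proof likewise rests on the bilinear sequence-form reformulation over the polytopes $\mathcal P$ and $\mathcal Q$ from \Cref{thm:ne-comp-one-sided-cont-min}, dualizes one player's best response to obtain the LP in \Cref{dual:one-sided-cont-min}, and concludes polynomial-time computability from the polynomial-size representations of both polytopes. Your extra detour through \Cref{thm:sion-one-sided-min-cont} and \Cref{prop:nash_minimax} for existence is valid (linear utilities are continuous) but not needed, since existence already follows from the bilinear saddle point over compact convex polytopes.
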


\paragraph{Computing the max-min strategy}
In this paragraph, we focus on computing the max-min strategy for Player 2 in the one-sided continuous two-level Blotto game with min aggregator. We show that it can be reformulated as a maximization over the minimum Nash value across battlefields and hence it can be computed using a subgradient ascent algorithm. 

\begin{restatable}{proposition}{maxminonesidedmincont}\label{thm:maxmin-one-sided-min-cont-linear} Consider a one-sided continuous two-level Blotto game with min aggregator, and suppose $\contsoldiers^2 \in \Sigma^2$, $\delta_i^j\in \Delta(\mathcal A_i^j),j\in\{1,2\}$. Then the maxmin problem can be formulated as a maximization problem over the minimum Nash value across battlefields. Specifically, 
    \(
   \max_{\delta^{2},\sigma^2}\min_{\delta^{1}}\min_i \mathbb{E}[u_i(\alpha_i^{1},\alpha_i^{2},\sigma^2)] = \max_{\sigma^2}\min_{i\in[n]}v^*_i(\contsoldiers_i^2)
   \).  
\end{restatable}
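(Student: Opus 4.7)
The plan is to reduce the full max--min problem to the subgame-by-subgame Nash values by exploiting two structural features of the one-sided continuous model: (i) the minimizing player does not allocate soldiers, so once $\contsoldiers^2$ is fixed, his mixed strategies $\delta^1=(\delta^1_i)_{i\in[n]}$ decompose across battlefields and each $\delta^1_i$ influences only battlefield $i$; and (ii) the maximizing player's subgame strategy $\delta^2_{i,\contsoldiers_i^2}$ is likewise indexed battlefield-by-battlefield. Together these two facts will let me localize the optimizations in each battlefield.

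First I would pull $\max_{\contsoldiers^2}$ to the outside of the LHS, since $\contsoldiers^2$ appears in all three remaining operators, and reduce the claim to the pointwise identity
\[
\max_{\delta^{2}}\min_{\delta^{1}}\min_{i\in[n]}\mathbb{E}\bigl[u_i(\alpha_i^{1},\alpha_i^{2},\contsoldiers_i^2)\bigr]
= \min_{i\in[n]} v_i^{*}(\contsoldiers_i^2)
\]
for each fixed $\contsoldiers^2\in\Sigma^2$. Using the decomposition of $\delta^1$, the inner minimization commutes with the $\min_i$, so the left-hand side equals $\max_{\delta^2}\min_i f_i(\delta^2_i)$ where $f_i(\delta^2_i):=\min_{\delta^1_i}\mathbb{E}[u_i(\alpha_i^1,\alpha_i^2,\contsoldiers_i^2)]$.

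Next I would prove the two inequalities separately. For $\leq$, I observe that for any $\delta^2$ and any $i$ we have $\min_i f_i(\delta^2_i)\le f_i(\delta^2_i)\le \max_{\delta^2_i} f_i(\delta^2_i)=v_i^{*}(\contsoldiers_i^2)$, so taking the minimum over $i$ on the right and then the supremum over $\delta^2$ on the left preserves the bound. For $\geq$, I would construct an explicit $\delta^2$ by letting $\delta^2_i$ be an optimal subgame strategy attaining $v_i^{*}(\contsoldiers_i^2)$ in each battlefield independently; this yields $f_i(\delta^2_i)=v_i^{*}(\contsoldiers_i^2)$ for every $i$, and hence $\min_i f_i(\delta^2_i)=\min_i v_i^{*}(\contsoldiers_i^2)$, giving the reverse bound. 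Finally, reinstating the outer $\max_{\contsoldiers^2}$ delivers the claimed identity.

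The only mildly delicate step is the construction in the $\geq$ direction: one must verify that subgame-wise optimal $\delta^2_i$ actually exist (i.e.\ the subgame max--min is attained), but since each subgame $G_i$ is a finite zero-sum normal-form game for each fixed $\contsoldiers_i^2$, this follows from the classical minimax theorem. Beyond that, the argument is entirely separation-based and does not require any convexity assumptions on $u_i$ in $\contsoldiers^2$, which is why the reformulation is valid at the level of max--min even when a full minimax identity would require extra continuity (as in \Cref{thm:sion-one-sided-min-cont}).
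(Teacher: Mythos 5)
Your proof is correct, but it takes a genuinely different route from the paper. The paper's proof convexifies the inner $\min_i$ via a weight vector $p\in\Delta(n)$, merges $p$ and $\delta^1$ into a sequence-form variable $y^1$ over the polytope $\mathcal P$, applies Sion's minimax theorem to swap $\min_{y^1}$ with $\max_{\delta^2}$, and then applies Sion once more inside each battlefield to replace $\min_{\delta^1_i}\max_{\delta^2_i}$ by the Nash value $v_i^*$. You instead exploit separability directly: since each battlefield term depends only on $(\delta_i^1,\delta_i^2)$, both $\min_{\delta^1}$ and $\max_{\delta^2}$ commute with $\min_i$ coordinate-wise, reducing everything to $\max_{\delta^2_i}\min_{\delta^1_i}\mathbb E[u_i]=v_i^*(\contsoldiers_i^2)$, which is just von Neumann's theorem for the finite subgame. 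Both proofs ultimately rest on the same structural fact (the paper explicitly flags battlefield independence when it disaggregates in its Equation~\eqref{eq5:cont-min-one-sided-maxmin}), but your two-inequality argument is more elementary, avoids Sion and the sequence-form machinery entirely, and makes transparent that the identity requires no continuity or convexity of $u_i$ in $\contsoldiers^2$. What the paper's heavier route buys is reuse: the intermediate bilinear sequence-form representations over $\mathcal P$ and $\mathcal Q$ are exactly the objects fed into the LP formulations and the related results (\Cref{thm:ne-comp-one-sided-cont-min}, \Cref{dual:one-sided-cont-min}), so the proof doubles as a derivation of the computational reformulation. One small point worth making explicit in your write-up: the inequality $\max_{\delta^2}\min_i f_i(\delta^2_i)\ge\min_i v_i^*$ uses that a single $\delta^2=(\delta^{2*}_1,\dots,\delta^{2*}_n)$ can simultaneously be optimal in every battlefield, which is precisely where the product structure of Player 2's subgame strategy space is used; you do say this, and it is the crux of the argument.
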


Let $V(\sigma^2):=\min_i v^*_i(\sigma^2)$ denote the objective function of this maximization problem. We show that this function is quasiconcave. Notice that, contrary to the sum, the min aggregator preserves quasiconvcavity. 
Hence, to achieve overall quasiconcavity of $V(\contsoldiers^2)$, it suffices to have that the individual terms $v^*_i(\contsoldiers)$ are quasiconcave.

\begin{restatable}{lemma}{quasiconcavitymaxmin}
    \label{lemma:quasiconcavity} If for each battlefield $i\in [n]$, $u_i(\alpha_i^1,\alpha_i^2,\sigma^2)$ is increasing in $\sigma^2_i$ for all $\alpha_i^1\in\mathcal A_i^1$ and $\alpha_i^2\in\mathcal A_i^2$, then the aggregate Nash value function $V(\sigma^2)=\min_i v^*_i(\sigma^2)$ is quasiconcave.
\end{restatable}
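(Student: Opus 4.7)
The plan is to reduce the claim to three elementary facts: (i) per-battlefield monotonicity of $v^*_i$ in its scalar argument, (ii) a lifting of that monotonicity to quasiconcavity of $v^*_i$ viewed as a function of the full allocation vector, and (iii) closure of quasiconcavity under pointwise minimum. Only step (i) uses the hypothesis on $u_i$; the other two are standard level-set manipulations, and I expect no real obstacle -- the proof is ``morally'' just the remark that the $\min$ aggregator preserves quasiconcavity, made rigorous one subgame at a time.

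\emph{Monotonicity.} Fix $i$ and scalars $\contsoldiers^2_i \le \tilde\contsoldiers^2_i$, and let $(\delta^{1*}, \delta^{2*})$ be an equilibrium of the finite zero-sum subgame $G_i$ at allocation $\contsoldiers^2_i$. By the maximin characterization,
\[
v^*_i(\tilde\contsoldiers^2_i) = \max_{\delta^2}\min_{\delta^1}\mathbb{E}\bigl[u_i(\alpha_i^1, \alpha_i^2, \tilde\contsoldiers^2_i)\bigr] \ge \min_{\delta^1}\mathbb{E}\bigl[u_i(\alpha_i^1, \delta^{2*}, \tilde\contsoldiers^2_i)\bigr].
\]
Because $u_i(\alpha_i^1, \alpha_i^2, \cdot)$ is non-decreasing for every pair of pure actions, the integrand on the right dominates $u_i(\alpha_i^1, \alpha_i^2, \contsoldiers^2_i)$ pointwise; taking expectation and then minimum preserves this inequality, so the right-hand side is at least $\min_{\delta^1}\mathbb{E}[u_i(\alpha_i^1, \delta^{2*}, \contsoldiers^2_i)] = v^*_i(\contsoldiers^2_i)$.

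\emph{Quasiconcavity and closure.} Since $v^*_i$ depends only on the scalar coordinate $\contsoldiers^2_i$, for any threshold $\alpha$ the upper level set $\{\contsoldiers^2 \in \Sigma^2 : v^*_i(\contsoldiers^2) \ge \alpha\}$ is the inverse image, under the linear projection $\contsoldiers^2 \mapsto \contsoldiers^2_i$, of a half-line $[t_\alpha, \infty)$ guaranteed by the previous step. The inverse image of a convex set under a linear map is convex, so $v^*_i$ is quasiconcave in $\contsoldiers^2$. Finally,
\[
\{\contsoldiers^2 : V(\contsoldiers^2) \ge \alpha\} = \bigcap_{i \in [n]} \{\contsoldiers^2 : v^*_i(\contsoldiers^2) \ge \alpha\}
\]
is an intersection of convex sets, hence convex, which is precisely the characterization of quasiconcavity of $V$. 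The only subtlety worth flagging is the contrast hinted at in the surrounding discussion: quasiconcavity survives pointwise minima because it is defined via convex upper level sets, yet fails to survive summation, which is exactly why the min aggregator admits this clean analysis whereas the sum aggregator does not.
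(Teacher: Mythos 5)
Your proof is correct and follows essentially the same route as the paper's: establish that $v_i^*$ is increasing in $\sigma_i^2$, note that monotone functions of a single coordinate are quasiconcave, and use closure of quasiconcavity under pointwise minima. The only difference is that you fill in two details the paper states without proof — the maximin argument for monotonicity preservation and the level-set/projection argument lifting univariate quasiconcavity to the full allocation vector — which makes your write-up more complete than the original.
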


Since $V(\sigma^2)$ is quasiconcave on the compact convex set of soldier assignments $\Sigma^2$, then finding the optimal $\sigma^{2*}= \arg\max_{\sigma^2\in \Sigma^2}V(\sigma^2)$ reduces to maximizing a quasiconcave function over a convex set.  Moreover, under the assumption that each battlefield utility function $u_i(\sigma^2)$ is Lipschitz continuous on the soldiers assignments simplex $\Sigma^2$, not only does a global maximizer $\sigma^{2*}$ exist, but one can actually compute it using a simple projected subgradient‐ascent method. 
Specifically, at each iteration $t$, we select a subgradient $g_t\in \partial V(\sigma_t^2)$, a step size $\eta_t>0$ and perform the update \(
  \sigma^2_{t+1}
  = P_{\Sigma^2}\bigl(\sigma^2_t + \eta_t\,g_t\bigr),
\)
where \(P_{\Sigma^2}\) denotes the projection onto Player 2's soldiers' simplex. For completeness, the full pseudocode is given in~\Cref{app:algo}.
Next, we derive a closed‐form description of the subdifferential \(\partial V(\sigma_t^2)\), which enables efficient computation of each $g_t$.

\begin{restatable}
{proposition}{subgnashval}\label{prop:nash-subgrad}
    Consider the two‐player zero‐sum subgame $G_i$ on battlefield $i$. Suppose that  for every pure profile $(\alpha_i^1,\alpha_i^2)\in\mathcal A_i^1\times\mathcal A_i^2$, the payoff $u_i(\alpha_i^1,\alpha_i^2,\sigma^2_i)$ is Lipschitz continuous in $\sigma^2_i$. Let $(\delta_i^{1*},\delta_i^{2*})$ be any NE of $G_i$ at a given $\sigma^2_i$.  Then the vector $g_i=
    \sum_{\alpha_i^1\in\mathcal A_i^1}\sum_{\alpha_i^2\in\mathcal A_i^2}
      \delta_i^{1*}(\alpha_i^1)\;\delta_i^{2*}(\alpha_i^2)\;
      z_{\alpha_i^1,\alpha_i^2}, \quad z_{\alpha_i^1,\alpha_i^2}\in\partial_{\sigma^2_i}u_i(\alpha_i^1,\alpha_i^2,\sigma^2_i),$ is a subgradient of $v_i^*(\sigma^2)$.
\end{restatable}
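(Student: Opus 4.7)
The plan is to recognize the statement as a nonsmooth envelope (Danskin) theorem for the saddle-point value $v_i^*(\sigma_i^2)=\min_{\delta^1}\max_{\delta^2} L(\delta^1,\delta^2,\sigma_i^2)$, where $L(\delta^1,\delta^2,\sigma_i^2):=\sum_{\alpha^1,\alpha^2}\delta^1(\alpha^1)\delta^2(\alpha^2)\,u_i(\alpha^1,\alpha^2,\sigma_i^2)$ is bilinear in the mixed strategies and Lipschitz in $\sigma_i^2$. Since $v_i^*$ is only guaranteed to be locally Lipschitz in general, the natural framework is the Clarke subdifferential $\partial^c$; ``subgradient'' here will mean a Clarke subgradient.

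First I would verify that $v_i^*$ is Lipschitz in $\sigma_i^2$. Because $L$ is a convex combination of the Lipschitz functions $u_i(\alpha^1,\alpha^2,\cdot)$, it inherits a common Lipschitz modulus on the compact simplices $\Delta(\mathcal A_i^1)\times\Delta(\mathcal A_i^2)$, and the $\min$-$\max$ operation preserves Lipschitz moduli, so $\partial^c v_i^*(\sigma_i^2)$ is well-defined. Next, I would use the best-response characterization of the NE to build the sandwich
\[
\psi(\sigma):=\min_{\delta^1} L(\delta^1,\delta_i^{2*},\sigma)\;\le\; v_i^*(\sigma)\;\le\;\max_{\delta^2} L(\delta_i^{1*},\delta^2,\sigma)=:\phi(\sigma),
\]
valid for every $\sigma$, with equality $\psi(\sigma_i^2)=v_i^*(\sigma_i^2)=\phi(\sigma_i^2)=\mathbb{E}_{\delta_i^{1*},\delta_i^{2*}}[u_i(\sigma_i^2)]$ at the target point.

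Then I would apply a nonsmooth Danskin theorem to $\phi$ and $\psi$. For $\phi$, the inner function is affine in $\delta^2$ and Lipschitz in $\sigma$ over a compact simplex, so the Clarke subdifferential of $\phi$ at $\sigma_i^2$ contains $\mathrm{co}\{\partial^c_\sigma L(\delta_i^{1*},\delta^{2*},\sigma_i^2):\delta^{2*}\in\arg\max\}$. Because $\delta_i^{2*}$ is a maximizer (by the NE property), and since $\partial^c_\sigma L(\delta_i^{1*},\delta_i^{2*},\sigma_i^2)$ contains the vector $g_i=\sum_{\alpha^1,\alpha^2}\delta_i^{1*}(\alpha^1)\delta_i^{2*}(\alpha^2)\,z_{\alpha^1,\alpha^2}$ by additivity of Clarke subdifferentials under convex combinations, we conclude $g_i\in\partial^c\phi(\sigma_i^2)$. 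A symmetric argument gives $g_i\in\partial^c\psi(\sigma_i^2)$.

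Finally, tightness of the sandwich at $\sigma_i^2$ transfers the subgradient property to $v_i^*$: standard comparisons of Clarke generalized directional derivatives through the two-sided inequality yield $(v_i^*)^\circ(\sigma_i^2;h)\ge\langle g_i,h\rangle$ for every direction $h$, hence $g_i\in\partial^c v_i^*(\sigma_i^2)$. The main obstacle will be the envelope step itself: invoking the nonsmooth Danskin identity for a parametric max/min when $u_i$ is only Lipschitz (rather than differentiable) in $\sigma$ requires Clarke's subdifferential calculus for parametric optimization, along with checking the required upper-semicontinuity hypothesis for the objective over the compact simplex of mixed strategies.
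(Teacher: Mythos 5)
You take the classical saddle-point envelope route: sandwich $v_i^*$ between the two best-response value functions, $\psi(\sigma)=\min_{\delta^1}L(\delta^1,\delta_i^{2*},\sigma)\le v_i^*(\sigma)\le\max_{\delta^2}L(\delta_i^{1*},\delta^2,\sigma)=\phi(\sigma)$, tight at $\sigma_i^2$, and differentiate the envelopes. This is genuinely different from the paper's proof, which pushes both optimizations to pure actions, rewrites $v_i^*$ as the nested pure max--min $\max_{\alpha_i^2}\min_{\alpha_i^1}u_i$, and applies Danskin twice; your decomposition avoids the paper's exchange of $\min_{\alpha_i^1}$ with the expectation over $\alpha_i^2$ (an exchange that is not valid in general), and the two facts you rely on --- the sandwich with equality at $\sigma_i^2$, and that every action in the support of $\delta_i^{2*}$ (resp.\ $\delta_i^{1*}$) is active in the finite max (resp.\ min) defining $\phi$ (resp.\ $\psi$) there --- are the right ingredients.

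The gap is in the last two steps. First, for a finite pointwise maximum of Lipschitz functions, Clarke's calculus gives $\partial^c\phi(\sigma_i^2)\subseteq\mathrm{conv}\{\cdots\}$, not $\supseteq$; the containment you assert requires regularity of the component maps, and for the pointwise \emph{minimum} $\psi$ the corresponding equality fails even for regular components, so neither $g_i\in\partial^c\phi(\sigma_i^2)$ nor $g_i\in\partial^c\psi(\sigma_i^2)$ is actually established. Second, and more fundamentally, a sandwich that is tight only at a single point does not transfer Clarke subgradients: with $\psi(x)=-|x|$, $\phi(x)=|x|$, $v(x)=x$ and $x_0=0$ one has $\psi\le v\le\phi$ with equality at $0$ and $0\in\partial^c\phi(0)\cap\partial^c\psi(0)=[-1,1]$, yet $\partial^c v(0)=\{1\}$. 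The obstruction is exactly the one your final sentence glosses over: $(v_i^*)^\circ(\sigma_i^2;h)$ is a $\limsup$ over base points $y\to\sigma_i^2$, where the sandwich gives no control; the sandwich only bounds the difference quotients based at $\sigma_i^2$ itself, and even those place $\langle g_i,h\rangle$ and the Dini derivatives of $v_i^*$ in the same interval $\bigl[\psi'(\sigma_i^2;h),\phi'(\sigma_i^2;h)\bigr]$ without forcing any subgradient inequality for $v_i^*$. To close the argument you need to fix the intended notion of subgradient (the paper ultimately uses the concave supergradient inequality, as in Proposition~\ref{prop:sharp-max}) and derive that inequality directly from the structure of $\phi$, $\psi$, and the support/activity conditions, rather than routing it through membership in $\partial^c\phi$ and $\partial^c\psi$.
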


\begin{restatable}
{corollary}{subgaggfct}\label{cor:V-subgrad}
  Under the assumptions of~\Cref{prop:nash-subgrad}, $V$ is Lipschitz on $\Sigma^2$ and its subdifferentials $\partial V(\sigma^2)$ lie in 
    \(
    \conv\Bigl\{
      \partial v_i^*(\sigma^2):
      i\in\arg\min_{j\in[n]}v_j^*(\sigma^2)
    \Bigr\}.
  \)
\end{restatable}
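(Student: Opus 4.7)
The plan is to proceed in two stages: first establish Lipschitz continuity of $V$ on $\Sigma^2$, then invoke standard nonsmooth calculus for the pointwise minimum to obtain the subdifferential inclusion. For the Lipschitz part I would leverage Proposition~\ref{prop:nash-subgrad} directly. By hypothesis each $u_i(\alpha_i^1,\alpha_i^2,\cdot)$ is Lipschitz in $\sigma_i^2$ with some constant $L_{i,\alpha_i^1,\alpha_i^2}$, and because the action sets are finite the quantity $L_i := \max_{\alpha_i^1,\alpha_i^2} L_{i,\alpha_i^1,\alpha_i^2}$ is well defined. Proposition~\ref{prop:nash-subgrad} then produces, at every $\sigma^2 \in \Sigma^2$, a subgradient $g_i \in \partial v_i^*(\sigma^2)$ which is a convex combination of elements in $\partial_{\sigma_i^2} u_i$, each of norm at most $L_i$. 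Hence $\|g_i\| \le L_i$ for every $\sigma^2$, so each $v_i^*$ is $L_i$-Lipschitz on $\Sigma^2$. Setting $L := \max_i L_i$, the pointwise minimum $V = \min_i v_i^*$ inherits $L$-Lipschitzness from the elementary inequality $|\min_i a_i - \min_i b_i| \le \max_i |a_i - b_i|$.

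For the subdifferential inclusion I would invoke the min-rule for locally Lipschitz functions from Clarke's nonsmooth calculus. For a finite family $\{f_i\}_{i\in[n]}$ of locally Lipschitz maps and their pointwise minimum $V := \min_i f_i$, one has
\[
\partial V(\sigma^2) \subseteq \conv \bigcup_{i \in I(\sigma^2)} \partial f_i(\sigma^2),
\]
where $I(\sigma^2) := \arg\min_i f_i(\sigma^2)$. This follows from the classical max-rule $\partial(\max_i g_i)(x) \subseteq \conv \bigcup_{i \in I(x)} \partial g_i(x)$ applied to $-V = \max_i(-f_i)$, combined with the symmetry $\partial(-f) = -\partial f$ of the Clarke generalized gradient. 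Specializing to $f_i = v_i^*$ and combining with step one delivers the stated inclusion and completes the corollary.

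The main obstacle lies not in the Lipschitz bound but in the subdifferential step. Since the $v_i^*$'s are only quasiconcave (Lemma~\ref{lemma:quasiconcavity}) and not guaranteed to be convex or concave, I cannot appeal to convex subdifferential calculus and must operate throughout in Clarke's framework. Away from indices $i \notin I(\sigma^2)$, $V$ locally coincides with a single $v_i^*$ and the inclusion is immediate; the subtlety appears at points where $|I(\sigma^2)| \ge 2$, where the convex hull becomes genuinely necessary and where upper semicontinuity of the Clarke subdifferential is the key technical lever for closing the argument.
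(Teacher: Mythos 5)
Your proof follows essentially the same route as the paper's: establish that each $v_i^*$ is Lipschitz, conclude that the pointwise minimum $V$ is Lipschitz via $|\min_i a_i - \min_i b_i|\le \max_i|a_i-b_i|$, and obtain the subdifferential inclusion by writing $-V=\max_i(-v_i^*)$ and applying the max-rule together with $\partial(-f)=-\partial f$ (the paper invokes Danskin's theorem for the finite pointwise maximum, you invoke Clarke's max-rule; in this finite-family Lipschitz setting these are the same step). The one point to repair is your justification of the first step: you infer Lipschitz continuity of $v_i^*$ from the existence, at every point, of a single bounded subgradient, but for a function that is only quasiconcave this implication does not hold in general, and the Clarke subdifferential is only defined once local Lipschitzness is already known, so the argument as written risks circularity. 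The fix is immediate and costs nothing: $v_i^*(\sigma^2)=\max_{\alpha_i^2}\min_{\alpha_i^1}u_i(\alpha_i^1,\alpha_i^2,\sigma_i^2)$ is a finite maximum of finite minima of $L_i$-Lipschitz functions and is therefore $L_i$-Lipschitz directly, with $L_i=\max_{\alpha_i^1,\alpha_i^2}L_{i,\alpha_i^1,\alpha_i^2}$; with that substitution your argument matches the paper's.
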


Finally, to obtain the optimal soldier allocation \(\sigma^{2*}\) for Player 2, we run Algorithm~\ref{alg:psa}.  Each iteration \(t\) requires a subgradient \(g_t\in\partial V(\sigma^2_t)\).  By Corollary~\ref{cor:V-subgrad}, it suffices to pick any active battlefield $i^* \in \arg\min_{i} v_i^*(\sigma^2_t)$ and use its subgradient. \Cref{prop:nash-subgrad} then provides a concrete ``Nash subgradient'' for battlefield $i^*$: choose for each action pair \((\alpha_i^1,\alpha_i^2)\) a subgradient 
$z_{\alpha_i^1,\alpha_i^2}\in\partial_{\sigma^2_{i^*}}u_{i^*}(\alpha_i^1,\alpha_i^2,\sigma^2_{i^*}),$
and set $g_t
  = \sum_{\alpha_i^1,\alpha_i^2}
    \delta_{i^*}^{1*}(\alpha_i^1)\,\delta_{i^*}^{2*}(\alpha_i^2)\,
    z_{\alpha_i^1,\alpha_i^2}
  \;\in\;\partial v_{i^*}^*(\sigma^2_t).$
This \(g_t\) is exactly the update direction used in the projected subgradient ascent algorithm (Algorithm~\ref{alg:psa} in~\Cref{app:algo}).

\paragraph{Convergence Guarantees} 
By~\Cref{lemma:quasiconcavity} and~\Cref{cor:V-subgrad}, our aggregate Nash‐value function 
$V$ is quasiconcave and Lipschitz continuous on the soldiers simplex.  Hence, the projected subgradient‐ascent iterates $\{\contsoldiers^2_t\}$ satisfy the general basic‐inequality conditions (H1)-(H2) of~\cite{hu2020convergence}, which guarantee global convergence of any sequence with a suitably chosen step size rule. Moreover, they establish in Theorems 3.3-3.5 that, under an additional \emph{weak sharp minima of Hölderian order} assumption and upper‐bounded noise, such a sequence converges at a linear rate when the Hölder exponent equals 1.  In our maximization context, this translates into a \emph{weak sharp maxima} condition:

\begin{figure*}[t]
    \centering
    \begin{subfigure}[h]{0.32\linewidth}
    \centering
    \includegraphics[width=\textwidth]
    {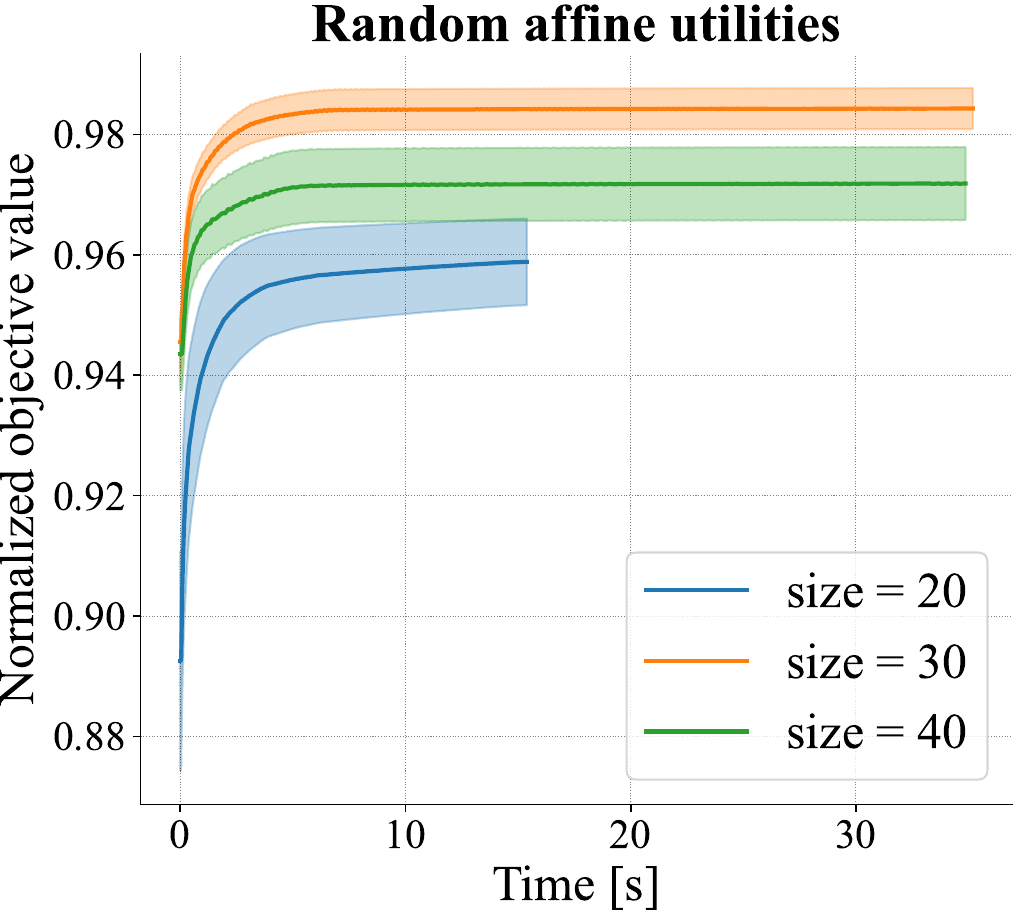}
\end{subfigure}
\hfill
\begin{subfigure}[h]{0.32\linewidth}
    \centering
    \includegraphics[width=\textwidth]{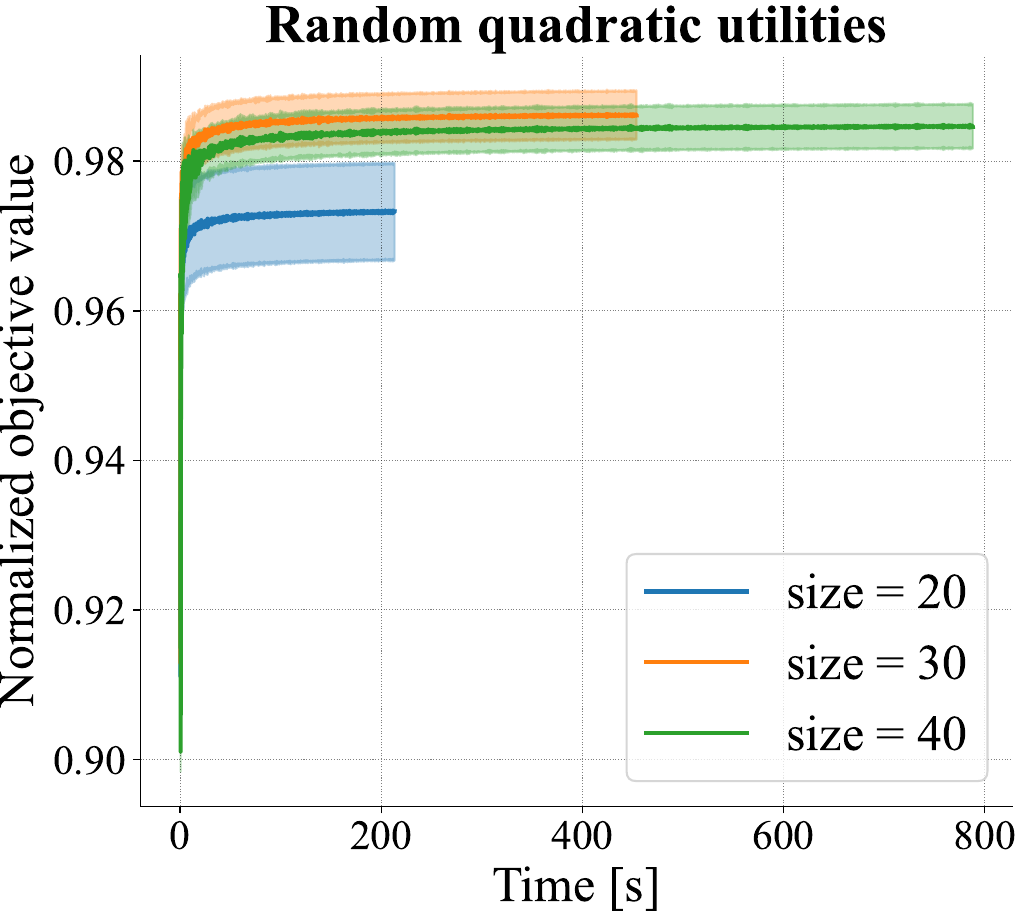}
\end{subfigure}
    \hfill
\begin{subfigure}[h]{0.35\linewidth}
    \centering
    \includegraphics[width=\textwidth]{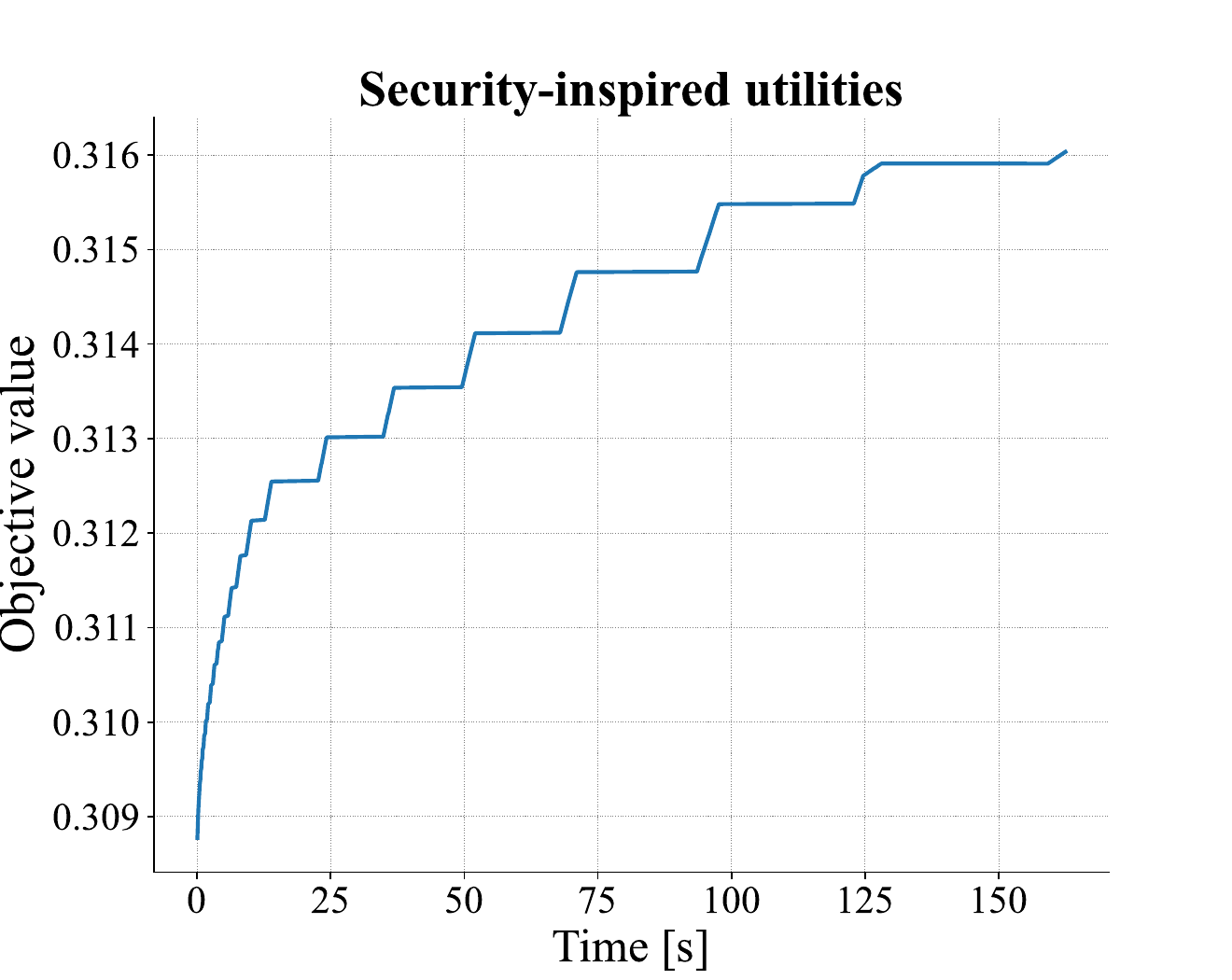}
\end{subfigure}
    \caption{Convergence of~\Cref{alg:psa} in the one-sided min-aggregated setting with affine (left), quadratic (middle), and security-inspired (right) battlefield utilities.}
    \label{fig:conv_subgr_asc}
\end{figure*}

\begin{definition}[Weak Sharp Maxima]
There exist constants $\epsilon>0$, $p\in(0,1]$ such that $\forall \contsoldiers^2\in\Sigma^2$,
\(
V(\contsoldiers^{2*}) - V(\contsoldiers^2) \;\ge\;\epsilon\,\|\contsoldiers^2-\contsoldiers^{2*}\|^p,
\)
where $\contsoldiers^{2*}$ is the unique maximizer of $V$.
\end{definition}
When battlefield utilities are affine functions of $\contsoldiers^2$, the induced Nash value $v_i^*(\contsoldiers_i^2)$ remains an affine function of $\contsoldiers_i^2$. It follows that the aggregate function $V=\min_i v_i^*$ is concave on the soldiers simplex $\Sigma^2$. Applying the subgradient inequality at the maximizer $\contsoldiers^{2*}\in \arg\max_{\Sigma^2}V$ then yields the following sharp‐max error bound for $V$.  

\begin{restatable}{proposition}{sharpmax}\label{prop:sharp-max} 
    If every battlefield utility is affine in Player 2’s soldier allocation, i.e. $u_i(\sigma^2)=c_i\,\sigma_i^2+d_i$ with $c_i>0$, then $V$ satisfies the weak sharp‐maxima condition with exponent $p=1$. 
\end{restatable}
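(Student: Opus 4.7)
The plan is to exploit the concavity and piecewise-affine structure that affine utilities impose on $V$. First, since each $u_i$ is affine in $\sigma_i^2$ (with the affine slope $c_i$ independent of the actions), the per-battlefield Nash value $v_i^*$ is itself affine in $\sigma_i^2$: the action-independent shift $c_i\sigma_i^2$ passes through the saddle-point computation and only the residual constant subgame value remains. Consequently $V = \min_i v_i^*$ is a pointwise minimum of finitely many affine functions on the compact simplex $\Sigma^2$, making it concave and piecewise affine, with a well-defined active index set $I^* := \{i : v_i^*(\sigma_i^{2*}) = V(\sigma^{2*})\}$ at the assumed-unique maximizer $\sigma^{2*}$.

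Using concavity of $V$, the global tangent inequality at $\sigma^{2*}$ yields $V(\sigma^2) - V(\sigma^{2*}) \le V'(\sigma^{2*};\, \sigma^2 - \sigma^{2*})$ for every $\sigma^2 \in \Sigma^2$, where $V'$ denotes the one-sided directional derivative. By positive homogeneity of $V'(\sigma^{2*}; \cdot)$, it suffices to upper bound this derivative by a strictly negative constant uniformly on the compact set $S := \{d : \|d\| = 1,\, d \in T_{\Sigma^2}(\sigma^{2*})\}$ of feasible unit directions (intersection of the polyhedral tangent cone of $\Sigma^2$ with the unit sphere). Maximality immediately gives $V'(\sigma^{2*}; d) \le 0$ on $S$; and if $V'(\sigma^{2*}; d_0) = 0$ for some $d_0 \in S$, piecewise affineness forces $V(\sigma^{2*} + t d_0) = V(\sigma^{2*})$ for all sufficiently small $t > 0$, producing a distinct maximizer and contradicting uniqueness. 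Hence $V'(\sigma^{2*}; d) < 0$ for every $d \in S$. Writing $V'(\sigma^{2*}; d) = \min_{i \in I^*} c_i d_i$, continuity in $d$ and compactness of $S$ yield $\epsilon := -\max_{d \in S} V'(\sigma^{2*}; d) > 0$. Plugging $d = (\sigma^2 - \sigma^{2*})/\|\sigma^2 - \sigma^{2*}\|$ into the tangent inequality and scaling by $\|\sigma^2 - \sigma^{2*}\|$ produces $V(\sigma^{2*}) - V(\sigma^2) \ge \epsilon \|\sigma^2 - \sigma^{2*}\|$, which is weak sharp maxima with $p = 1$.

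The main obstacle is the step that excludes vanishing directional derivatives at the maximizer. Piecewise affineness is essential there: a generic concave function with a unique maximizer can have $V'(\sigma^{2*}; d_0) = 0$ while decreasing, say, quadratically along $d_0$, which would yield the Hölderian bound with exponent only $p = 1/2$. The affinity hypothesis is precisely what upgrades a vanishing directional slope into local constancy, and thereby forces a strictly negative slope in every feasible direction, giving the linear ($p = 1$) error bound required by the definition.
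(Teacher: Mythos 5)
Your proof is correct, but it takes a genuinely different route from the paper's. The paper also begins by observing that each $v_i^*$ is affine with slope $C_i\ge\epsilon:=\min c_i>0$ and that $V=\min_i v_i^*$ is concave, but it then picks the specific one-hot supergradient $g=C_{i^*}e_{i^*}$ for an active battlefield $i^*$ and applies the subgradient inequality directly, asserting $C_{i^*}(\sigma^{2*}_{i^*}-\sigma^2_{i^*})=C_{i^*}\,\|\sigma^{2*}-\sigma^2\|_\infty$ on the grounds that ``the optimum is attained at a vertex of the simplex''; this yields the explicit constant $\epsilon=\min_i C_i$ but leans on a norm identity that is not obviously valid for arbitrary $\sigma^2$. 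You instead bound the one-sided directional derivative: maximality gives $V'(\sigma^{2*};d)\le 0$ on feasible unit directions, piecewise affineness upgrades any vanishing directional slope to local constancy (contradicting uniqueness of the maximizer, which the weak-sharp-maxima definition presupposes and which in fact follows from $c_i>0$, though neither you nor the paper proves it), and compactness of the tangent-cone directions plus continuity of $d\mapsto\min_{i\in I^*}c_i d_i$ gives a uniform strictly negative bound $-\epsilon$, which the concavity/secant inequality converts into the linear error bound. Your argument is more robust — it needs no choice of norm or vertex claim and works for any concave piecewise-affine $V$ with a unique maximizer on a polytope — at the cost of producing a non-explicit $\epsilon$, whereas the paper's (shakier) derivation aims for the concrete constant $\min_i C_i$. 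Your closing remark correctly identifies piecewise affineness as the ingredient that rules out the $p=1/2$ degradation.
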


It follows immediately from Theorem 3.3 of~\cite{hu2020convergence} that the subgradient ascent  iterates converge linearly to the optimal  $\contsoldiers^{2*}$.
\section{Empirical Evaluation}

\paragraph{Discrete two-sided with sum-aggregator} We first evaluate in the most basic setting of discrete soldiers under the sum aggregator (Section~\ref{sec:sum-agg}). We consider (i) the LP-based approach similar to \citet{behnezhad2023fast} and (ii) our approach based on online learning. The $i\in[n]$ battlefield has value $i$, such that winning (resp. losing) it gives $+i$ (resp. $-i$) reward. For consistency, we re-normalize values of battlefields such that they sum to $1$. The $i$-th battlefield is won by player $1$ with probability $k^1_i/(k^1_i + k^2_i)$, with a random winner selected if no soldiers are allocated. For each subgame, players decide whether to \textit{double} their stakes. If just one player doubles, payoffs and losses for that battlefield are doubled; if both players double, they are quadrupled.

We report running times for both methods. For the former, we use the default configuration for Gurobi~\cite{gurobi}. For the latter, we adopt the Regret Matching-Plus (RM+) algorithm~\cite{tammelin2015cfrplus} 
and construct $\Gamma^j$ using scaled extensions \citep{farina2019efficient}. We declare convergence when the saddle-point gap is less than $0.002$. More details on experimental setup are reported in Appendix~\ref{sec:appendix-expts}. The results are shown in~\Cref{tab:expt-rm}. We observe that when the game is small, the LP finds the NE easily. However, as the size of the game grows, LPs slow down dramatically and online learning approaches begin to shine. In fact, for larger instances Gurobi's sometimes fails to converge because of numerical issues or simply going out-of-memory.\footnote{In our experiments, it is unclear if Barrier or Primal/Dual simplexes are superior. For larger games, it would appear as though simplex performs better in practice, while barrier tends to lead to non-convergence.} In contrast, methods based on online learning scale better.
\begin{table}[h]
    \centering
    \setlength\tabcolsep{1mm}
    \begin{tabular}{@{}cccc|ccc@{}}
        \toprule
        $n$ & $m^1$ & $m^2$ & $|\Gamma^1|, |\Gamma^2|$ & LP [s] & RM+ [s] \\
        \midrule
        30 & 100 & 50 & 85, 49 & 91 & 0.02
        \\
        35 & 125 & 70 & 281k, 92k & $4.8\cdot 10^3$ & 121 
        \\
        40 & 150 & 100 & 1M, 262k& $5.4 \cdot 10^3$ & 525 
        \\
        50 & 200 & 100 & 12.6M, 2.05M & NA ($>2.6 \cdot 10^4$) & $9.5\cdot 10^3$ 
        \\
        \bottomrule
    \end{tabular}
    \caption{Runtime for discrete soldiers with sum aggregators.}
    \label{tab:expt-rm}
\end{table}


\paragraph{Continuous one-sided with min-aggregator} We evaluate Algorithm~\ref{alg:psa} on (1) randomly generated and (2) security-inspired battlefield utilities. For random instances, we consider (i) affine utilities $u_i(\contsoldiers^2)=c_i\,\contsoldiers^2+d_i$, and (ii) quadratic utilities $u_i(\contsoldiers^2)=b_i\,(\contsoldiers^2)^2+c_i\,\contsoldiers^2+d_i$, with $b_i,c_i,d_i\sim\mathrm{Uniform}[0,100]$. Player 2 is given 20 soldiers to allocate across 5 battlefields. We vary players' action spaces in $\{20,30,40\}$ and generate 10 independent instances per game size. In the first 2 plots of~\Cref{fig:conv_subgr_asc}, we show the aggregate normalized value $\bar V^t$ and its standard error 
$\mathrm{SE}(t)
= \sqrt{\frac{s_t}{\ell}},
\quad
s_t = \frac{1}{\ell-1}\sum_{k=1}^\ell\bigl(V_k^t - \bar V^t\bigr)^2,$
across $\ell=10$ instances, where $V_k^t$ denotes the normalized objective value on instance $k$ at iteration $t$. Under both utility functions, all three sizes reach near‐optimality with very low dispersion. 
Next, we consider a real-world scenario inspired by security applications. 
We model a one‐sided min-aggregated continuous two‐level Blotto game where Player 2 allocates 10 soldiers across three battlefield, each featuring a two‐player zero-sum security subgame where Player 2 is the defender’s and Player 1 the attacker. We follow the framework of~\cite{krever2025guardconstructingrealistictwoplayer}. More details on the experimental setup are provided in~\Cref{app:cont-one-sided-experiments}. The rightmost plot of~\Cref{fig:conv_subgr_asc} confirms the algorithm’s robustness in this practical setting.

\section{Conclusion}


In this paper, we introduced and analyzed a two-level variant of the Colonel Blotto game, where players allocate resources across battlefields and then engage in parametrized subgames. We examined multiple modeling dimensions -- soldier types, payoff aggregation methods, and two-sided or one-sided strategic settings -- and established equilibrium existence, provided solution algorithms, and validated their practicality through experiments.

\section*{Acknowledgments}
This research was supported by the Office of Naval Research awards N00014-22-1-2530 and N00014-23-1-2374, the National Science Foundation awards IIS-2147361 and IIS-2238960, the Ministry of Education, Singapore, under the Academic Research Fund Tier 1 (FY2025) and by the National University of Singapore, under the Start-Up Grant Scheme. 
We thank Noah Krever for help with setting up the security-inspired subgames.

\bibliography{aaai2026}

\newpage
\onecolumn
\appendix
\section{Notation}\label{app:notation}

\begin{table}[h]
    \centering
    \begin{tabular}{@{}cp{8.5cm}p{4.5cm}@{}}
        \toprule
        \textbf{Symbol} & \textbf{Description} & \textbf{Strategy Representation} \\ 
        \midrule
        
        $\gamma^{j}$ & distribution over soldier assignments by player $j$ & mixed  strategy over soldiers\\ 
        \midrule
        $\delta_i^{j}$ & distribution over actions played by player $j$ in subgame $G_i$ & mixed strategy over subgame actions \\ 
        \midrule
        
        $h^{j}_{i,a,b}$ & flow from battlefield $i$ for player $j$, starting from $a$ soldiers, and placing $a-b$ soldiers into battlefield $i$ & flow variable \\
        \midrule
        $x^j_{i,\alpha_i^j,k_i^j}$ & probability that player $j$ places $k_i^j$ discrete soldiers into battlefield $i$ and plays action $\alpha_i^j$ there & mixed strategy (over flow polytope)\\
        \midrule 
        $y^j({i,\alpha_i^j})$ & product of the probability that player $j$ plays action $\alpha_i^j$ in battlefield $i$ with another variable &  sequence form product representation\\
        \bottomrule
    \end{tabular}
    \caption{Notation}
    \label{tab:notation}
\end{table}
\section{Linear programs for computing a Nash equilibrium in some settings}

\subsection{Two-sided discrete two-level Blotto game with sum aggregator}\label{dual:two-sided-disc-sum}
The best response of Player 1 to a fixed strategy $x^{2}_{i,k,\alpha_i^{2}}$ of Player 2 is 
\begin{align*}
\min \sum_{i=1}^{n} \sum_{\alpha_i^{1} \in \mathcal{A}^{1}_i} \sum_{\alpha_i^{2} \in \mathcal{A}^{2}_i}
    \sum_{k^{1}\in[m^{1}]}
    \sum_{k^{2}\in[m^{2}]}
    &u_i (\alpha_i^{1}, \alpha_i^{2}, k_i^{1}, k_i^{2}) \cdot x^{1}_{i,k_i^{1},\alpha_i^{1}}
    \cdot x^{2}_{i,k_i^{2},\alpha_i^{2}} \\
    h^{1}_{0, m^{1}, m^{1}} &= 1 \\ 
    h^{1}_{0, a, b} &= 0 &&\forall a,b\in[0,m^{1}],  (a, b) \neq (m^{1},m^{1}) \\
    h^{1}_{n-1, 0, 0} & = 1 \\
    \sum_{a=c}^{m^{1}} h^{1}_{i-1,a,c} &= \sum_{b=0}^{c} h^{1}_{i, c, b} &&\forall i\in [1,n-1],c\in[0, m^{1}] \\ 
    \sum_{\alpha_i^1 \in \mathcal{A}^{1}_i} x^{1}_{i,k_i^{1},\alpha_i^1} &= \sum_{k_i^1\leq r\leq m^{1}} h^{1}_{i, r, r -k_i^{1}} &&\forall i\in [1,n-1],k_i^1\in [0,m^{1}] \\
    x^{1}_{i,k_i^{1},\alpha_i^1} & \geq 0 \\
    h^{1}_{i,a,b} &\geq 0
\end{align*}

The dual of the above LP can be written as follows

\begin{align*}
    \max&~\lambda_{0, m^1, m^1} + \lambda_{n-1,0,0}\\
    0 &\geq 1[j\geq k]\left(\lambda_{i+1,k} - \lambda_{i,j} - \mu_{i,j-k}\right) + F(i,j,k)\tag{$h^{1}_{i,j,k}$}\\
    \mu_{i,k}&\leq \sum_{\alpha_i^{2} \in \mathcal{A}^{2}_i}
    \sum_{k_i^{2}\in[m^{2}]}
    u_i^{1} (\alpha_i^{1}, \alpha_i^{2}, k_i^{1}, k_i^{2})
    \cdot x^{2}_{i,k_i^2,\alpha_i^{2}} \tag{$x^{1}_{i,k_i^1,\alpha_i^{1}}$}
\end{align*}

where
\begin{equation*}
    F(i,j,k)=
    \begin{cases}
    \lambda_{0, m^{1}, m^{1}} & \text{if}~i=0,j=k=m^{1}\\
    \lambda_{0, j, k} & \text{if}~i=0,(j,k)\neq (m^{1},m^{1})\\
    \lambda_{n-1,0,0} & \text{if}~i=n-1, j=k=0\\
    0 & \text{otherwise}.
    \end{cases}
\end{equation*}

Since the game is zero-sum, we can find a Nash equilibrium of the two-level Blotto game by appending the best response constraints for Player 2. Therefore, we obtain the following LP

\begin{align*}
    \max&~\lambda_{0, m^{1}, m^{1}} + \lambda_{n-1,0,0}\\
    0 &\geq 1[j\geq k]\left(\lambda_{i+1,k} - \lambda_{i,j} - \mu_{i,j-k}\right) + F(i,j,k)\\\
    \mu_{i,k}&\leq \smashoperator{\sum_{\alpha_i^{2} \in \mathcal{A}^{2}_i, k_i^{2}\in[m^{2}]}}
    u_i^{1} (\alpha_i^{1}, \alpha_i^{2}, k_i^{1}, k_i^{2})
    \cdot x^{2}_{i,k_i^2,\alpha_i^{2}} \\
    h^{2}_{0, m^{2}, m^{2}} &= 1 \\ 
    h^{2}_{0, a, b} &= 0 \quad\forall a,b\in[0,m^{2}],  (a, b) \neq (m^{2},m^{2})  \\
    h^{2}_{n-1, 0, 0} & = 1 \\
    \sum_{a=c}^{m^{2}} h^{2}_{i-1,a,c} &= \sum_{b=0}^{c} h^{2}_{i, c, b} \quad\forall i\in [1,n-1],c\in[0, m^{2}]& \\ 
    \sum_{\alpha_i^2 \in \mathcal{A}^{2}_i} x^{2}_{i,k_i^2,\alpha_i^2} &= \sum_{k_i^2\leq r\leq m^{2}} h^{2}_{i, r, r -k_i^2} \quad\forall i\in [1,n-1],k_i^2\in [0,m^{2}] \\
    x^{2}_{i,k_i^2,\alpha_i^2},h^{2}_{i,a,b} & \geq 0 \\
    \lambda,\mu &\in\mathbb{R}
\end{align*}

\subsection{One-sided discrete two-level Blotto game with min aggregator}\label{dual:one-sided-disc-min}
\paragraph{Max-min strategy}
The best response of Player 1 to a fixed strategy $x^{2}_{i,k,\alpha_i^{2}}$ of Player 2 is 
\begin{align*}
\min_{y^1}\sum_{i}\sum_{\alpha^{1}_i,\alpha_i^{2}}\sum_{k_i^{2}}&u_i(\alpha_i^{1}, \alpha_i^{2},\discsoldiers_i^{2})\cdot x^2_{i,k_i^{2},\alpha_i^{2}}\cdot y^1(i,\alpha_i^1) \\
    \sum_{i\in [n]}y^1(i) &= 1\\
    \sum_{\alpha_i^1\in \mathcal A_i^1}y^1(i,\alpha_i^1) &= y^1(i) \quad \forall i \\
    y^1(i),y^1({i,\alpha_i^1}) &\geq 0 \quad \forall i, \alpha_i^1
\end{align*}
We write the dual of the best response LP as follows
\begin{align*}
\max_{\lambda,\mu_i} &~\lambda  \\
    \lambda&- \mu_i\leq 0\\
    \mu_{i} &\leq  \sum_{\alpha_i^2}\sum_{k^2_i} u_i(\alpha_i^1,\alpha_i^2,k_i^2)\cdot x^2_{i,k_i^2\alpha^2_i} \quad \forall i,\alpha_i^1
\end{align*}

Therefore, a NE of the two-level Blotto game under the one-sided case can be computed by solving the following LP
\begin{align*}
   \max \lambda  \\
    \lambda&- \mu_i\leq 0\\
    \mu_{i} &\leq  \sum_{\alpha_i^2}\sum_{k^2_i} u_i(\alpha_i^1,\alpha_i^2,k_i^2)\cdot x^2_{i,k_i^2\alpha^2_i} \quad \forall i,\alpha_i^1 \\
    h_{0, m^{2}, m^{2}} &= 1\\ 
    h_{0, a, b} &= 0 &&\forall a,b\in[0,m^{2}],  (a, b) \neq (m^{2},m^{2}) \\
    h_{n-1, 0, 0} & = 1\\
    \sum_{a=c}^{m^{2}} h_{i-1,a,c} &= \sum_{b=0}^{c} h_{i, c, b} &&\forall i\in [1,n-1],c\in[0, m^{2}] \\ 
    \sum_{\alpha_i^2 \in \mathcal{A}^{2}_i} x^2_{i,k_i^2,\alpha_i^2} &= \sum_{k_i^2\leq r\leq m^{2}} h_{i, r, r -k_i^2} &&\forall i\in [1,n-1],k_i^2\in [0,m^{2}] \\
    0 &\leq x^2_{i,k_i^2,\alpha_i^2}, h_{i,a,b}
\end{align*}

\paragraph{Min-max strategy}
The best response of Player 2 to a fixed strategy $y^1(i,\alpha_i^1)$ of Player 1 is 
\begin{align*}
\max_{x^2}\sum_{i}\sum_{\alpha^{1}_i,\alpha_i^{2}}\sum_{k_i^{2}}&u_i(\alpha_i^{1}, \alpha_i^{2},\discsoldiers_i^{2})\cdot x^2_{i,k_i^{2},\alpha_i^{2}}\cdot y^1(i,\alpha_i^1) \\
    h^{2}_{0, m^{2}, m^{2}} &= 1 \\ 
    h^{2}_{0, a, b} &= 0 &&\forall a,b\in[0,m^{2}],  (a, b) \neq (m^{2},m^{2}) \\
    h^{2}_{n-1, 0, 0} &= 1 \\
    \sum_{a=c}^{m^{2}} h^{2}_{i-1,a,c} &= \sum_{b=0}^{c} h^{2}_{i, c, b} &&\forall i\in [1,n-1],c\in[0, m^{2}] \\ 
    \sum_{\alpha_i^2 \in \mathcal{A}^{2}_i} x^{2}_{i,k_i^{2},\alpha_i^2} &= \sum_{k_i^1\leq r\leq m^{1}} h^{2}_{i, r, r -k_i^{2}} &&\forall i\in [1,n-1],k_i^1\in [0,m^{2}] \\
    x^{2}_{i,k_i^{2},\alpha_i^1} & \geq 0 \\
    h^{2}_{i,a,b} &\geq 0
\end{align*}
The dual of the above LP can be written as follows

\begin{align*}
    \min&~\lambda_{0, m^2, m^2} + \lambda_{n-1,0,0}\\
    0 &\leq 1[j\geq k]\left(\lambda_{i+1,k} - \lambda_{i,j} - \mu_{i,j-k}\right) + F(i,j,k)\\
    \mu_{i,k}&\geq \sum_{\alpha_i^{1} \in \mathcal{A}^{1}_i}
    u_i (\alpha_i^{1}, \alpha_i^{2}, k_i^{1}, k_i^{2})
    \cdot y^1(i,\alpha_i^1)
\end{align*}

where
\begin{equation*}
    F(i,j,k)=
    \begin{cases}
    \lambda_{0, m^{2}, m^{2}} & \text{if}~i=0,j=k=m^{2}\\
    \lambda_{0, j, k} & \text{if}~i=0,(j,k)\neq (m^{2},m^{2})\\
    \lambda_{n-1,0,0} & \text{if}~i=n-1, j=k=0\\
    0 & \text{otherwise}.
    \end{cases}
\end{equation*}

Therefore, a NE of the two-level Blotto game under the one-sided case can be computed by solving the following LP
\begin{align*}
    \min&~\lambda_{0, m^2, m^2} + \lambda_{n-1,0,0}\\
    0 &\leq 1[j\geq k]\left(\lambda_{i+1,k} - \lambda_{i,j} - \mu_{i,j-k}\right) + F(i,j,k)\\
    \mu_{i,k}&\geq \sum_{\alpha_i^{1} \in \mathcal{A}^{1}_i}
    u_i (\alpha_i^{1}, \alpha_i^{2}, k_i^{1}, k_i^{2})
    \cdot y^1(i,\alpha_i^1) \\
    1 &= \sum_{i\in [n]}y^1(i)\\
    y^1(i) &= \sum_{\alpha_i^1\in \mathcal A_i^1}y^1(i,\alpha_i^1) \quad \forall i \\
    0 &\le y^1(i), y^1({i,\alpha_i^1}) \quad \forall i, \alpha_i^1
\end{align*}

\subsection{One-sided continuous two-level Blotto game with sum aggregator under linear battlefield utilities} \label{dual:one-sided-cont-sum}
\paragraph{Max-min strategy}
The best response of Player 1 to a fixed strategy $y^{2}$ of Player 2 is 
\begin{align*}
\min_{\delta^{1}} \sum_{i\in[n]} \sum_{\alpha_i^1\in \mathcal A_i^1} \sum_{\alpha_i^2\in \mathcal A_i^2}& \delta^1_i(\alpha_i^1)y^2_{\contsoldiers_i^2}(i,\alpha_i^2) c_i \\
    \sum_{\alpha_i^1\in \mathcal A_i^1}\delta_i^1(\alpha_i^1) &= 1 \quad \forall i \\
    \delta_i^1(\alpha_i^1)&\geq 0 \quad \forall i, \alpha_i^1
\end{align*}
We write the dual of the best response LP as follows
\begin{align*}
\max_{\mu_i}& \sum_{i\in[n]}\mu_i  \\
    \mu_{i} &=y^2_{\contsoldiers_i^2}(i,\alpha_i^2) c_i \quad \forall i
\end{align*}

Therefore, a NE of the two-level Blotto game under the one-sided case can be computed by solving the following LP
\begin{align*}
   \max &\sum_{i\in[n]}\mu_i  \\
    y^2_{\contsoldiers_i^2}(i,\alpha_i^2) c_i &= \mu_{i} \quad \forall i \\
    \sum_{i \in [n]} y^2_{\contsoldiers_i^2}(i) &= m^2 \\
    \sum\limits_{\alpha_i^2 \in \mathcal{A}_i^2} y^2_{\contsoldiers_i^2}(i, \alpha_i^2) &= y_{\contsoldiers_i^2}^2(i) \qquad \forall i\\
    y^2_{\contsoldiers_i^2}(i),y^2_{\contsoldiers_i^2}(i, \alpha_i^2) &\geq 0 \qquad \forall i,\alpha_i^2
\end{align*}

\paragraph{Min-max strategy}
The best response of Player 2 to a fixed strategy $\delta^{1}$ of Player 1 can be written as
\begin{align*}
    \max_{y^2} \sum_{i\in[n]} \sum_{\alpha_i^1\in \mathcal A_i^1} \sum_{\alpha_i^2\in \mathcal A_i^2} &\delta^1_i(\alpha_i^1)y^2_{\contsoldiers_i^2}(i,\alpha_i^2) c_i \\
    \sum_{i \in [n]} y^2_{\contsoldiers_i^2}(i) &= m^2 \\
    \sum\limits_{\alpha_i^2 \in \mathcal{A}_i^2} y^2_{\contsoldiers_i^2}(i, \alpha_i^2) &= y_{\contsoldiers_i^2}^2(i)  \qquad \forall i\\
     y^2_{\contsoldiers_i^2}(i),y^2_{\contsoldiers_i^2}(i, \alpha_i^2) &\geq 0 \qquad \forall i,\alpha_i^2
\end{align*}
We write the dual of the best response LP as follows
\begin{align*}
\min_{\lambda,\mu_i} &~\lambda m^2  \\
    0&\leq \lambda - \mu_i\\
    \mu_{i} &\geq  \sum_{\alpha_i^1} \delta^1_i(\alpha_i^1)c_i
\end{align*}
Finally, we obtain a NE of the full Blotto game by solving the following linear program
\begin{align*}
\min &~\lambda m^2  \\
    0&\leq \lambda - \mu_i\\
    \mu_{i} &\geq  \sum_{\alpha_i^2} \delta^1_i(\alpha_i^1)c_i \\
    1&=\sum_{\alpha_i^1\in \mathcal A_i^1}\delta_i^1(\alpha_i^1) \quad \forall i \\
    0 &\leq \delta_i^1(\alpha_i^1) \quad \forall i, \alpha_i^1
\end{align*}

\subsection{One-sided continuous two-level Blotto game with min aggregator and linear battlefield utilities} \label{dual:one-sided-cont-min}
\paragraph{Max-min strategy} The best response of Player 1 to a fixed strategy $y^2$ of Player 2 can be written as
\begin{align*}
    \min_{y^1}\sum_{i\in[n]} \sum_{\alpha_i^1\in \mathcal A_i^1} \sum_{\alpha_i^2\in \mathcal A_i^2} &y^1(i,\alpha_i^1) y^2_{\contsoldiers_i^2}(i,\alpha_i^2) c_i \\
    \sum_{i\in [n]}y^1(i) &= 1\\
    \sum_{\alpha_i^1\in \mathcal A_i^1}y^1(i,\alpha_i^1) &= y^1(i) \quad \forall i \\
    y^1(i),y^1({i,\alpha_i^1}) &\geq 0 \quad \forall i, \alpha_i^1
\end{align*}
By deriving the dual of the above LP, we obtain 
\begin{align*}
\max_{\lambda,\mu_i} &~\lambda  \\
    0&\geq \lambda - \mu_i\\
    \mu_{i} &\leq  \sum_{\alpha_i^2} y^2_{\contsoldiers_i^2}(i,\alpha_i^2) c_i \quad \forall i,\alpha_i^1
\end{align*}
Finally, a NE of the two-level Blotto game is obtained by solving the linear program
\begin{align*}
\max &~\lambda  \\
    \lambda - \mu_i&\leq 0\\
      \sum_{\alpha_i^2} y^2_{\contsoldiers_i^2}(i,\alpha_i^2) c_i &\geq \mu_{i} \quad \forall i,\alpha_i^1\\
    \sum_{i \in [n]} y^2_{\contsoldiers_i^2}(i) &= m^2 \\
    \sum\limits_{\alpha_i^2 \in \mathcal{A}_i^2} y^2_{\contsoldiers_i^2}(i, \alpha_i^2) &= y_{\contsoldiers_i^2}^2(i) \qquad \forall i\\
    y^2_{\contsoldiers_i^2}(i),y^2_{\contsoldiers_i^2}(i, \alpha_i^2) &\geq 0 \qquad \forall i,\alpha_i^2
\end{align*}
\paragraph{Min-max strategy} The best response of Player 2 to a fixed strategy $y^1$ of Player 1 can be written as 
\begin{align*}
    \max_{y^2}\sum_{i\in[n]} \sum_{\alpha_i^1\in \mathcal A_i^1} \sum_{\alpha_i^2\in \mathcal A_i^2} &y^1(i,\alpha_i^1) y^2_{\contsoldiers_i^2}(i,\alpha_i^2) c_i \\
    \sum_{i \in [n]} y^2_{\contsoldiers_i^2}(i) &= m^2 \\
    \sum\limits_{\alpha_i^2 \in \mathcal{A}_i^2} y^2_{\contsoldiers_i^2}(i, \alpha_i^2) &= y_{\contsoldiers_i^2}^2(i)  \qquad \forall i\\
     y^2_{\contsoldiers_i^2}(i),y^2_{\contsoldiers_i^2}(i, \alpha_i^2) &\geq 0 \qquad \forall i,\alpha_i^2
\end{align*}
We write the dual of the best response LP as follows
\begin{align*}
\min_{\lambda,\mu_i} &~\lambda m^2  \\
    0&\leq \lambda - \mu_i\\
    \mu_{i} &\geq  \sum_{\alpha_i^1} y^1(i,\alpha_i^1)c_i
\end{align*}
Finally, we obtain a NE of the two-level Blotto game by solving the following linear program
\begin{align*}
\min &~\lambda m^2  \\
    \lambda - \mu_i&\geq 0\\
      \sum_{\alpha_i^1} y^1(i,\alpha_i^1)c_i &\leq \mu_{i} \\
       \sum_{i\in [n]}y^1(i) &= 1\\
    \sum_{\alpha_i^1\in \mathcal A_i^1}y^1(i,\alpha_i^1) &= y^1(i) \quad \forall i \\
    y^1(i),y^1({i,\alpha_i^1}) &\geq 0 \quad \forall i, \alpha_i^1
\end{align*}

\section{Online Learning for Discrete Soldiers with Sum-Aggregators}
\label{sec:online-learning-appendix-discrete-sum-agg}
The key idea behind the online-learning based approach is to construct no-regret online learners over the convex space of strategies (typically a polytope) for each player, where the expected utility is convex-concave in this space, and where the extreme points in the polytope correspond to pure strategies. Then, it follows from standard theory that the Nash equilibrium can be approached by adopting \textit{self-play} between the two regret minimizers and averaging the recommended strategies from each online learner. This general approach is responsible for most of the practical state-of-the-art game solvers, particularly imperfect information extensive form games where the convex set of strategies is given by the \textit{treeplex}\cite{von1996efficient} and the sequence form. Our proposed approach is similar at a high level, with the main technical contribution being to describe the space of strategies and constructing regret minimizers for them.

Recall that in the polytope $\Gamma^j$ the variables $h^j_{i-1,a,c}$ give the probability that for player $j$, $m^j-a$ soldiers were placed in battlefields $1$ to $i-1$, \textit{and} $c-a$ soldiers were placed in battlefield $i$. Therefore, $\sum_{k\leq r\leq m^j} h^{j}_{i,r,r-k_i^j}$ is the probability that $k^j_i$ soldiers were placed by player $j$ in battlefield $i$. Furthermore, $x^j_{i,k_i^j,\alpha_i^j}$ is the probability that $k^j_i$ soldiers were placed by player $j$ in the $i$-th battlefield and action $\alpha^j_i$ was chosen as its action on the $i$-th battlefield. 

These constraints are more easily visualized by Figures~\ref{fig:layered_graph} and \ref{fig:battlefield_seq_form}, which extends the prior representation by \citet{behnezhad2023fast}. The strategy space of simply allocating soldiers is given in Figure~\ref{fig:layered_graph}, while the battlefield level strategies are shown in Figure~\ref{fig:battlefield_seq_form}, for a single battlefield and soldiers placed in that battlefield.

\begin{figure}
\begin{tikzpicture}[scale=1, vertex/.style={circle, draw, fill=white, inner sep=1pt}]
  \foreach \x in {1,...,4}{%
    \foreach \y in {1,...,5}{%
      \node[vertex] (v\x\y) at (\x*3,\y) {};%
    }%
  }

  \node[vertex,draw=none] (v05) at (1.3,5){};

  \foreach \x in {2,...,3}{%
    \pgfmathtruncatemacro{\xn}{\x+1}
    \foreach \y in {1,...,5}{%
      \pgfmathtruncatemacro{\ymax}{\y}
      \foreach \yp in {1,...,\ymax}{%
        \draw[->] (v\x\y) -- (v\xn\yp);
      }%
    }%
  }

  \draw[->,color=red] (v25) -- (v33);
  \draw[->,color=red] (v24) -- (v32);
  \draw[->,color=red] (v23) -- (v31);

  \foreach \y in {1,...,5}{%
      \pgfmathtruncatemacro{\ymax}{\y}
        \draw[->] (v15) -- (v2\y);
    }%
 \draw[->] (v05) -- (v15);

 \node (bf1) at (4.5,0.3) {\large \textbf{Battlefield 1}};
 \node (bf2) at (7.5,0.3) {\large \textbf{Battlefield 2}};
 \node (bf3) at (10.5,0.3) {\large \textbf{Battlefield 3}};

 \node (l5) at (0,5) {\large \textbf{\#soldiers = 4}};
 \node (l4) at (0,4) {\large \textbf{\#soldiers = 3}};
 \node (l3) at (0,3) {\large \textbf{\#soldiers = 2}};
 \node (l2) at (0,2) {\large \textbf{\#soldiers = 1}};
 \node (l1) at (0,1) {\large \textbf{\#soldiers = 0}};
\end{tikzpicture}
\caption{Layered graph structure used to compactly describe the mixed strategy space of a single player where there are 3 battlefields and 4 soldiers (thus a height of 5). Note that this edges correspond to the $h^j$ variables and not the battlefield level strategies. Each path from left to right corresponds to a unique discrete distribution of soldiers over battlefields, i.e., a pure strategy. For instance, the path in green shows represents the pure strategy where 3 soldiers are allocated to battlefield 1, 1 to battlefield 2, and none to battlefield 3. Furthermore, we can also see that the marginal distribution of soldiers on each battlefield. For example, the probability that 2 soldiers are placed in battlefield $2$ is the sum of the edges in red.
Note that for simplicity in illustration we allow soldiers to be unused (paths can end at any of the rightmost vertices, not just at the vertex corresponding to $0$ soldiers remaining). This typically does not lead to issues since in most applications, payoffs are monotonically non-decreasing in the number of soldiers. However, it is entirely possible to remove edges from the second-last to last layer to enforce this. Without the battlefield level strategies (see Figure~\ref{fig:battlefield_seq_form}, this is identical to the approach of \citet{behnezhad2023fast}.
}
\label{fig:layered_graph}
\end{figure}
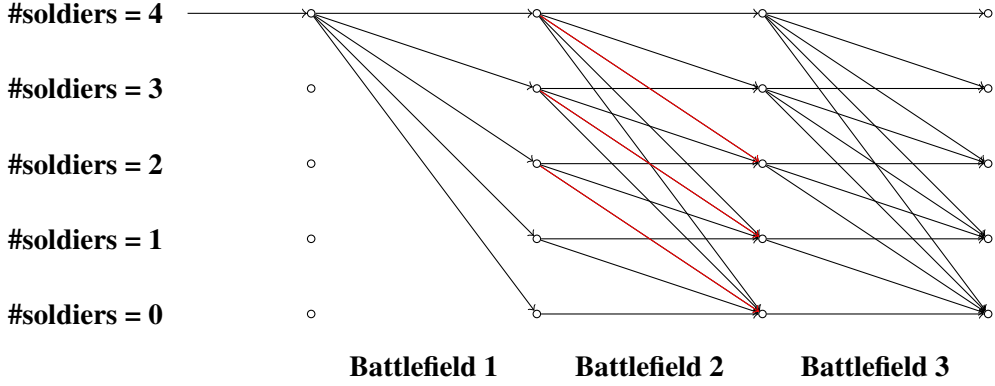
\begin{figure}
\centering
\begin{minipage}{0.49\textwidth}
\centering
\begin{tikzpicture}[scale=1, vertex/.style={circle, draw, fill=white, inner sep=1pt}]
  \foreach \x in {2,...,3}{%
    \foreach \y in {1,...,5}{%
      \node[vertex] (v\x\y) at (\x*3,\y) {};%
    }%
  }

 \node[vertex] (q1) at (2.5*3-0.33, 4.2){=};
 \node[vertex] (q2) at (2.5*3, 3){=};
 \node[vertex] (q3) at (2.5*3+0.33, 1.8){=};

\draw[->,color=red] (v25) -- (q1);
  \draw[->,color=red] (v24) -- (q2);
  \draw[->,color=red] (v23) -- (q3);

  \draw[->,color=red] (q1) -- (v33);
  \draw[->,color=red] (q2) -- (v32);
  \draw[->,color=red] (q3) -- (v31);
 
 \node[vertex] (w) at (2.5*3, 0){+};

\draw[->,color=blue] (q1) -- (w);
\draw[->,color=blue] (q2) -- (w);
\draw[->,color=blue] (q3) -- (w);

\node[vertex,rectangle] (a1) at (2.5*3-1, -1){};
\node[vertex,rectangle] (a2) at (2.5*3+1, -1){};

\draw[->,color=violet] (w) -- (a1);
\draw[->,color=violet] (w) -- (a2);
\end{tikzpicture}
\end{minipage}
\begin{minipage}{0.49\textwidth}
\centering
\begin{tikzpicture}[scale=1, vertex/.style={circle, draw, fill=white, inner sep=1pt}]
  \foreach \x in {2,...,3}{%
    \foreach \y in {1,...,5}{%
      \node[vertex] (v\x\y) at (\x*3,\y) {};%
    }%
  }

 \node[vertex] (q1) at (2.5*3-0.8, 4.75){=};
 \node[vertex] (q2) at (2.5*3 -0.33, 3.6){=};
 \node[vertex] (q3) at (2.5*3+0.33, 2.4){=};
 \node[vertex] (q4) at (2.5*3+0.5, 1.35){=};

\draw[->,color=red] (v25) -- (q1);
  \draw[->,color=red] (v24) -- (q2);
  \draw[->,color=red] (v23) -- (q3);
  \draw[->,color=red] (v22) -- (q4);

  \draw[->,color=red] (q1) -- (v34);
  \draw[->,color=red] (q2) -- (v33);
  \draw[->,color=red] (q3) -- (v32);
  \draw[->,color=red] (q4) -- (v31);
 
 \node[vertex] (w) at (2.5*3, 0){+};

\draw[->,color=blue] (q1) -- (w);
\draw[->,color=blue] (q2) -- (w);
\draw[->,color=blue] (q3) -- (w);
\draw[->,color=blue] (q4) -- (w);

\node[vertex,rectangle] (a1) at (2.5*3-1, -1){};
\node[vertex,rectangle] (a2) at (2.5*3+1, -1){};
\node[vertex,rectangle] (a15) at (2.5*3, -1){};

\draw[->,color=violet] (w) -- (a1);
\draw[->,color=violet] (w) -- (a2);
\draw[->,color=violet] (w) -- (a15);
\end{tikzpicture}
\end{minipage}
\caption{Left: Sequence form representation for battlefield 2 of Figure~\ref{fig:layered_graph} \textit{when there are 2 soldiers placed in it}. Assume there are 2 actions to be taken at battlefield 2. Red edges correspond to the cases where exactly 2 soldiers are allocated to the battlefield (these are the same red edges in Figure~\ref{fig:layered_graph}) For each of them, their values are summed, giving the total probability that battlefield 2 has 2 soldiers (see edges in blue leading to the ``+'' vertex). This total probability at the ``+'' vertex is then split up into the two battlefield actions, given in purple. These edges correspond to the $x^j$'s; in this specific case $x^j_{2,2,\alpha_2^j}$, i.e., the marginal probability that 2 soldiers are placed in battlefield 2 \textit{and} action $\alpha^j_2$ was chosen. Note that the intersection between the blue and red edges do \textit{not} involve splitting the values up like flow, but rather, are duplicated at the points labelled ``=''. We reiterate that this is only for battlefield 2 with two soldiers,; this is repeated for each other battlefield, and for every possible number of soldiers allocated to it. Right: a similar situation but for another battlefield where there is one soldier allocated to it, and with $3$ battlefield level actions for the player.}
\label{fig:battlefield_seq_form}
\end{figure}
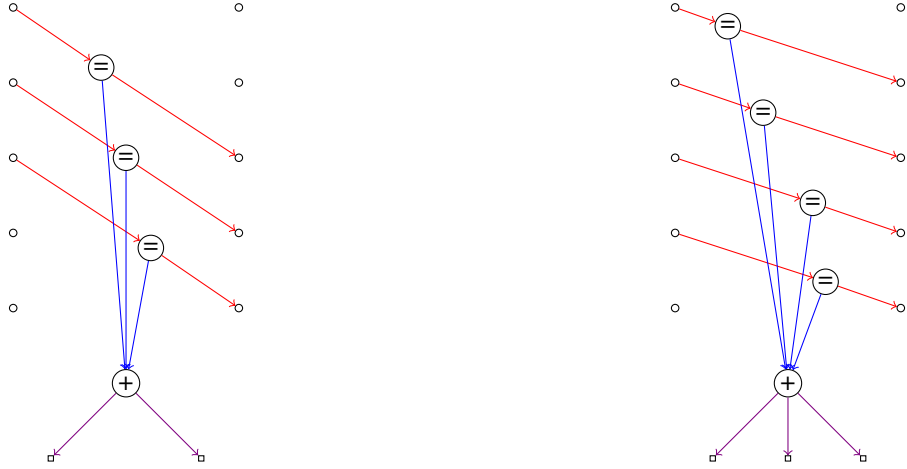

We can see that the strategy space (including mixed strategies) of each player can be represented by a directed acyclic graph (DAG). 
This is almost identical to the usual treeplex \cite{von1996efficient} structure used in representing strategies in extensive-form games, except for the slight different that each ``information set'' could have more than one parent sequence (note that there are no cycles). Given (mixed) strategies for each player, the payoff is \textit{bilinear} in the variables $x^1_{i, \alpha^1_i, k^1_i}$ and $x^2_{i, \alpha^2_i, k^2_i}$. 

What remains is to construct a regret minimizer over this polytope. In this paper we adopt the \textit{scaled extension} approach of \citet{farina2019efficient} due to it's simplicity. An alternative approach is the kernel method of \citet{farina2022kernelized} or \citet{takimoto2003path}. 

The approach of \citet{farina2019efficient} roughly parallels the famous \textit{counterfactual regret minimization} (CFR) approach of \citet{zinkevich2007regret} in that each information set (this is also a vertex or equivalently, decision point in our DAG) contains it's own local regret minimizer (typically over the simplex or some scaled variant). The rewards that the local regret minimizer ``sees'' is what the player would have obtained if it started making decisions at that information set, assuming all its descendants played according to their recommendations. This logic is essentially identical to CFR\footnote{In CFR, each local regret minimizer minimizes their \textit{counterfactual regret}. This is usually thought of as the player ``playing towards'' that particular information set, and is well defined since there is a single sequence leading to that information set (due to perfect recall). In our DAG setting, there are actually multiple sequences leading to most decision points. Hence, we use the equivalent definition of having the player begin at that information set.}. This is done by first performing topological sort the information sets in $\Gamma^j$. Next, we incrementally construct a regret minimizer by adding vertices in this sorted order. Adding a vertex involves appending a new simplex corresponding to the number of actions that exist in that vertex\footnote{We are technically not restricted to adding the simplex, but any other set that we are able to perform regret minimizaiton over. However, for our purposes, the simplex is enough.} This vertex is scaled by some linear combination of it's predecessors in the topological ordering (hence the name \textit{scaled extension}). This construction gives rise to a regret minimizer over $\Gamma^j$ that recursively calls the regret minimization method used at each simplex (e.g., regret matching, regret-matching plus \cite{tammelin2015cfrplus}, predictive regret matching \cite{farina2021predictive} etc.).

\section{Projected Subgradient Ascent Algorithm}\label{app:algo}

\begin{algorithm}[H]
    \caption{Projected Subgradient Ascent}
    \label{alg:psa}
    \begin{algorithmic}[1]
  \Require 
    Set of soldiers assignments \(\Sigma^2\), subgame utilities \(u_i(.)\), initial step size \(\eta>0\), iterations \(T\).
  \Ensure 
    Optimal allocation \(\sigma^{2*}\).
  \State \(\sigma^{2} \gets\) arbitrary point in $\Sigma^2$ \Comment{initial allocation}
  \For{\(t = 1\) \(\mathbf{to}\) \(T\)}
  \For{\(i = 1\) \(\mathbf{to}\) \(n\)} 
      \State \((\delta^{1*}_i,\delta^{2*}_i, v_i^*) \gets \mathrm{SolveSubgame}(G_i(\sigma^2_i))\)
        \Comment{compute Nash strategies and value of \(G_i\)}
      \State 
        \(g_i \gets \displaystyle\sum_{\alpha_i^1,\alpha_i^2}\!
         \delta_i^{1*}(\alpha_i^1)\,\delta_i^{2*}(\alpha_i^2)\,
         \partial_{\sigma_i^2}u_i(\alpha_i^1,\alpha_i^2,\sigma^2_i)\)  
        \Comment{subgradient of \(v_i^*\)}
    \EndFor
    \State \(i^* \gets \arg\min_{i} v_i^*; \quad g \gets 0;\quad g_{i^*}\gets g_{i^*}\)   \Comment{ get descent direction}
    \State \(\eta \gets \mathrm{StepSizeUpdate}(\eta, t), \quad \sigma^2 \gets P_{\Sigma^2}\bigl(\sigma^2 + \eta\,g\bigr)\) \Comment{gradient step + project}
  \EndFor
  \State \Return \(\sigma^2\)
\end{algorithmic}
\end{algorithm}
\section{Proofs}

\subsection{Proof of~\Cref{prop:nash_minimax}}
\nashminimax*
\begin{proof}
We will prove the individual implications separately. 
\begin{itemize}
\item[$\rightarrow$] 
Suppose 
  \[
    V := \max_{x\in X}\inf_{y\in Y}u(x,y)
    \;=\;
    \min_{y\in Y}\sup_{x\in X}u(x,y).
  \]
  By hypothesis there exist 
  \[
    x^* \in \arg\max_{x\in X}\inf_{y\in Y}u(x,y),
    \qquad
    y^* \in \arg\min_{y\in Y}\sup_{x\in X}u(x,y).
  \]
  Even if the minimax theorem doesn't hold, for all $x\in X$, $y\in Y$, we always have the following
  \begin{equation*}
    \inf_{y\in Y}u(x^*,y)\;\le\;u(x^*,y)
    \quad\text{and}\quad
    u(x,y^*)\;\le\;\sup_{x\in X}u(x,y^*).
  \end{equation*}
  Evaluating at $x^*$ and $y^*$ and using the definition of $V$, we obtain
  \[
    V = \inf_{y\in Y}u(x^*,y)
      \;\le\; u(x^*,y^*)
      \;\le\; \sup_{x\in X}u(x,y^*) = V.
  \]
  Hence all three quantities coincide:
  \[
    \inf_{y\in Y}u(x^*,y)
    = u(x^*,y^*)
    = \sup_{x\in X}u(x,y^*) = V.
  \]
  It follows that for every \(x\in X\) and $y\in Y$,
  \[
    u(x,y^*) \;\le\; u(x^*,y^*) \;\le\; u(x^*,y),
  \]
  which is exactly the saddle-point condition.  This means that neither player can improve by unilateral deviation from $(x^*,y^*)$, so $(x^*,y^*)$ is a Nash equilibrium of the zero-sum game with payoff $u$.

Note, however, that this result does not hold if we only have $\sup_x\inf_y u(x,y) = \inf_y\sup_x u(x,y)$. In fact, if the value cannot be actually attained by any existing $(x^*,y^*)$, then this does not imply the existence of the equilibrium. 

\item[$\leftarrow$] Suppose $(x^*,y^*)$ is a Nash equilibrium in the zero‐sum game with payoff $u\colon X\times Y\to\mathbb R$.  By definition of equilibrium, for every $x\in X$ and $y\in Y$, we have
  \[
    u(x,y^*) \;\le\; u(x^*,y^*) \;\le\; u(x^*,y).
  \]
  Taking suprema over $x$ and infima over $y$ yields
  \begin{align*}
    \sup_{x\in X}u(x,y^*) &\le u(x^*,y^*)\le \inf_{y\in Y}u(x^*,y).
  \end{align*}
Hence
  \[
    \inf_{y\in Y}\!\sup_{x\in X}u(x,y)
    \;\le\;
    \sup_{x\in X}u(x,y^*)
    \;\le\;
    u(x^*,y^*)
    \;\le\;
    \inf_{y\in Y}u(x^*,y)
    \;\le\;
    \sup_{x\in X}\!\inf_{y\in Y}u(x,y).
  \]
  On the other hand, by weak duality, 
  \[\sup_x\inf_y u(x,y)\le\inf_y\sup_x u(x,y)\]
  holds for any function $u$.  
  Combining these two chains of inequalities forces
  \[
    \sup_{x\in X}\inf_{y\in Y}u(x,y)
    \;=\;
    \inf_{y\in Y}\sup_{x\in X}u(x,y)
    \;=\;
    u(x^*,y^*).
  \]
  Thus the minimax and maximin values coincide and are attained at $(x^*,y^*)$.  In particular, $\max_{x}\inf_{y}u(x,y)=\min_{y}\sup_{x}u(x,y)$, as claimed.

\end{itemize}
\end{proof}

\subsection{Proof of~\Cref{thm:kuhn-2sided-sum-disc} }
\kuhntwosidedsumdisc*
\begin{proof}
    The treeplex constraints guarantee that
    \begin{align*}
        x^j_{i,k_i^j,\alpha_i^j}&=\gamma^j(\text{assignment with }k_i^j\text{ soldiers placed in battlefield } i) \cdot \delta^j_{i,k_i^j}(\alpha_i^j) \\
        &= \sum_{{k^j\in\mathcal K^j}\atop{k^j \ni k_i^j\text{ fixed}}} \gamma^j(k^j)\cdot \delta_{i,k_i^j}^j(\alpha_i^j).
    \end{align*}
    We proceed by rearranging the formula for computing the expected utility given distributions $\gamma^j\in\Delta(\mathcal K^j)$ over soldiers assignments and $\delta_i^j\in \Delta(\mathcal A_i^j)$ over subgame actions. The overall utility can be written as
    \begin{align*} 
        &\sum_{i\in[n]} \sum_{k^1\in\mathcal K^1}\sum_{k^2\in \mathcal K^2} \gamma^1(k^1)\gamma^2(k^2)  \biggl(\sum_{\alpha_i^1\in \mathcal A_i^1} \sum_{\alpha_i^2\in \mathcal A_i^2}\delta^1_{i,k_i^1}(\alpha_i^1)\delta^2_{i,k_i^2}(\alpha_i^2) u_i(\alpha_i^1,\alpha_i^2,k_i^1,k_i^2) \biggr) \\
        &=\sum_{i\in[n]} \sum_{k_i^1\in [m^1]}\sum_{k_i^2\in [m^2]} \biggl[\sum_{\alpha_i^1,\alpha_i^2} u_i(\alpha,k)  \biggl( \sum_{{k^1\in\mathcal K^1}\atop{k^1\ni k_i^1\text{ fixed}}} \gamma^1(k^1)\cdot \delta_{i,k_i^1}^1(\alpha_i^1)\biggr) \biggl( \sum_{{k^2\in\mathcal K^1}\atop{k^2\ni k_i^2\text{ fixed}}} \gamma^2(k^2)\cdot \delta_{i,k_i^2}^2(\alpha_i^2)\biggr) \biggr] \\
        &=\sum_{i\in[n]} \sum_{\alpha_i^{1} \in \mathcal{A}^{1}_i} \sum_{\alpha_i^{2} \in \mathcal{A}^{2}_i}
        \sum_{k_i^{1}\in[m^{1}]}
        \sum_{k_i^{2}\in[m^{2}]}
        u_i (\alpha_i^{1}, \alpha_i^{2}, k^{1}_i, k^{2}_i) \cdot x^{1}_{i,k_i^{1},\alpha_i^{1}}
    \cdot x^{2}_{i,k_i^{2},\alpha_i^{2}},
    \end{align*}
    where the equality follows form regrouping by battlefield allocations: since $k_i^j$ is the number of soldiers assigned to battlefield $i$ in assignment $k^j$, we can rewrite the sums over full assignments by grouping together those assignments that share the same $k_i^j$.
\end{proof}

\subsection{Proof of~\Cref{prop:comp-2sided-sum-disc}}
\comptwosidedsumdisc*
\begin{proof}
    The representation of $\Gamma^j$ is polynomial in the size of the input. To see this, note that for each battlefield $i$, there are on the order of $(m^j+1)^2$ the flow variables $h^j$. Moreover, for each battlefield $i$ and each possible soldier allocation $k_i^j$ (ranging from 0 to $m^j$), there exists a variable $x^j_{i,k_i^j,\alpha_i^j}$ for each action $\alpha_i^j\in \mathcal A_i^j$. This gives roughly $(m^j+1) a_i^j$ variables $x^j$ per battlefield. Overall, the total number of variables is roughly $ O(n((m^j+1)^2+(m^j+1)a_i^j))$. The number of constraints also scales in a similar polynomial manner in terms of $n,m,i$ and $a_i^j$. Therefore, the polytope $\Gamma^j$ has a polynomial-size representation, and the resulting LP formulation of the two-level Blotto game can be solved in polynomial time.
\end{proof}

\subsection{Proof of~\Cref{thm:non-exist-2sided-min-disc}}
\existtwosidedmindisc*
\begin{proof}
Consider a two-sided discrete two-level Blotto game with two battlefields. 
In each battlefield, both players have a single action and share the same utility function defined in terms of the (discrete) soldiers assignments of the players:
\begin{equation*}
    u_i(\emptyset, \emptyset, k^{1}_i, k^{2}_i) = \frac{k^1_i+1}{k^2_i+1}, \quad i\in\{1,2\}.
\end{equation*}
Player 1 seeks to maximize their utility, while Player 2 aims to minimize it, making the utility function increasing in each player's respective allocation. Suppose that both players have 2 soldiers, i.e. $k^j_i\in\{0,1,2\}, \,i, j\in\{1,2\}$.

When Players 1 and 2 use a mixed strategy $\gamma^1$ and $\gamma^2$ respectively over soldiers assignments, the expected utility on battlefield $i$ is \[
\mathbb E[u_i]=\sum_{k^1\in \mathcal K^1}\sum_{k^2\in \mathcal K^2} \gamma^1(k^1)\gamma^2(k^2)u_i(k^1_i,k^2_i).
\]
The overall utility is then
\[
U=\min\bigl\{\mathbb E[u_1(k^1_1,k^2_1)],\,\mathbb E[u_2(k^1_2,k^2_2)] \bigr\}.
\]
We show that the max-min and min-max strategies differ, and therefore no Nash equilibrium exists in this game.

\paragraph{Min-max and max-min values} We now compute the min-max and max-min values of the game. For simplicity, we let $k^j=k^j_1$ be the number of soldiers Player $j$ allocates to battlefield 1, which implies that $2-k^j$ soldiers are Player $j$'s soldiers on battlefield 2.

\textit{Min-max value.}  Let $\gamma^2$ be arbitrary, and denote \[
t_{\ell}=\gamma^2({\ell})=\mathbb P[k^2={\ell}], \qquad {\ell}\in\{0,1,2\},\, \sum_{{\ell}=0}^2 t_{\ell}=1.
\]
We compute payoffs for every allocation $k^1\in\{0,1,2\}$ of Player 1 to battlefield 1. For each $k^1$, define $U_t(k^1)=\min\{\mathbb E[u_1],\,\mathbb E[u_2]\}$, where 
\begin{align*}
&\mathbb E[u_1] = \sum_{{\ell}=0}^2 t_{\ell}\, u_1(k^1,{\ell}) = \sum_{\ell=0}^2 \, t_{\ell} \,\frac{k^1+1}{\ell+1} = (k^1+1) \,\alpha, \quad \text{where } \alpha = t_0 + \tfrac12 t_1 + \tfrac13 t_2.\\
&\mathbb E[u_2] = \sum_{\ell=0}^2 t_{\ell}\,u_2(2-k^1,2-\ell) = \sum_{\ell=0}^2 t_{\ell}\,\frac{3-k^1}{3-\ell} = (3-k^1) \,\beta, \quad \text{where } \beta = \tfrac13 t_0 + \tfrac12 t_1 + t_2.
\end{align*}
One checks that:\[
U_t(0) = \min\{\alpha, 3\beta\}, \qquad U_t(1)=\min\{2\alpha, 2\beta\}, \qquad U_t(2) = \min\{3\alpha,\beta\}
\]
Player 1's best response value is $\phi(t_0,t_1,t_2) = \max_{k^1\in\{0,1,2\}} U_t(k^1)$. This function is minimized at $(t_0,t_1,t_2)=(0.6,0.4,0)$, which gives $\min \phi(t_0,t_1,t_2) = 0.8$, implying that the min-max value is 0.8.

\textit{Max-min value.} Let $\gamma^1$ be arbitrary, and denote \[
p_{\ell}=\gamma^1({\ell})=\mathbb P[k^1={\ell}], \qquad {\ell}\in\{0,1,2\},\, \sum_{{\ell}=0}^2 t_{\ell}=1.
\]
We compute payoffs for every allocation $k^2\in\{0,1,2\}$ of Player 2 to battlefield 1. For each $k^2$, define $U_t(k^2)=\min\{\mathbb E[u_1],\,\mathbb E[u_2]\}$, where 
\begin{align*}
&\mathbb E[u_1] = \sum_{{\ell}=0}^2 p_{\ell}\, u_1({\ell},k^2) = \sum_{\ell=0}^2 \, p_{\ell} \, \frac{\ell+1}{k^2+1} = \frac{\alpha}{k^2+1}, \quad \text{where } \alpha = p_0 + 2 p_1 + 3 p_2.\\
&\mathbb E[u_2] = \sum_{\ell=0}^2 p_{\ell}\,u_2(2-\ell,2-k^2) = \sum_{\ell=0}^2 p_{\ell}\,\frac{3-\ell}{3-k^2} = \frac{\beta}{3-k^2}, \quad \text{where } \beta = 3 p_0 + 2 p_1 + p_2.
\end{align*}
One checks that:\[
U_t(0) = \min\{\alpha, \tfrac13\beta\}, \qquad U_t(1)=\min\{\tfrac12\alpha, \tfrac12\beta\}, \qquad U_t(2) = \min\{\tfrac13\alpha,\beta\}
\]
Player 2's best response value is $\phi(p_0,p_1,p_2) = \min_{k^2\in\{0,1,2\}} U_t(k^2)$. This function is maximized at $(p_0,p_1,p_2) = (\tfrac12,0,\tfrac12)$, which gives $\max \phi(p_0,p_1,p_2) = \tfrac23$, implying that the max-min value is $\tfrac23$.
\end{proof}

\subsection{Proof of~\Cref{thm:kuhn-one-sided-disc-min}}
\kuhnonesidedmindisc* 
\begin{proof}    
    We rearrange the formula for computing the expected utility given distributions $\gamma^2\in\Delta(\mathcal K^2)$ over soldiers assignments for Player 2 and $\delta_i^j\in \Delta(\mathcal A_i^j), j\in\{1,2\}$ over subgame actions for Players 1 and 2.  
    \begin{align}
        U&=\min_{i\in[n]} \sum_{k^2\in\mathcal K^2} \gamma^2(k^2) \mathbb E[u_i(\delta^1_i,\delta^2_{i,k_i^2},k_i^2)] \\
        &=\min_{p\in\Delta(n)} \sum_{i\in [n]} p_i \biggl(\sum_{k^2\in\mathcal K^2} \gamma^2(k^2) \mathbb E[u_i(\delta_i^1,\delta_{i,k_i^2}^2,k_i^2)]\biggr)\label{eq2:one-sided-min-corr-thm} \\
        &= \min_{p\in\Delta(n)} \sum_{i\in[n]} \sum_{k_i^2 \in [m^2]} \sum_{\alpha_i^2\in\mathcal A_i^2} p_i \mathbb E_{\alpha_i^1 \sim \delta^1_i} [u_i(\alpha_i^1,\alpha_i^2,k_i^2)] \biggl( \sum_{{k^2\in\mathcal K^2}\atop{k^2\ni k_i^2\text{ fixed}}} \gamma^2(k^2)\cdot \delta^2_{i,k_i^2}(\alpha_i^2)\biggr)\label{eq3:one-sided-min-corr-thm} \\
        &= \min_{p\in\Delta(n)} \sum_{i\in[n]} \sum_{\alpha_i^2\in \mathcal A_i^2} \sum_{k_i^2 \in [m^2]} p_i \mathbb E_{\alpha_i^1 \sim \delta^1_i} [u_i(\alpha_i^1,\alpha_i^2,k_i^2)] x^2_{i,k_i^2,\alpha_i^2},
    \end{align} 
    where equality \eqref{eq2:one-sided-min-corr-thm} follows from convexifying the minimum, and equality \eqref{eq3:one-sided-min-corr-thm} follows from regrouping by battlefield allocations for Player 2. 
\end{proof}

\subsection{Proof of~\Cref{thm:sion-one-sided-disc-min}}
\siononesidedmindisc*
\begin{proof}
    Consider the utility formulation from~\Cref{thm:kuhn-one-sided-disc-min}. Then the min-max problem can be written as 
    \begin{align}
    \min_{\delta^{1}}\max_{\delta^2,\gamma^2}&\min_{i\in[n]} \mathbb E[u_i(\alpha_i^1,\alpha_i^2,k_i^2)] \label{eq:thm5-min-max} \\
    &=\min_{\delta^{1}}\max_{x^2}\min_{p\in \Delta(n)}  \sum_{i\in [n]}\sum_{\alpha_i^2\in \mathcal A_i^{2}}\sum_{k_i^2\in [m^{2}]}p_i x^2_{i,k_i^2,\alpha_i^2} \mathbb E_{\alpha_i^1 \sim \delta^1_i}[u_i(\alpha_i^{1},\alpha_i^2,k_i^2)]. 
    \end{align}
    Since the objective is bilinear in $p,x^2$ which are defined over convex and compact sets, we can exchange the inner $\max$ and $\min$ by Sion's minimax theorem. We can now merge the outer two minimizations by defining the standard sequence form over a two-level game where Player 1 first chooses a battlefield and then plays an action in that battlefield, using $y^1({i,\alpha^1_i})$ to denote the product $p_i\delta_i^{1}(\alpha^1_i)$. Therefore, we get:
    \begin{equation*}
    \begin{aligned}
    &\min_{\delta^{1}}\max_{x^2}\min_{p\in \Delta(n)}  \sum_{i\in [n]}\sum_{\alpha_i^2\in \mathcal A_i^{2}}\sum_{k_i^2\in [m^{2}]}p_i x^2_{i,k_i^2,\alpha_i^2} \mathbb E_{\alpha_i^1 \sim \delta^1_i}[u_i(\alpha_i^{1},\alpha_i^2,k_i^2)] \\ &=
    \min_{\delta^1,p}\max_{x^2}\sum_{i\in [n]}\sum_{\alpha^{1}_i\in \mathcal A_i^{1}}\sum_{\alpha^{2}_i\in \mathcal A_i^{2}}\sum_{k_i^{2}\in [m^{2}]} p_i \delta^1_i(\alpha_i^1) x^2_{i,k_i^{2},\alpha_i^{2}} u_i(\alpha_i^{1}, \alpha_i^{2},k_i^{2})
    \\ &= \min_{y^1}\max_{x^2}\sum_{i\in [n]}\sum_{\alpha^{1}_i\in \mathcal A_i^{1}}\sum_{\alpha^{2}_i\in \mathcal A_i^{2}}\sum_{k_i^{2}\in [m^{2}]}y^1({i,\alpha_i^{1}}) x^2_{i,k_i^{2},\alpha_i^{2}}u_i(\alpha_i^{1}, \alpha_i^{2},k_i^{2}).
\end{aligned}
\end{equation*}
Since the objective is now bilinear, we apply Sion's minimax theorem to exchange the outer $\min$ and $\max$. Next, we disaggregate $y^1$ into $p,\delta^{1}$ and $x^2$ into $\delta^{2},\gamma^2$. This yields the following formulation of the min-max formulation in \eqref{eq:thm5-min-max}:
\begin{align*}
  & \max_{\delta^{2}} \max_{\gamma^2}\min_{\delta^{1}}\min_{p\in \Delta(n)} \sum_{i\in[n]} \sum_{k^2\in\mathcal K^2} \gamma^2(k^2) \biggl(\sum_{\alpha_i^1\in\mathcal A_i^1}\sum_{\alpha_i^2\in\mathcal A_i^2} p_i\delta_i^1(\alpha_i^1)\delta_{i,k_i^2}^2(\alpha_i^2) u_i(\alpha_i^1,\alpha_i^2,k_i^2) \biggr)   \\ &=\max_{\delta^2,\gamma^2}\min_{\delta^{1}}\min_{i\in [n]} \mathbb E[u_i(\alpha_i^{1},\alpha_i^{2},k_i^2)],
\end{align*}
which completes the proof.
\end{proof}

\subsection{Proof of~\Cref{cor:comp-one-sided-min-disc}}
\componesidedmindisc*
\begin{proof}
    The existence of a Nash equilibrium follows from the minimax theorem. 
    The representation of $\mathcal P$ is polynomial in the size of the input: there are $1+n+\sum_{i=1}^n|\mathcal A_i^1|$, and a total of $2+n+\sum_{i=1}^n\mathcal A_i^1$ constraints.  
    Since both polytopes $\Gamma^2$ and $\mathcal P$ have polynomial-size representations, and since LPs can be solved in polynomial time, then computing a Nash equilibrium of this two-level Blotto game can be done in polynomial time.
\end{proof}

\subsection{Proof of~\Cref{thm:exist-two-sided-cont}}
\existtwosidedcont*
\begin{proof}
    Consider two battlefields, each featuring a trivial game where both players have a single action and share the same utility function which we define as a function of the (continuous) number of soldiers $\contsoldiers^{1},\contsoldiers^{2}$ assigned by Players 1 and 2 respectively:
\begin{equation*}
    u(\emptyset, \emptyset, \contsoldiers^{1}, \contsoldiers^{2}) = \begin{cases}
        \contsoldiers^{1} - \contsoldiers^{2} & \text{if}~\contsoldiers^{1} \ge \contsoldiers^{2}\\
        0 & \text{otherwise}.
    \end{cases}
\end{equation*}
The first player seeks to maximize their utility, while the second player aims to minimize it, making the utility function increasing in each player's respective allocation $\contsoldiers^{j}, j\in\{1,2\}$.
Suppose the first player has 2 soldiers and the second player has 1 soldier. We analyze the existence of an equilibrium under the two aggregation methods \textit{(note that the ``otherwise'' condition in the utility definition is essential for the validity of the analysis).} 

\paragraph{Sum-aggregated continuous two-level Blotto game}

We show that in the continuous two-level Blotto game with sum aggregator, the min-max and max-min values of the game differ. As a result, no pure-mixed Nash equilibrium exists.

\begin{itemize}
\item \textit{Max-min value.} 
Fix any pure assignment $\contsoldiers^1\in \Sigma^1 = (a,2-a)$. Player 2 can place his 1 soldier entirely on the battlefield $i\in\{1,2\}$ where $\contsoldiers_i^1$ is largest: 
\begin{itemize}
    \item Play (1,0) when $a \ge 2-a$. In this case $a \ge 1$, which implies $u_1=a-1$ and $u_2=2-a$.
    \item Play (0,1) when $a < 2-a$. In this case $a < 1$, which implies that $u_1=a$ and $u_2=1-a$.
\end{itemize}
Thus for every pure $\contsoldiers^1$, the worst‐case payoff is 1, so $\max_{\sigma^1 \in \Sigma^1} \min_{\sigma^2\in\Sigma^2} u(\contsoldiers^1,\contsoldiers^2) = 1$. 

\item \textit{Min-max value.} 
Fix any pure $\contsoldiers^2\in \Sigma^2$. Player 1’s best response is to place all 2 soldiers on the battlefield where $\contsoldiers^2_i$ is smallest, giving \[
\max_{\contsoldiers^1\in \Sigma^1} u(\contsoldiers^1,\contsoldiers^2) = 2-\min(\contsoldiers^2_1,\contsoldiers^2_2). 
\]
Player 2 then chooses $\contsoldiers^1$ to maximize $\min(\contsoldiers^2_1,\contsoldiers^2_2)$ under $\contsoldiers^2_1+\contsoldiers^2_2=1$. This is achieved by $\contsoldiers^2_1=\contsoldiers^2_2=1/2$. Hence, $\min_{\contsoldiers^2\in\Sigma^2} \max_{\contsoldiers^1 \in \Sigma^1}u(\contsoldiers^1,\contsoldiers^2)=2-1/2=3/2$.
\end{itemize}

\paragraph{Min-aggregated continuous two-level Blotto game} Similarly, we show that the max-min and min-max values of the continuous two-level Blotto game also differ under the min aggregator, which precludes the existence of a PPNE.  

\begin{itemize}
    \item \textit{Max-min value.}
No matter how Player 1 places his 2 soldiers, one of the two fields has at most 1 soldier. Then Player 2 allocates his soldier entirely to that weaker battlefield, driving its utility to zero. Hence, for every pure $\contsoldiers^1$, the worst‐case payoff is 0, so the max-min value of the game is 0.

\item \textit{Min-max value.} 
Fix an allocation $\contsoldiers^2\in \Sigma^2$. Player 1 wants to choose $(\contsoldiers^1_1,\contsoldiers^1_2)$ to maximize $\min\{\contsoldiers^1_1-\contsoldiers^2_1, \, \contsoldiers^1_2-\contsoldiers^2_2\}$. The optimum is given by equalizing the two payoffs, subject to  $\contsoldiers^1_1+\contsoldiers^1_2=2$. One finds the optimum $\contsoldiers^1_1= 1+\frac{\contsoldiers^2_1-\contsoldiers^2_2}{2}$ and $\contsoldiers^1_2= 1-\frac{\contsoldiers^2_1-\contsoldiers^2_2}{2}$, which yields $\max_{\contsoldiers^1 \in \Sigma^1} \min u =1/2$ for every choice of $\contsoldiers^2$. It follows that the min-max value of the game is $1/2$.
\end{itemize}
\end{proof}

\subsection{Proof of~\Cref{prop:exist-one-sided-sum-cont}}
\existonesidedsumcont*
\begin{proof}
    Consider a one-sided continuous two-level Blotto game where Player 1 has two (continuous) soldiers that she assigns to battlefields 1 and 2. We let \( \contsoldiers_1 \in [0,2] \) be the allocation to Battlefield 1, and \( \contsoldiers_2 = 2 - \contsoldiers_1 \) to Battlefield 2.  Only Battlefield 1 features a $(2\times 2)$ normal form subgame $G_1$ with payoff matrix \[
    \begin{bmatrix}
    \contsoldiers_1^2 & 2 \\
    0 & 0
    \end{bmatrix}
    \]
    where Player 1 plays Top with probability $x$ and Bottom with probability $1-x$, and Player 2 plays Left with probability $y$ and Right with probability $1-y$. 
    We assume that Player 1 is the maximizer and Player 2 is the minimizer. Given the payoff matrix, the utility from Battlefield 1 can be written as \[u_1(x, y, \contsoldiers_1) = x (y \contsoldiers_1^2 + 2(1 - y)),\]
    and that from Battlefield 2 is given by \[
    u_2(\contsoldiers_2) = \contsoldiers_2 = 2-\contsoldiers_1.
    \]
    Under the sum aggregator, the total utility is
    \[
    U =  u_1(x, y, \contsoldiers_1) + u_2(\contsoldiers_2).
    \]
We show that a Nash equilibrium of this game fails to exist. 

First, we show that Player 1 always plays Top. By simply looking at the payoffs, we can see that for Player 1, playing Top weakly dominates playing Bottom. 
Moreover, Player 1 would play Bottom with non-zero probability only if Player 2 plays Left, i.e. $y=1$, and $\contsoldiers_1=0$. However, for the case where $y=1$, Player 1's best response would be to play $\contsoldiers_1>1$ and $x=1$. Otherwise, she would achieve a strictly lower utility. Hence, Player 1 never plays Bottom in the equilibrium and so $x=1$. 
    
Next, we analyze the best response of Player 2 to any fixed strategy of Player 1. Let us  fix any \( \contsoldiers_1 \in [0,2] \). Let $A(y) = y \contsoldiers_1^2 + 2(1 - y)$, so that $u_1(x, y, \contsoldiers_1) = x A(y)$. Since Player 1 will choose \( x = 1 \) at equilibrium, it follows that Player 2's goal is to minimize $u = A(y) + u_2(\contsoldiers_1)$. Notice that
\begin{itemize}
    \item if \( \contsoldiers_1^2 > 2 \), then \( A(y) \) is increasing in \( y \), so Player 2 plays \( y = 0 \)
    \item if \( \contsoldiers_1^2 < 2 \), then \( A(y) \) is decreasing in \( y \), so Player 2 plays \( y = 1 \)
    \item if \( \contsoldiers_1^2 = 2 \), Player 2 is indifferent, so $y\in[0,1]$. 
\end{itemize}

We now argue that there is no Player 2 strategy to which $\contsoldiers_1^2=2$ would be a best response to.
Suppose, for contradiction, that there exists a strategy $y \in [0,1]$ for Player 2 such that $\contsoldiers_1 = \sqrt{2}$ is a best response for Player 1. Then $\contsoldiers_1$ must maximize the function
\[
u(\contsoldiers_1, y) = y(\contsoldiers_1^2 - 2) + 4 - \contsoldiers_1
\]
over the interval $\contsoldiers_1 \in [0,2]$. We differentiate with respect to $\contsoldiers_1: \frac{d}{d\contsoldiers_1} u = 2y \contsoldiers_1 - 1:=d(\contsoldiers_1,y)$. At $\contsoldiers_1=\sqrt 2$, notice that $d(y)$ is increasing if $y > \frac{1}{2\sqrt{2}}$ and decreasing if $y < \frac{1}{2\sqrt{2}}$. When $y = \frac{1}{2\sqrt{2}}$, $\frac{d^2}{d\contsoldiers_1^2} u = \frac{1}{\sqrt 2} > 0$ and $\contsoldiers_1 = \sqrt{2}$ is a local minimum.
Therefore, for no $y \in [0,1]$ is $\contsoldiers_1 = \sqrt{2}$ a best response for Player 1. It follows that Player 2 always plays a pure strategy, i.e. $y\in\{0,1\}$.

We now analyze Player 1's best response to pure strategies of Player 2. If  $y = 0$, then Player 1's utility is strictly decreasing in $\contsoldiers_1$, so his best response is $\contsoldiers_1 = 0$, which satisfies $\contsoldiers_1^2 < 2$. If $y = 1$, then Player 1's utility is increasing in $\contsoldiers_1$ on $[1/2,2]$. Hence, Player 1 will always choose $\contsoldiers_1=2$, which satisfies $\contsoldiers_1^2 > 2$, as a best response to $y=1$, since this maximizes his utility on the feasible interval $[0,2]$.
These best responses induce the following cycle:
\[
(y = 1,~ \contsoldiers_1 > \sqrt{2}) \to 
(y = 0,~ \contsoldiers_1 < \sqrt{2}) \to 
(y = 1,~ \contsoldiers_1 > \sqrt{2}) \to \cdots
\]
Thus, no fixed point is reached, and no mutual best response pair exists. It follows that no Nash equilibrium exists in this game.

\paragraph{Max-min and min-max values}
We now compute the max-min and min-max values of the game.

\textit{Max-min value.}
The max-min value is defined as:
\[
\max_{\contsoldiers_1 \in [0,2], x\in[0,1]} \min_{y \in [0,1]} \left[ u_1(1, y, \contsoldiers_1) + u_2(2 - \contsoldiers_1) \right]
\]
Since Player 1 always plays $x = 1$, the expression simplifies to \[\max_{\contsoldiers_1 \in [0,2]} \min_{y \in [0,1]} \left[ y (\contsoldiers_1^2 - 2) + 4 - \contsoldiers_1 \right].\]
When Player 2 is best responding, Player 1 solves the following maximization problem over $\contsoldiers_1$: 
\[
\max_{\contsoldiers_1 \in [0,2]}
\begin{cases}
\contsoldiers_1^2 + 2 - \contsoldiers_1 & \text{if } \contsoldiers_1^2 < 2 \\
4 - \contsoldiers_1 & \text{if } \contsoldiers_1^2 \geq 2
\end{cases}
\]
It follows that the max-min value is $4 - \sqrt{2}$, achieved at $\contsoldiers_1 = \sqrt{2}$ and $x = 1$.

\textit{Min-max value.} The minmax value is defined as:
\[
\min_{y \in [0,1]} \max_{\contsoldiers_1 \in [0,2]} \left[ y (\contsoldiers_1^2 - 2) + 4 - \contsoldiers_1 \right]
\]
When Player 1 is best responding, Player 2 solves the following minimization problem over $y$:
\[
\min_{y \in [0,1]}\biggl(\max_{x,\contsoldiers_1} xy\contsoldiers_1^2-2y-\contsoldiers_1+4\biggr)
\]
The minimum value is 3, attained when $y=1/2$. Therefore, the min-max value is 3.
\end{proof}

\subsection{Proof of~\Cref{thm:sion-one-sided-cont-sum}}
\siononesidedsumcontlinear*
\begin{proof}
We can write the min-max problem as:
\begin{align}
    &\max_{\delta^{2},\sigma^2}\min_{\delta^{1}} \sum_{i\in[n]} \mathbb{E}[u_i(\alpha_i^{1},\alpha_i^{2},\contsoldiers_i^2)] \\ &= \max_{\delta^2,\contsoldiers^2}\min_{\delta^{1}} \sum_{i\in[n]} \biggl(\sum_{\alpha_i^1\in \mathcal A_i^1} \sum_{\alpha_i^2\in \mathcal A_i^2} \delta_{i}^1(\alpha_i^1)\delta^2_{i,\contsoldiers_i^2}(\alpha_i^2) c_i\cdot\contsoldiers_i^2 \biggr) \label{eq2:cont-sum-one-sided} \\
    &= \max_{y^2}\min_{\delta^1}\sum_{i\in[n]} \sum_{\alpha_i^1\in \mathcal A_i^1} \sum_{\alpha_i^2\in \mathcal A_i^2} \delta^1_i(\alpha_i^1)y^2_{\contsoldiers_i^2}(i,\alpha_i^2) c_i \label{eq3:cont-sum-one-sided} \\
    &= \min_{\delta^{1}} \max_{y^2} \sum_{i\in[n]} \sum_{\alpha_i^1\in \mathcal A_i^1} \sum_{\alpha_i^2\in \mathcal A_i^2} \delta^1_i(\alpha_i^1)y^2_{\contsoldiers_i^2}(i,\alpha_i^2) c_i \label{eq4:cont-sum-one-sided} \\
    &= \min_{\delta^{1}} \max_{\delta^2,\contsoldiers^2} \sum_{i\in[n]} \sum_{\alpha_i^1\in \mathcal A_i^1} \sum_{\alpha_i^2\in \mathcal A_i^2} \delta^1_i(\alpha_i^1) \delta^2_i(\alpha_i^2)\contsoldiers^2_i c_i\label{eq5:cont-sum-one-sided} \\
    &= \min_{\delta^{1}} \max_{\delta^2,\contsoldiers^2} \sum_{i\in[n]}\mathbb{E}[u_i(\alpha_i^{1},\alpha_i^{2},\contsoldiers_i^2)]   
\end{align}
where equation \eqref{eq3:cont-sum-one-sided} follows from merging the variables $\contsoldiers^2,\delta^2$ into a sequence-form variable $y^2_{\contsoldiers^2}$ defined over the flow polytope $\mathcal Q$ (see~\Cref{sect:cont-sum} for more details).
Since the objective function is now  bilinear in $y^2$ and $\delta^1$, we can exchange the inner $\min$ and $\max$ using Sion's minimax theorem (eq. \eqref{eq4:cont-sum-one-sided}). Finally, equation \eqref{eq5:cont-sum-one-sided} follows from and  dissagregating $y^2$ back into variables $\contsoldiers^2$ and $\delta^2$.
This completes the proof.
\end{proof}

\subsection{Proof of~\Cref{thm:exist-one-sided-min-cont}}
\existonesidedmindisc*
\begin{proof}
    Consider a one-sided continuous two-level Blotto game where Player 1 has two (continuous) soldiers that he assigns to battlefields 1 and 2. We let \( \contsoldiers_1 \in [0,2] \) be the allocation of Player 1 to Battlefield 1, and \( \contsoldiers_2 = 2 - \contsoldiers_1 \) to Battlefield 2.  Only Battlefield 1 features a $(2\times 2)$ normal form subgame $G_1$ with payoff matrix \[
    \begin{bmatrix}
    1\cdot \textbf{1}[\contsoldiers_1<1] & 0 \\
    0 & (\contsoldiers_1)^2\cdot \textbf{1}[\contsoldiers_1\geq 1]
    \end{bmatrix}.
    \]
    Player 1 plays Top with probability $x$ and Bottom with probability $1-x$, and Player 2 plays Left  with probability $y$ and Right  with probability $1-y$. 
    We assume that Player 1 is the maximizer and Player 2 is the minimizer. Given the payoff matrix, the utility from Battlefield 1 can be written as \[
    u_1(\contsoldiers_1,x,y) =
    \begin{cases}
    x \cdot y &  \contsoldiers_1 < 1 \\
    (\contsoldiers_1)^2\cdot (1-x)(1-y) & \contsoldiers_1 \geq 1,
    \end{cases}
    \]
    and that from Battlefield 2 is given by \[
    u_2(\contsoldiers_1) = 2-\contsoldiers_1.
    \]
    Under the min aggregator, the total utility is
    \[
   U(\contsoldiers_1, x, y) = \min \bigl\{u_1(\contsoldiers_1, x, y), u_2(\contsoldiers_1)\bigr\}.
    \]
We show that a PPNE of this game fails to exist. We first analyze Player 2's best response strategy. Note that for any fixed $(\contsoldiers_1,x)$, we have
\[
U(\contsoldiers_1,x,y)
=
\begin{cases}
\min\{x\,y,\;2-\contsoldiers_1\} = x\,y, & \contsoldiers_1<1 \\
\min\{\contsoldiers_1^2(1-x)(1-y),\;2-\contsoldiers_1\}=\contsoldiers_1^2(1-x)(1-y), & \contsoldiers_1\ge1.
\end{cases}
\]
 If $\contsoldiers_1<1$, $U(\contsoldiers_1,x,y)=x\,y$ is strictly increasing in $y$, so Player 2’s unique minimizer is 
    \[
    y^*(\contsoldiers_1,x)=0.
    \]
If $\contsoldiers_1\ge1$, $U(\contsoldiers_1,x,y)=\contsoldiers_1^2(1-x)(1-y)$ is strictly decreasing in $y$, so Player 2’s unique minimizer is 
    \[
    y^*(\contsoldiers_1,x)=1.
    \]
Hence at any Nash equilibrium we must have
\[
y^* =
\begin{cases}
0,&\contsoldiers_1^*<1,\\
1,&\contsoldiers_1^*\ge1.
\end{cases}
\]
Assume now that $(\contsoldiers_1^*,x^*,y^*)$ is a Nash equilibrium.  We consider the two cases forced by Player 2’s best response.
\begin{itemize}
    \item \textit{Case 1:} $\contsoldiers_1^*<1$.  Then $y^*=0$, and $U(\contsoldiers_1^*,x^*,0)=\min\{x^*\cdot 0,\;2-\contsoldiers_1^*\}=0.$
    But if Player 1 deviates to $(\tilde \contsoldiers_1 = 1, \tilde x = 0),$ then $u_1(1,0,0) =1$ and $u_2(1)=1$, implying that $U(1,0,0)=1>0.$
    Thus Player 1 strictly improves, contradicting that $(\contsoldiers_1^*,x^*,0)$ is a Nash equilibrium.

    \item \textit{Case 2:} $\contsoldiers_1^*\ge1$.  Then $y^*=1$, and $U(\contsoldiers_1^*,x^*,1) =\min\{\contsoldiers_1^{*2}(1-x^*)\cdot 0,2-\contsoldiers_1^*\}=0.$
    However, if Player 1 deviates to $(\hat \contsoldiers_1 = 0,\hat x = 1)$, then $u_1(0,1,1)=1$, and $u_2(0)=2$, implying that $U(0,1,1)=\min\{1,2\}=1>0$.
    Again, Player 1 can make a strict improvement by deviating, which contradicts the NE assumption. 
\end{itemize}

    In both cases, Player 1 can make a profitable deviation.  Therefore no triple \((\contsoldiers_1^*,x^*,y^*)\) can be a Nash equilibrium.
    
\paragraph{Max-min and min-max values}

We now compute the max-min and min-max values of the game.

\textit{Max-min value.}
The max-min value is defined as:
\[
\max_{\contsoldiers_1 \in [0,2], x\in[0,1]} \min_{y \in [0,1]} U(\contsoldiers_1,x,y)
\]

When Player 2 is best responding, we obtain for any fixed \((\contsoldiers_1,x)\)
\[
\min_{y}U(\contsoldiers_1,x,y)
=\begin{cases}
\min_{y}\{\,x\,y,\;2-\contsoldiers_1\}= 0,&\contsoldiers_1<1,\\
\min_{y}\{\,\contsoldiers_1^2(1-x)(1-y),\;2-\contsoldiers_1\}
=0,&\contsoldiers_1\ge1.
\end{cases}
\]
Hence $\max_{\contsoldiers_1,x}\min_{y}U(\contsoldiers_1,x,y) = 0$, and therefore the max-min value 0.

\textit{Min-max value.} The min-max value is defined as:
\[
\min_{y \in [0,1]} \max_{\contsoldiers_1 \in [0,2]} U(\contsoldiers_1,x,y)
\]
Fix a strategy \(y\) of Player 2.  Player 1’s inner maximization splits into two regions:
\begin{itemize}
    \item If \(\contsoldiers_1<1\), then \(u_1(\contsoldiers_1,x,y)=x\,y\) and \(u_2=2-\contsoldiers_1>1\), implying that $U(\contsoldiers_1,x,y)=\min\{x\,y,2-\contsoldiers_1\}=x\,y,$ whose maximum over \(x\) is achieved at \(x=1\), i.e. $\max_x x\cdot y=y$ for any $\contsoldiers_1<1$.

    \item If \(\contsoldiers_1\ge1\), then \(u_1(\contsoldiers_1,x,y) = \contsoldiers_1^2(1-x)(1-y)\) and \(u_2=2-\contsoldiers_1\le1\).  The maximum over \(x\) occurs at \(x=0\), giving $U(\contsoldiers_1,0,y)
   = \min\{\contsoldiers_1^2(1-y),\,2-\contsoldiers_1\}.$
   To maximize this in \(\contsoldiers_1\ge1\) we solve
   \[
     \contsoldiers_1^2(1-y) = 2-\contsoldiers_1
     \quad\Longrightarrow\quad
     (1-y)\,\contsoldiers_1^2 + \contsoldiers_1 -2 =0,
   \]
   whose unique root in \([1,2]\) is
   \[
     \contsoldiers_1^*(y)
     =
     \frac{-1 + \sqrt{\,9-8y\,}}{2\,(1-y)}.
   \]
   At that \(\contsoldiers_1^*(y)\), the common value is $2 - \contsoldiers_1^*(y)$.
\end{itemize}
Putting the two regions together, 
$\max_{\contsoldiers_1,x}U(\contsoldiers_1,x,y)
=\max\Bigl\{\,y,\;2 - \contsoldiers_1^*(y)\Bigr\}.$
Therefore, the minmax problem can be written as 
\[
\min_{y\in[0,1]}\max_{\contsoldiers_1,x}U(\contsoldiers_1,x,y)
=\min_{y}\Bigl\{\max(y,\;2 - \contsoldiers_1^*(y))\Bigr\}.
\]
The unique minimizer \(y^*\in(0,1)\) solves $y = 2 - \contsoldiers_1^*(y)$. Replacing in the equality above for $\contsoldiers_1^*(y)$, we obtain the equation 
\[
y^4 -6y^3 +14y^2 -13y +4 =0,
\]
whose only root in \((0,1)\) is approximately $y^* \approx 0.647.$
At this value for \(y^*\), we obtain $\contsoldiers_1^* = \contsoldiers_1^*(y^*) \approx 1.355$ and $x^* = 0.$
Hence, the min-max value is $\min_{y}\max_{k,x}U(k,x,y)
= y^* \approx 0.647$.
\end{proof}

\subsection{Proof of~\Cref{thm:sion-one-sided-min-cont}}
\siononesidedmincont*
\begin{proof}
    We prove the minimax equality using the minimax theorem of Kneser and Fan presented in~\Cref{thm:kneser-fan}. The necessary definitions are stated under~\Cref{def:ccv-cvx-like}.

From the perspective of the minimizing player, the problem can be written as \[
\min_{\delta_1} \max_{\delta^2,\sigma^2} \min_{p\in\Delta(n)} \sum_{i\in [n]} \sum_{\alpha_i^1\in \mathcal A_i^1} \sum_{\alpha_i^2 \in \mathcal A_i^2} \delta_i^1(\alpha_i^2) \delta_{i,\contsoldiers_i^2}^2(\alpha_i^2)u_i(\alpha_i^1,\alpha_i^2,\contsoldiers_i^2),\]
where we convexify the objective function by rewriting the minimum over finitely many battlefields $i \in [n]$ as the minimum over convex combinations with respect to weight 
variables $p_i$. By deriving the dual of the inner minimization problem, we get \[\min_{\delta_1} \max_{\delta^2,\sigma^2} \max_{\lambda\in \mathbb R} \lambda \; \text{ s.t. } \;\lambda \leq \sum_{\alpha_i^1\in \mathcal A_i^1} \sum_{\alpha_i^2 \in \mathcal A_i^2} \delta_i^1(\alpha_i^1) \delta_{i,\contsoldiers_i^2}^2(\alpha_i^2)u_i(\alpha_i^1,\alpha_i^2,\contsoldiers_i^2)\; \forall i.
\]
Next, we define a compact space for $\lambda$. Since each $u_i(.)$ is continuous on the compact set $\Delta(\mathcal A_i^1) \times \Delta(\mathcal A_i^2) \times \Sigma^2,$ it attains a minimum and maximum.  Define
\[
  \underline u = \min_{i,\alpha_i^1,\alpha_i^2,\contsoldiers_i^2} u_i(\alpha_i^1,\alpha_i^2,\contsoldiers_i^2),
  \quad
  \overline u=\max_{i,\alpha_i^1,\alpha_i^2,\contsoldiers_i^2}u_i(\alpha_i^1,\alpha_i^2,\contsoldiers_i^2).
\]
Let $\lambda$ range over the closed interval $\Lambda = [\underline u,\;\overline u]$ which is compact in $\mathbb R$.  
Therefore, since each of $\Delta(\mathcal A_2), \Sigma^2,\Lambda$ is compact, their cartesian product, and hence the strategy space of the maximizing is compact. 

Next, letting $
T_i(\delta^1,\delta^2,\contsoldiers^2):=\sum_{\alpha_i^1\in \mathcal A_i^1} \sum_{\alpha_i^2 \in \mathcal A_i^2} \delta_i^1(\alpha_i^1)\delta_{i,\contsoldiers_i^2}^2(\alpha_i^2)u_i(\alpha_i^1,\alpha_i^2,\contsoldiers_i^2),$ we define \[
F\bigl(\delta^1,\delta^2,\contsoldiers^2,\lambda\bigr)
=
\begin{cases}
\;\lambda,
&\text{if }\lambda\le T_i(\delta^1,\delta^2,\contsoldiers^2)\ \forall i,\\
\;-\infty,
&\text{otherwise.}
\end{cases}
\]
Hence, our problem can be written as 
$ \min_{\delta^1}\max_{\delta^2,\contsoldiers^2,\lambda}
F(\delta^1,\delta^2,\contsoldiers^2,\lambda).$

We start by showing that $F$ is convexlike in $\delta^1$. Let $\delta^1_1,\delta^1_2\in\Delta(\mathcal A^1),\quad t\in(0,1)$,
and fix an arbitrary strategy of the maximizing player \((\delta^2,\contsoldiers^2,\lambda)\).  We must exhibit some
\(\tilde \delta^1\in\Delta(\mathcal A^1)\) such that
\[
F(\tilde \delta^1,\delta^2,\contsoldiers^2,\lambda)
\;\le\;
t\,F(\delta^1_1,\delta^2,\contsoldiers^2,\lambda)\;+\;(1-t)\,F(\delta^1_2,\delta^2,\contsoldiers^2,\lambda).
\]
There are two cases:
\begin{itemize}
  \item[\emph{(i)}] Both $\delta^1_1$ and $\delta^1_2$ are feasible, i.e.\ 
  $\lambda\le T_i(\delta^1,\delta^2,\contsoldiers^2)$ for all \(i\) and \(k=1,2\).  Let $\tilde \delta^1=t\delta^1_1+(1-t)\delta^1_2$ and note that $\tilde \delta^1 \in \Delta(\mathcal A^1)$.
  By linearity in $\delta^1$, we obtain
  \[
    T_i(\tilde \delta^1,\delta^2,\contsoldiers^2)
    = t\,T_i(\delta_1^1,\delta^2,\contsoldiers^2)+(1-t)\,T_i(x^2,y,\sigma_i)
    \ge \lambda,
  \]
  so $F(\tilde \delta^1,\delta^2,\contsoldiers^2,\lambda)=\lambda$ and $t\,F(\delta^1_1,\delta^2,\contsoldiers^2,\lambda)+(1-t)\,F(\delta^1_2,\delta^2,\contsoldiers^2,\lambda) = \lambda$, giving equality.
  
  \item[\emph{(ii)}] At least one of $\delta^1_1$ or $\delta^1_2$ is infeasible.  
  Without loss of generality, suppose $F(\delta^1_1,\delta^2,\contsoldiers^2,\lambda)=-\infty$.  Then
  \[ t\,F(\delta^1_1,\delta^2,\contsoldiers^2,\lambda)+(1-t)\,F(\delta^1_2,\delta^2,\contsoldiers^2,\lambda) = -\infty.
  \]
  Choose 
  $\tilde \delta^1 = \delta^1_1$.  Since $F(\tilde \delta^1,\delta^2,\contsoldiers^2,\lambda)=-\infty$, the inequality holds.
\end{itemize}
In both cases there exists $\tilde \delta^1\in\Delta(\mathcal A^1)$ with
$F(\tilde \delta^1,\delta^2,\contsoldiers^2,\lambda)\le t \,F(\delta^1_1,\delta^2,\contsoldiers^2,\lambda)+(1-t)\,F(\delta^1_2,\delta^2,\contsoldiers^2,\lambda),$
showing that $F$ is convexlike in $\delta^1$.

Next, we show that $F$ is concavelike in $(\delta^2,\contsoldiers^2,\lambda)$. Let 
$z_k=(\delta^2_k,\contsoldiers^2_k,\lambda_k), k=\{1,2\},$ and fix $t\in(0,1)$.  Define 
$\tilde \lambda = t\,\lambda_1 + (1-t)\,\lambda_2.$
We must exhibit some $\tilde z\in\Delta(\mathcal A^2) \times \Sigma^2 \times \Lambda$ such that for every $\delta^1$,
\[
t\,F(\delta^1,z_1)+(1-t)\,F(\delta^1,z_2)\;\le\;F(\delta^1,\tilde z).
\]
There are two cases:
\begin{itemize}
  \item[\emph{(i)}] Both $z_1$ and $z_2$ are feasible, i.e. $\lambda_k\le T_i(\delta^1,\delta^2_k,\contsoldiers^2_k)$ for all $i$ and $k=\{1,2\}$.  
  Then
  \[
    F(\delta^1,z_k)=\lambda_k,
    \quad
    t\,F(\delta^1,z_1)+(1-t)\,F(\delta^1,z_2)
    =\tilde \lambda.
  \]
  But $\tilde \lambda$ is a convex combination of $\lambda_1,\lambda_2$, so \[
  \tilde \lambda \leq \max\{\lambda_1,\lambda_2\} = \max\{F(\delta^1,z_1),F(\delta^1,z_2)\}
  \]
  Now choose
  \[
    \tilde z = 
    \begin{cases}
      z_1,&\text{if }\lambda_1\ge\lambda_2,\\
      z_2,&\text{otherwise.}
    \end{cases}
  \]
  By construction \(F(\delta^1,\tilde z)=\max\{\lambda_1,\lambda_2\}\ge\tilde \lambda\), so the inequality holds.

  \item[\emph{(ii)}] At least one of $z_1$ or $z_2$ is infeasible.  Then its $F$-value is $-\infty$, making
  \[
    t\,F(\delta^1,z_1)+(1-t)\,F(\delta^1,z_2) = -\infty \le F(\delta^1,z)
    \quad\forall\,z,
  \]
  so we can pick any $\tilde z$ and the inequality is trivial.
\end{itemize}
Thus in both cases there exists $\tilde z$ with $t\,F(\delta^1,z_1)+(1-t)\,F(\delta^1,z_2)\le F(\delta^1,\tilde z)$,
showing that $F$ is concavelike in $(\delta^2,\contsoldiers^2,\lambda)$.

Finally, it remains to show that $F$ is upper‑semicontinuous in $(\delta^2,\contsoldiers^2,\lambda)$.
Fix a strategy $\delta^1$ of the minimizing player.  For any $\beta \in \mathbb R$, the super‑level set
\[
\{(\delta^2,\contsoldiers^2,\lambda) : F(\delta^1,\delta^2,\contsoldiers^2,\lambda)\ge\beta\}
=
\{\lambda\ge\beta\}
\;\cap\;\bigcap_{i}\{\lambda\le T_i(\delta^1,\delta^2,\contsoldiers^2)\}
\]
is a finite intersection of closed sets, hence closed.  Thus $F$ is upper‑semicontinuous in $(\delta^2,\contsoldiers^2,\lambda)$. 

It follows that 
\[
\min_{\delta^1}\max_{\delta^2,\contsoldiers^2,\lambda}F(\delta^1,\delta^2,\contsoldiers^2,\lambda)
=
\max_{\delta^2,\contsoldiers^2,\lambda}\min_{\delta^1}F(\delta^1,\delta^2,\contsoldiers^2,\lambda).
\]
Unpacking both sides yields
\[
\min_{\delta^1}\max_{\delta^2,\contsoldiers^2} \min_i T_i =
\max_{\delta^2,\contsoldiers^2} \min_{\delta^1} \min_i T_i,
\]
which completes the proof.
\end{proof}

\subsubsection{Discussion: Inapplicability of the Kneser-Fan minimax theorem in other continuous settings}\label{discussion:KF}

In the sum-aggregated (both one-sided and two-sided) and the two-sided min-aggregated continuous two-level Blotto game, the Kenser-Fan theorem does not apply. It follows that the minimax equality is not satisfied and the Nash equilibrium does not exist. This validates the non‑existence results previously established for these cases (\Cref{thm:exist-two-sided-cont} and~\Cref{prop:exist-one-sided-sum-cont}).  

\paragraph{One-sided continuous two-level Blotto with sum aggregator.} We construct a counterexample showing that the payoff is not concavelike in the maximizing player’s strategy. 
    Consider a two-sided continuous Blotto game where the maximizing player (Player 2) has 1 (continuous) soldier to assign across battlefields, i.e. $\contsoldiers=1$ .  Let there be 2 battlefields $(2\times1)$ in which Player 2 plays trivial subgames (i.e. plays a single action w.p. 1). Let \[
    \begin{bmatrix}
    \contsoldiers_1 \\
    0
    \end{bmatrix}, \qquad 
    \begin{bmatrix}
    \contsoldiers_2 \\
    0
    \end{bmatrix}
    \]
    be the payoff matrices of Battlefields 1 and 2 respectively.
    Player 1, plays Top w.p. $x_i$ and Bottom w.p. $1-x_i$ in Battlefield $i\in\{1,2\}.$ Then, the total expected utility is given by \[
    f(x,\contsoldiers)=x_1\contsoldiers_1+x_2\contsoldiers_2
    \]
    The minimizer space is given by \[
    \Delta(\mathcal A^1) = \{(x_1,x_2)\in \mathbb R^2, x_1,x_2\ge 0\},
    \]
    and the maximizer's space of feasible assignments is given by 
    \[\ \Sigma^2=\{(\contsoldiers_1,\contsoldiers_2)\in\mathbb{R}^2 : \contsoldiers_1,\contsoldiers_2\ge0,\;\contsoldiers_1+\contsoldiers_2=1\bigr\}.\]
    By definition of a concavelike function, for any 2 points in the max player space, and any $t\in[0,1]$ there exists a strategy for the max player $\contsoldiers^2$ such that the inequality is satisfied \textit{for all $x \in \Delta(\mathcal A^1)$}. Let's choose \[
    \sigma' = (1,0),\quad \sigma''=(0,1),\quad t=\tfrac12.\]
    Then for every \(x\in[0,1]^2\),
    \[
    \frac12\,f\bigl(x,\sigma'\bigr)
    +\frac12\,f\bigl(x,\sigma''\bigr)
    =\frac12\bigl(x_1\cdot1 + x_2\cdot0\bigr)
    +\frac12\bigl(x_1\cdot0 + x_2\cdot1\bigr)
    =\tfrac12(x_1+x_2).
    \]
    $f$ being concavelike would require some $\bar\sigma\in\Sigma^2$ with
    \[
    f(x,\bar\sigma)
    =x_1\,\bar\sigma_1 + x_2\,\bar\sigma_2
    \;\ge\;\tfrac12(x_1+x_2)
    \quad\text{for all }x.
    \]
    Plugging $x=(1,0)$ forces $\bar\sigma_1\ge\tfrac12$, while
    $x=(0,1)$ forces $\bar\sigma_2\ge\tfrac12$.  But
    $\bar\sigma_1+\bar\sigma_2=1$ makes this impossible.  Hence $f$
    is not concavelike in $\contsoldiers^2$.

    It follows that the utility function fails to be concavelike in the two-sided sum-aggregated continuous Blotto, which justifies the non-existence of NE result in that setting.

 \paragraph{Two-sided continuous two-level Blotto with min aggregator.} 
     We outline the intuition behind why the Kneser-Fan minimax theorem fails to hold when both players allocate soldiers to battlefields under the min aggregator. 
     In this setting, the min-max formulation of the problem is given by \[
    \min_{\delta^1,\contsoldiers^1} \max_{\delta^2,\contsoldiers^2}\min_{i\in[n]} \sum_{\alpha_i^1,\alpha_i^2}\delta^1_i(\alpha_i^1)\delta_i^2(\alpha_i^2)u_i(\alpha_i^1,\alpha_i^2,\contsoldiers_i^1,\contsoldiers_i^2),
    \]
    which, after convexifying and taking the dual of the inner minimization problem, can be written as 
    \[\min_{\delta^1,\contsoldiers^1}\max_{\delta^2,\contsoldiers^2,\lambda}
    F(\delta^1,\contsoldiers^1,\delta^2,\contsoldiers^2,\lambda),\]
    where \[
    F\bigl(\delta^1, \contsoldiers^1, \delta^2,\contsoldiers^2,\lambda\bigr) =
    \begin{cases}
    \;\lambda,
    &\text{if } \;\lambda\le T_i(\delta^1,\contsoldiers^1,\delta^2,\contsoldiers^2) \; \forall i,\\
    \;-\infty,
    &\text{otherwise.}
    \end{cases}
    \]
    with $T_i(\delta^1,\delta^2,\contsoldiers^1,\contsoldiers^2) := \sum_{\alpha_i^1,\alpha_i^2}\delta^1_i(\alpha_i^1)\delta_i^2(\alpha_i^2)u_i(\alpha_i^1,\alpha_i^2,\contsoldiers_i^1,\contsoldiers_i^2).$
    First, note that convexlikeness in the $(\delta^1,\contsoldiers^1)$–block demands that for \textit{any} two minimizer choices $(\delta^1_1,\contsoldiers^1_1)$ and $(\delta^1_2,\contsoldiers^1_2)$ and \textit{any} weight $t\in[0,1]$, we can find some $(\tilde \delta^1,\tilde \contsoldiers^1)$ so that \[
    tF(\delta^1_1,\contsoldiers^1_1,\delta^2,\contsoldiers^2,\lambda)+(1-t)F\bigl(\delta^1_2, \contsoldiers^1_2, \delta^2, \contsoldiers^2,\lambda\bigr) \geq F(\tilde \delta^1, \tilde \contsoldiers^1, \delta^2,\contsoldiers^2,\lambda) \; \forall (\delta^2,\contsoldiers^2,\lambda)
    \]
    We saw that the linearity in $\delta^1$ lets us pick $\tilde \delta^1=t\delta^1_1+(1-t)\delta^1_2$, and that takes care of the $\delta^1$ coordinate. But nothing in our assumptions guarantees that we can choose a $\tilde \contsoldiers^1$ that preserves $\lambda\le T_i(\tilde \delta^1,\tilde \contsoldiers^1,\delta^2,\contsoldiers^2)$ whenever it held at $\contsoldiers^1_1$ and $\contsoldiers^1_2$. More precisely, for a fixed $(\tilde \delta^1, \delta^2, \contsoldiers^2,\lambda)$, the set of feasible $\contsoldiers^1$ is given by \[
    S=\{\contsoldiers^1: T_i(\tilde \delta^1, \contsoldiers^1,\delta^2,\contsoldiers^2)\ge \lambda \; \forall i\}. 
    \]
    However, if $T_i$ is not convex in $\contsoldiers^1$ then $S$ need not be a convex (or even connected) region. 
    Moreover, convexlikeness does not just demand some $\tilde \contsoldiers^1\in S$ for one fixed choice of $(\delta^2,\contsoldiers^2,\lambda)$, it requires a single $\tilde \contsoldiers^1$ that works simultaneously \textit{for every} triple $(\delta^2,\contsoldiers^2,\lambda)$. In other words, $\tilde \contsoldiers^1$ must lie in the intersection \[
    \bigcap_{(\delta^2,\contsoldiers^2,\lambda)} \{\contsoldiers^1: T_i(\tilde \delta^1, \contsoldiers^1,\delta^2,\contsoldiers^2)\ge \lambda \; \forall i\}.
    \]
    Since that intersection can be empty, no such $\tilde \contsoldiers^1$ is guaranteed to exist, and one can construct examples of payoff functions where convexlikeness fails.

\subsection{Proof of~\Cref{thm:ne-comp-one-sided-cont-min}}
\componesidedmincontlinear*
\begin{proof}
    We can write the min-max formulation of the problem as:
\begin{align}
    &\max_{\delta^{2},\sigma^2}\min_{\delta^{1}} \min_{i\in[n]} \mathbb{E}[u_i(\alpha_i^{1},\alpha_i^{2},\contsoldiers_i^2)] \\ &= \max_{\contsoldiers^2}\max_{\delta^{2}}\min_{\delta^{1}}\min_{p\in\Delta(n)} \sum_{i\in[n]} p_i\biggl(\sum_{\alpha_i^1\in \mathcal A_i^1} \sum_{\alpha_i^2\in \mathcal A_i^2} \delta_{i}^1(\alpha_i^1)\delta^2_{i,\contsoldiers_i^2}(\alpha_i^2) u_i(\alpha_i^1,\alpha_i^2,\contsoldiers_i^2) \biggr) \label{eq2:cont-min-one-sided} \\
    &= \max_{\contsoldiers^2}\max_{\delta^{2}}\min_{y^1}\sum_{i\in[n]} \sum_{\alpha_i^1\in \mathcal A_i^1} \sum_{\alpha_i^2\in \mathcal A_i^2} y^1(i,\alpha_i^1)\delta^2_{i,\contsoldiers_i^2}(\alpha_i^2) c_i\cdot\contsoldiers_i^2 \label{eq3:cont-min-one-sided} \\
    &= \max_{y^{2}}\min_{y^1}\sum_{i\in[n]} \sum_{\alpha_i^1\in \mathcal A_i^1} \sum_{\alpha_i^2\in \mathcal A_i^2} y^1(i,\alpha_i^1) y^2_{\contsoldiers_i^2}(i,\alpha_i^2) c_i\label{eq4:cont-min-one-sided} 
\end{align}
where equality \eqref{eq2:cont-min-one-sided} follows from convexifying the objective function by rewriting the minimum over finitely many battlefields $i\in[n]$ as the minimum over convex combinations with respect to weight variables $p_i$. Note that this convexification step arise from the fact that the optimum is always attained at a vertex of the simplex.  Equation \eqref{eq3:cont-min-one-sided} then comes from merging the variables $p,\delta^{1}$ into a sequence form product variable $y^1$ defined over the polytope $\mathcal P$. 
Similarly, in equation \eqref{eq4:cont-min-one-sided}, we merge the variables $\contsoldiers^2,\delta^2$ into the sequence-form variable $y^2_{\contsoldiers^2}$ is defined over polytope $\mathcal Q$ (refer to Sections ~\ref{sect:disc-min} and~\ref{sect:cont-sum} for more details on polytopes $\mathcal P$ and $\mathcal Q$ and the corresponding strategy representations). 
\end{proof}

\subsection{Proof of~\Cref{cor:one-sided-min-linear-cont} }
\existonesidedminlinear*

\begin{proof}
As the objective is bilinear in the sequence form variables $y^1$ and $y^2$, a Nash equilibrium of the game can be obtained by solving a linear program (see~\Cref{dual:one-sided-cont-min}). Since we are optimizing over the polytopes $\mathcal P$ and $\mathcal Q$, both of which have polynomial-size representations, then the solution can be computed in polynomial time.  
\end{proof}

\subsection{Proof of~\Cref{thm:maxmin-one-sided-min-cont-linear}}
\maxminonesidedmincont*
\begin{proof}
We can formulate the max-min problem as follows:
\begin{align}
    &\max_{\delta^{2},\sigma^2}\min_{\delta^{1}} \min_{i\in[n]} \mathbb{E}[u_i(\alpha_i^{1},\alpha_i^{2},\contsoldiers_i^2)] \\ &= \max_{\contsoldiers^2}\max_{\delta^{2}}\min_{\delta^{1}}\min_{p\in\Delta(n)} \sum_{i\in[n]} p_i\biggl(\sum_{\alpha_i^1\in \mathcal A_i^1} \sum_{\alpha_i^2\in \mathcal A_i^2} \delta_{i}^1(\alpha_i^1)\delta^2_{i,\contsoldiers_i^2}(\alpha_i^2) u_i(\alpha_i^1,\alpha_i^2,\contsoldiers_i^2) \biggr) \label{eq2:cont-min-one-sided-maxmin} \\
    &= \max_{\contsoldiers^2}\max_{\delta^{2}}\min_{y^1}\sum_{i\in[n]} \sum_{\alpha_i^1\in \mathcal A_i^1} \sum_{\alpha_i^2\in \mathcal A_i^2} y^1(i,\alpha_i^1)\delta^2_{i,\contsoldiers_i^2}(\alpha_i^2) u_i(\alpha_i^1,\alpha_i^2,\contsoldiers_i^2) \label{eq3:cont-min-one-sided-maxmin} \\
    &= \max_{\contsoldiers^2}\min_{y^1} \max_{\delta^{2}} \sum_{i\in[n]} \sum_{\alpha_i^1\in \mathcal A_i^1} \sum_{\alpha_i^2\in \mathcal A_i^2} y^1(i,\alpha_i^1)\delta^2_{i,\contsoldiers_i^2}(\alpha_i^2) u_i(\alpha_i^1,\alpha_i^2,\contsoldiers_i^2) \label{eq4:cont-min-one-sided-maxmin} \\ 
    &= \max_{\sigma^2}\min_{p\in\Delta(n)}\sum_{i\in[n]} p_i \min_{\delta_i^1}\max_{\delta_i^2} \sum_{\alpha_i^1\in \mathcal A_i^1} \sum_{\alpha_i^2\in \mathcal A_i^2} \delta^1_i(\alpha_i^1) \delta^2_{i,\contsoldiers_i^2}(\alpha_i^2) u_i(\alpha_i^1,\alpha_i^2,\contsoldiers_i^2)\label{eq5:cont-min-one-sided-maxmin} \\
    &= \max_{\sigma^2}\min_{p\in\Delta(n)}\sum_{i\in[n]} p_i v_i^*(\sigma^2_i)\\ 
    &= \max_{\sigma^2}\min_{i\in[n]}v^*_i(\contsoldiers_i^2), 
\end{align}
where equations \eqref{eq2:cont-min-one-sided-maxmin} and \eqref{eq3:cont-min-one-sided-maxmin} follow from convexifying the objective function and merging variables $p,\delta^1$ into $y^1$ defined over the polytope $\mathcal P$. Since the objective is now bilinear in $y^1$ and $\delta^2$, we now apply Sion's minimax theorem in equation \eqref{eq4:cont-min-one-sided-maxmin} to exchange the inner $\max$ and $\min$. In equation \eqref{eq5:cont-min-one-sided-maxmin}, we disaggegate $y^1$ back into $p,\delta^1$  and rearrange the formula. Finally, having a bilinear objective in $\delta^1,\delta^2$, we apply Sion's minimax theorem to the inner minmax problem and replace  $\min_{\delta^1}\max_{\delta^2}\mathbb E[u_i(\alpha_i^1,\alpha_i^2,\sigma_i^2)]$ by the Nash value function $v_i^*(\sigma^2_i)$.

We now clarify why the optimization over $\delta^1$ and $\delta^2$ in Equation~\eqref{eq5:cont-min-one-sided-maxmin} can be disaggregated across battlefields. In fact, once we move the $\min$ and $\max$ operators inside the sum over 
$i\in[n]$, the expression becomes separable: the term for each battlefield $i$ involves only local strategies $\delta^1_i$ and $\delta_i^2$. This separation is valid because battlefield subgames are assumed to be independent, so players choose strategies independently in each subgame. As a result, we can optimize each subgame separately and compute the Nash value $v_i^*(\sigma^2_i)$ for each battlefield $i$ in isolation.  
\end{proof}

\subsection{Proof of~\Cref{lemma:quasiconcavity}}
\quasiconcavitymaxmin*
\begin{proof}
    Since $u_i(\alpha_i^1,\alpha_i^2,\sigma^2)$ is increasing in $\sigma^2$ for every $\alpha_i^1$ and $\alpha_i^2$, the Nash equilibrium computation preserves this monotonicity, implying that $v^*_i(\sigma^2)$ is increasing in $\sigma^2$. Since every monotonic univariate function is quasiconcave, and the pointwise minimum of quasiconcave functions is quasiconcave, then $V(\sigma^2)$ is quasiconcave. 
\end{proof}

\subsection{Proof of~\Cref{prop:nash-subgrad}}
\subgnashval*
\begin{proof}
  Recall the definition of the Nash value function
  \[
    v_i^*(\sigma^2)
    = \max_{\delta^2 \in \Delta(\mathcal A_i^2)}
      \min_{\delta^1 \in \Delta(\mathcal A_i^1)}
      \sum_{\alpha_i^1,\alpha_i^2}
        \delta_i^1(\alpha_i^1)\,\delta_i^2(\alpha_i^2)\,
        u_i(\alpha_i^1,\alpha_i^2,\sigma^2_i).
  \]

  \textit{1. Inner minimization over \(\delta^1\).}  
  
  For each fixed \(\delta_i^2\), the map
  \(\delta_i^1\mapsto \sum_{\alpha_i^1,\alpha_i^2}\delta_i^1(\alpha_i^1)\delta_i^2(\alpha_i^2)u_i(\alpha_i^1,\alpha_i^2,\sigma^2_i)\)
  is linear in \(\delta_i^1\) and hence its minimum over the simplex
  \(\Delta(\mathcal A_i^1)\) is attained at an extreme point (pure action)
  \(\alpha_i^1\). 
  Define 
  \[
    \phi_{\alpha_i^2}(\sigma_i^2)
    := \min_{\alpha_i^1\in\mathcal A_i^1}
       u_i(\alpha_i^1,\alpha_i^2,\sigma^2_i),
    \quad
    \forall\,\alpha_i^2\in\mathcal A_i^2.
  \]
  Then we have
  \[
    \min_{\delta^1}\sum_{\alpha_i^1,\alpha_i^2}
      \delta_i^1(\alpha_i^1)\delta_i^2(\alpha_i^2)u_i(\alpha_i^1,\alpha_i^2,\sigma^2_i)
    = \sum_{\alpha_i^2\in \mathcal A_i^2} \delta_i^2(\alpha_i^2) \min_{\alpha^1\in\mathcal A_i^1} u_i(\alpha_i^1,\alpha_i^2,\sigma^2_i)
    = \sum_{\alpha_i^2\in \mathcal A_i^2} \delta_i^2(\alpha_i^2) \phi_{\alpha_i^2}(\sigma_i^2).
  \]
Since each map \(u_i(\alpha_i^1,\alpha_i^2,\cdot)\) is Lipschitz,
  \(\phi_{\alpha_i^2}\) is a pointwise minimum of finitely many Lipschitz
  functions and therefore subdifferentiable.  It follows that, for each \(\alpha_i^2\), we have
  \[
    \partial_{\sigma_i^2}\phi_{\alpha_i^2}(\sigma^2)
    = \conv\Bigl\{
        \partial_{\sigma_i^2}u_i(\alpha_i^1,\alpha_i^2,\sigma^2_i)
        : \alpha_i^1\in\arg\min_{\beta_i^1\in\mathcal A_i^1}
          u_i(\beta_i^1,\alpha_i^2,\sigma^2_i)
      \Bigr\}.
  \]

\textit{2. Outer maximization over \(\delta^2\).}
  Define
  \[
    F(\sigma^2,\delta^2)
    := \sum_{\alpha_i^2\in\mathcal A_i^2}
         \delta_i^2(\alpha_i^2)\,
         \phi_{\alpha_i^2}(\sigma^2).
  \]
  Since \(F\) is linear in \(\delta^2\), its maximum over the simplex
  \(\Delta(\mathcal A_i^2)\) is attained at a vertex (pure action)
  \(\alpha_i^2\).  Hence
  \[
    v_i^*(\sigma^2)
    = \max_{\delta^2} F(\sigma^2,\delta^2)
    = \max_{\alpha_i^2\in\mathcal A_i^2}
      \phi_{\alpha_i^2}(\sigma^2).
  \]
Each \(\phi_{\alpha_i^2}\) is Lipschitz (as a pointwise minimum of Lipschitz maps), so by Danskin’s theorem,
  \[
    \partial_{\sigma_i^2}v_i^*(\sigma^2)
    = \conv\Bigl\{
        \partial_{\sigma_i^2}\phi_{\alpha_i^2}(\sigma^2)
        : \alpha_i^2\in\arg\max_{\beta_i^2\in\mathcal A_i^2}
          \phi_{\beta_i^2}(\sigma_i^2)
      \Bigr\}.
  \]
\textit{3. Substitute the inner‐minimum subgradients.}
Substituting the expression for  $\partial_{\sigma_i^2}\phi_{\alpha_i^2}(\sigma^2)$ into $\partial_{\sigma_i^2}v_i^*(\sigma^2)$ yields $\partial_{\sigma_i^2}v_i^*(\sigma^2)=$
  \begin{equation}\label{eq:subdiff-u}
     \conv\Bigl\{
        \partial_{\sigma_i^2}u_i(\alpha_i^1,\alpha_i^2,\sigma^2_i):
        \alpha_i^2\in\arg\max_{\beta_i^2}\min_{\beta_i^1}u_i(\beta_i^1,\beta_i^2,\sigma^2_i),
        \; \alpha_i^1\in\arg\min_{\beta_i^1}u_i(\beta_i^1,\alpha_i^2,\sigma^2_i)
      \Bigr\}.
\end{equation}

Let \[
S(\sigma^2):= \biggl\{(\alpha_i^1,\alpha_i^2): \alpha_i^2\in\arg\max_{\beta_i^2}\min_{\beta_i^1}u_i(\beta_i^1,\beta_i^2,\sigma^2_i),
        \; \alpha_i^1\in\arg\min_{\beta_i^1}u_i(\beta_i^1,\alpha_i^2,\sigma^2_i)\biggr\}
\]
denote the set of pure pairs $\alpha_i^2\in\arg\max_{\beta_i^2}\min_{\beta_i^1}u_i(\beta_i^1,\beta_i^2,\sigma^2_i)$,  
$\alpha_i^1\in\arg\min_{\beta_i^1}u_i(\beta_i^1,\alpha_i^2,\sigma^2_i)$. 
  By~\eqref{eq:subdiff-u}, we have 
  \[
    \partial_{\sigma^2_i}v_i^*(\sigma^2)
    = \conv\Bigl\{
      \partial_{\sigma^2_i}u_i(\alpha_i^1,\alpha_i^2,\sigma^2_i):(\alpha_i^1,\alpha_i^2)\in S(\sigma^2)
    \Bigr\}.
  \]
  
By the equilibrium properties of the zero-sum subgame $G_i$, we know that every pair \((\alpha_i^1,\alpha_i^2)\) in the support of \((\delta_i^{1*},\delta_i^{2*})\) lies in \(S(\sigma^2)\).  Since the weights 
  \(\delta_i^{1*}(\alpha_i^1),\delta_i^{2*}(\alpha_i^2)\)  
  form a convex combination supported on \(S(\sigma^2)\), the vector
  \(\sum_{\alpha_i^1,\alpha_i^2} \delta_i^{1*}(\alpha_i^1)\,\delta_i^{2*}(\alpha_i^2)\,z_{\alpha_i^1,\alpha_i^2}\)  
  is exactly one element of that convex hull.  It follows that \(g_i\in\partial_{\sigma^2_i}v_i^*(\sigma^2)\).  
\end{proof}

\subsection{Proof of~\Cref{cor:V-subgrad}}
\subgaggfct*
 \begin{proof}

Under the assumption that each $u_i$ is Lipschitz, for each battlefield $i$ and pair of actions $(\alpha_i^1,\alpha_i^2)$ there exists a constant $L_i^{\alpha_i^1,\alpha_i^2}$ such that
    \[\bigl|u_i(\alpha_i^1,\alpha_i^2,\sigma^2_i)-u_i(\alpha_i^1,\alpha_i^2,\tilde\sigma^2_i)\bigr|\leq L_i^{\alpha_i^1,\alpha_i^2}\,\bigl\|\sigma^2-\tilde\sigma^2\bigr\|
    \quad\forall\,\sigma^2,\tilde\sigma^2.
    \]
    Fix Nash equilibrium strategies \(\delta_i^{1*},\delta_i^{2*}\).  Then
    \[
      v_i^*(\sigma^2)
      = \sum_{\alpha_i^1,\alpha_i^2}
          \delta_i^{1*}(\alpha_i^1)\,\delta_i^{2*}(\alpha_i^2)\;u_i\bigl(\alpha_i^1,\alpha_i^2,\sigma^2_i\bigr).
    \]
    For any \(\sigma^2,\tilde\sigma^2\),
    \begin{align*}
      \bigl|v_i^*(\sigma^2)-v_i^*(\tilde\sigma^2)\bigr|
      &\le \sum_{\alpha_i^1,\alpha_i^2}
            \delta_i^{1*}(\alpha_i^1)\,\delta_i^{2*}(\alpha_i^2)\,
            \bigl|u_i(\alpha_i^1,\alpha_i^2,\sigma^2_i)-u_i(\alpha_i^1,\alpha_i^2,\tilde\sigma^2_i)\bigr|\\
      &\le \sum_{\alpha_i^1,\alpha_i^2}
            \delta_i^{1*}(\alpha_i^1)\,\delta_i^{2*}(\alpha_i^2)\,
            L_i^{\alpha_i^1,\alpha_i^2}\,
            \|\sigma^2-\tilde\sigma^2\|\\
      &\le \Bigl(\max_{\alpha_i^1,\alpha_i^2}L_i^{\alpha_i^1,\alpha_i^2}\Bigr)\,
            \|\sigma^2-\tilde\sigma^2\|
      =:L_i\,\|\sigma^2-\tilde\sigma^2\|.
    \end{align*}
    where we set $L_i =\max_{\alpha_i^1,\alpha_i^2}L_i^{\alpha_i^1,\alpha_i^2}$. It follows that each $v_i^*$ is $ L_i$-Lipschitz.
    
    Next, since 
    \(\;V(\sigma^2)=\min_i v_i^*(\sigma^2)\), 
    and each \(v_i^*\) is Lipschitz with constant \(L_i\), the pointwise minimum of these \(n\) functions is Lipschitz with constant $L =\max_{i\in [n]}L_i.$
    Hence, we can write
    \[
      \bigl|V(\sigma^2)-V(\tilde\sigma^2)\bigr| \leq
      L\,\|\sigma^2-\tilde\sigma^2\|.
    \]
    Hence, $V$ is $L$-Lipschitz on the compact set $\Sigma^2$. Moreover, Weistrass' theorem shows that an optimal solution exists: there exists \(\sigma^{2*}\in\Sigma^2\) with
    \(\max_{\sigma^2}V(\sigma^2)=\max_{\sigma^2}\min_{i\in[n]}v^*_i(\sigma^2) = V(\sigma^{2*})\). 

  Let
  \[
    W(\sigma^2)\;=\;-\,V(\sigma^2)
    =\max_{i\in[n]}\bigl[-\,v_i^*(\sigma^2)\bigr].
  \]
  Since each \(v_i^*\) is Lipschitz on \(\Sigma^2\), hence so is 
  \(w_i(\sigma^2):=-v_i^*(\sigma^2)\).  Since \(W\) is the pointwise maximum of the finitely many Lipschitz functions \(\{w_i\}\), Danskin’s theorem yields
  \[
    \partial_{\sigma^2}W(\sigma^2)
    = \conv\Bigl\{
        \partial_{\sigma^2}w_i(\sigma^2)
        : i\in\arg\max_{j}\,w_j(\sigma^2)
      \Bigr\}.
  \]
  Noting that 
  \(\arg\max_j w_j = \arg\min_j v_j^*\) and 
  \(\partial_{\sigma^2}w_i = -\,\partial_{\sigma^2}v_i^*\), we obtain
  \[
    \partial_{\sigma^2}V(\sigma^2)
    = -\,\partial_{\sigma^2}W(\sigma^2)
    = \conv\Bigl\{
        \partial_{\sigma^2}v_i^*(\sigma^2)
        : i\in\arg\min_{j}v_j^*(\sigma^2)
      \Bigr\},
  \]
  which completes the proof.
\end{proof}

\subsection{Proof of~\Cref{prop:sharp-max}}
\sharpmax*
\begin{proof}
    Assume that battlefield utilities are affine functions in $\contsoldiers^2$. 
    For any fixed $\contsoldiers^2$, the expected payoff to Player 2 under mixed strategies $(\delta_i^1,\delta_i^2)$ in subgame $G_i$ is \[
  \mathbb E [u_i(\alpha_i^1,\alpha_i^2,\contsoldiers_i^2)] = \sum_{\alpha_i^1,\alpha_i^2} \delta_i^1(\alpha_i^1)\delta_i^2(\alpha_i^2) c_{i_{\alpha_i^1,\alpha_i^2}}\;\contsoldiers^2_i+d_{i_{\alpha_i^1,\alpha_i^2}}
  \] 
  Hence \[v_i^*(\contsoldiers^2_i) = \biggl[ \max_{\delta^2} \min_{\delta^1} \sum_{\alpha_i^1,\alpha_i^2} \delta_i^1(\alpha_i^1)\delta_i^2(\alpha_i^2) c_{i_{\alpha_i^1,\alpha_i^2}} \biggr]\contsoldiers^2_i + \underbrace{ \sum_{\alpha_i^1,\alpha_i^2} \delta_i^1(\alpha_i^1)\delta_i^2(\alpha_i^2) d_{i_{\alpha_i^1,\alpha_i^2}}}_{\text{constant in }\contsoldiers_i^2} =C_i \;\contsoldiers^2_i + D_i.\]
  It follows that $v_i^*(\contsoldiers^2_i)$ is affine with slope $C_i$. Since each $c_{i_{\alpha_i^1,\alpha_i^2}}>0$, every such slope is at least $\epsilon=\min_{i,\alpha_i^1,\alpha_i^2}c_{i_{\alpha_i^1,\alpha_i^2}}>0$. Moreover, since the pointwise minimum of a finite family of affine functions is concave, then $V=\min_i v_i^*$ is concave on the soldiers simplex $\Sigma^2$. The usual subgradient inequality at $\contsoldiers^{2*}$ gives, for any $g\in\partial V(\contsoldiers^{2*})$ and any $\contsoldiers^2\in \Sigma^2$, \[
   V(\contsoldiers^{2*})-V(\contsoldiers^2) \ge \langle g, \; \contsoldiers^{2*} - \contsoldiers^2 \rangle.
  \]
  Because $v_i^*=C_i\; \contsoldiers^{2*}_i+D_i$ with $C_i\ge \epsilon > 0$, the subdifferential $\partial_{\contsoldiers^2}v_i^*$ at $\contsoldiers^{2*}$ is simply the singleton $\{C_i\}$. Hence, using~\Cref{cor:V-subgrad}, it follows that one valid choice of $g$ is the vector $(C_{i^*}\cdot e_i)^T, \; C_{i^*}\geq \epsilon$. Plugging this into the subgradient inequality, we obtain \[V(\contsoldiers^{2*})-V(\contsoldiers^2) \ge C_{i^*}(\contsoldiers^{2*}_{i^*}-\contsoldiers^2_{i^*})=C_{i^*}\;\|\contsoldiers^{2*}-\contsoldiers^2\|_{\infty} \ge \epsilon \;\|\contsoldiers^{2*}-\contsoldiers^2\|_{\infty}\] where the equality follows since the optimum is attained at a vertex of the simplex $\Sigma^2$. This establishes the desired sharpness inequality with $p=1$ and $\|.\|=\|.\|_{\infty}$.
\end{proof}

\section{Experiments}
\label{sec:appendix-expts}
\subsection{Discrete two-sided with sum aggregator}
\paragraph{Experimental Setup} The experiments were run on a Apple M1 processor with 16GB of RAM. We used Gurobi \cite{gurobi} as the LP solver. The online learners were implemented in C++ and built with the \textit{-Ofast} flag to optimize for running time. We did not employ OpenMP or any other parallelization. Note also that we did not resort to complicated optimizations (particularly with respect to reuse of memory) in our implementation.  

\paragraph{Game Description} We have $n$ battlefields. We adopt a ``soft'' version of blotto where the probability that player 1 wins battlefield $i$ is given by $k^1_i/(k^1_i + k^2_i)$, and the battlefield itself is worth $1/n$ (therefore battlefields are heterogeneous). In other words, for each battlefield a soldier (from either side) is chosen uniformly at random to be the winner, and the player it belongs to is awarded victory of the battlefield. If both sides field no soldiers, then a random player wins it. Note that losers get negative the worth of the battlefield, so this is a zero-sum game. 

Within each battlefield, each player has (regardless of how many soldiers it places in it) two actions, to double or not. Doubling means that winning yields double the utility, but also double the loss if it loses the battlefield. If both players double, then the value of the battlefield is quadrupled. Therefore, intuitively players should double when they have placed more soldiers in the battlefield. Note that this doubling battlefield subgame was chosen for simplicity; one can easily think of more complicated settings.

\paragraph{Extensions to the main paper.} In Table~\ref{tab:expt-rm-extended} we present extended results to to the main paper. We run additional experiments using regular regret matching(+), as well as their optimistic variants predictive regret matching (PRM) and PRM+. PRM and PRM+ as regret minimizers over the simplex (see Section~\ref{sec:online-learning-appendix-discrete-sum-agg}). Here, we use vanilla variants where players are performing updates simultaneously (as opposed to alternating variants), and furthermore, current strategies are obtained using simple averaging over past iterates. We terminate the first order methods when the saddle point gap is below $2.0 \cdot 10^{-3}$. This computation of saddle-point gap and check is done every $100$ iterations. 
\begin{table}[h]
    \centering
    \begin{tabular}{cccc|ccccc}
        \hline
        $n$ & $m^1$ & $m^2$ & $|\Gamma^1|, |\Gamma^2|$ & LP (s) & RM+ (s)& RM (s) & PRM+ (s) & PRM (s) \\
        \hline
        30 & 100 & 50 & 85, 49 & 91 & 56 &64  & 50 & 49 
        \\
        \hline
        35 & 125 & 70 & 281k, 92k & $4.8\cdot 10^3$ & $1.2\cdot 10^2$ & $1.5\cdot 10^2$ & $1.1\cdot 10^2$ & $1.0\cdot 10^2$ 
        \\
        \hline 
        40 & 150 & 100 & 1M, 262k& $5.4 \cdot 10^3$ & $6.5\cdot 10^2$& $6.5\cdot 10^2$& $4.5\cdot 10^2$ & $4.4\cdot 10^2$
        \\
        \hline 
        50 & 200 & 100 & 12.6M, 2.05M & NA  & $9.5\cdot 10^3$ & $1.2 \cdot 10^3$& $8.2 \cdot 10^3$ & $7.2 \cdot 10^3$
        \\
        \hline
    \end{tabular}
    \caption{Runtime for discrete soldiers with sum aggregators}
    \label{tab:expt-rm-extended}
\end{table}

We observe that the predictive variants perform better than. We also observed that even though running times for PRM+ are slightly slower than PRM, the number of iterates required is slightly lower as well, suggesting that the slower runtime is probably due to the extra computation required per iteration.

\paragraph{Additional results.} In Table~\ref{tab:expt-rm-speedup} we show the results from two speedups. First, we adopt alternating updates, which have been shown empirically to perform significantly better than simultaneous updates. Second, we adopt \textit{quadratic averaging} over past iterates for current strategies. This has been shown to also converge to an equilibrium, but often at a much faster rate since more recent iterates are given a higher weightage. Note that the running times for LPs are reused and exactly the same as in Table~\ref{tab:expt-rm-extended}.
\begin{table}[h]
    \centering
    \begin{tabular}{cccc|ccccc}
        \hline
        $n$ & $m^1$ & $m^2$ & $|\Gamma^1|, |\Gamma^2|$ & LP (s) & RM+ (s)& RM (s) & PRM+ (s) & PRM (s) \\
        \hline
        30 & 100 & 50 & 85, 49 & 91 & 5.1 &6.1 & 4.2 & 4.0 
        \\
        \hline
        35 & 125 & 70 & 281k, 92k & $4.8\cdot 10^3$ & 12 & 18 & 11 & 16 
        \\
        \hline 
        40 & 150 & 100 & 1M, 262k& $5.4 \cdot 10^3$ & 58 & 90& 47& 52
        \\
        \hline 
        50 & 200 & 100 & 12.6M, 2.05M & NA  & $1.2\cdot 10^3$ & $1.3\cdot 10^3$ & $9.3 \cdot 10^2$ & $1.2\cdot 10^3$
        \\
        \hline
    \end{tabular}
    \caption{Runtime for discrete soldiers with sum aggregators and speedups}
    \label{tab:expt-rm-speedup}
\end{table}

\subsection{Continuous one-sided with min-aggregator}\label{app:cont-one-sided-experiments}

\paragraph{Experimental Setup} The experiments were run on a 12th Gen Intel(R) Core(TM) i9-12900 processor with 67GB of RAM. We used Gurobi~\cite{gurobi} as the LP solver to compute the subgame Nash equilibrium in every battlefield.

\subsubsection{Random battlefield utilities}

Consider a min-aggregated continuous two-level Blotto game, under the one-sided scenario in which Player 2 is endowed with 20 soldiers that she allocates across 5 battlefields. We randomly generate 2 families of utility functions in every battlefield: (i) affine utilities where $u_i(\contsoldiers^2)=c_i\,\contsoldiers^2+d_i$, and (ii) quadratic utilities where $u_i(\contsoldiers^2)=b_i\,(\contsoldiers^2)^2+c_i\,\contsoldiers^2+d_i$, with $b_i,c_i,d_i\sim\mathrm{Uniform}[0,100]$. We consider different subgame sizes, varying players' action spaces in $\{20,30,40\}$ and generate 10 independent instances per size. We run~\Cref{alg:psa} using a diminishing step-size $\eta_t = \eta_0 / \sqrt{(t + 1)}$ with $\eta_0=0.01$.
Figures \ref{fig:affine-indiv-values} and \ref{fig:quadr-indiv-values} display, for each subgame size, the normalized aggregate objective $V^t$ (red curve) together with the individual trajectories from the 10 independent instances (gray curves), all plotted over running time (in seconds). Here $V_{\mathrm{norm}}^t = \frac{V^t}{V_{\max}},$ where $V_{\max}$ is highest value attained across instances.

\begin{figure}[h]
    \centering
   \includegraphics[width=\linewidth]{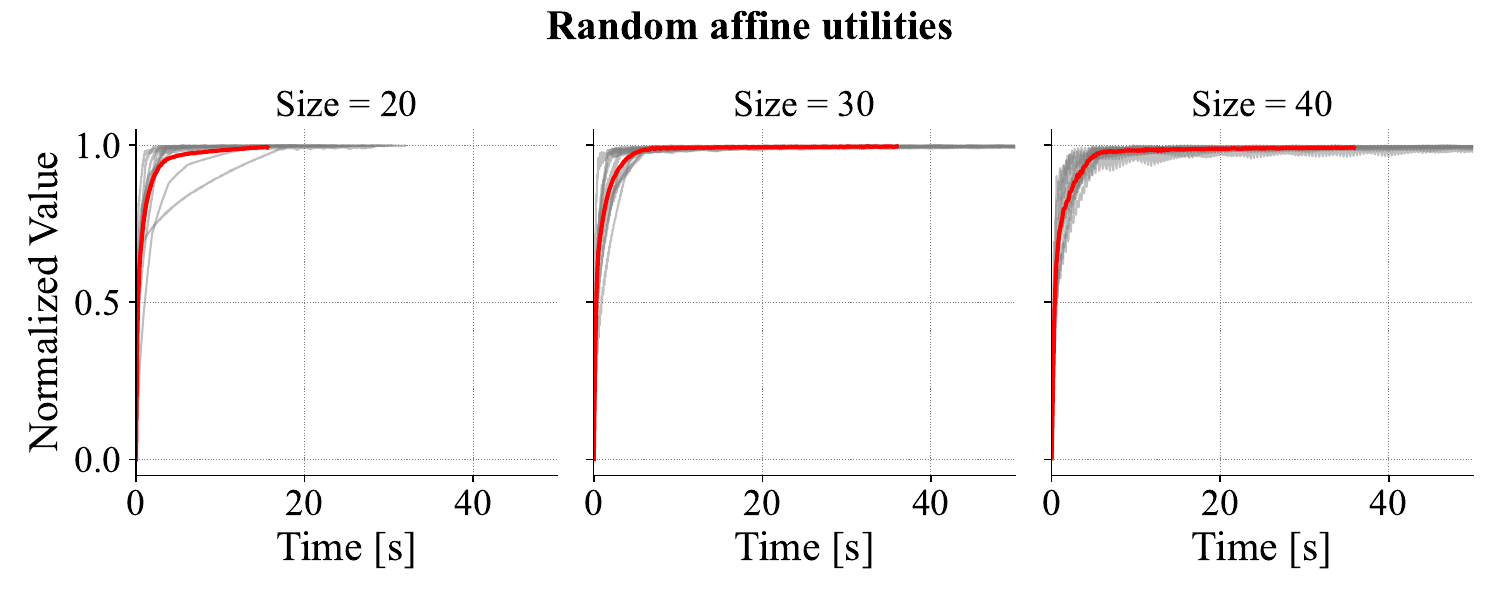}
\caption{Average (red) normalized objective value over 10 instances (gray) of a one-sided continuous two-level Blotto game with min aggregator and randomly generated affine battlefield utilities.}
\label{fig:affine-indiv-values}
\end{figure}

\begin{figure}[h]
    \centering
   \includegraphics[width=\linewidth]{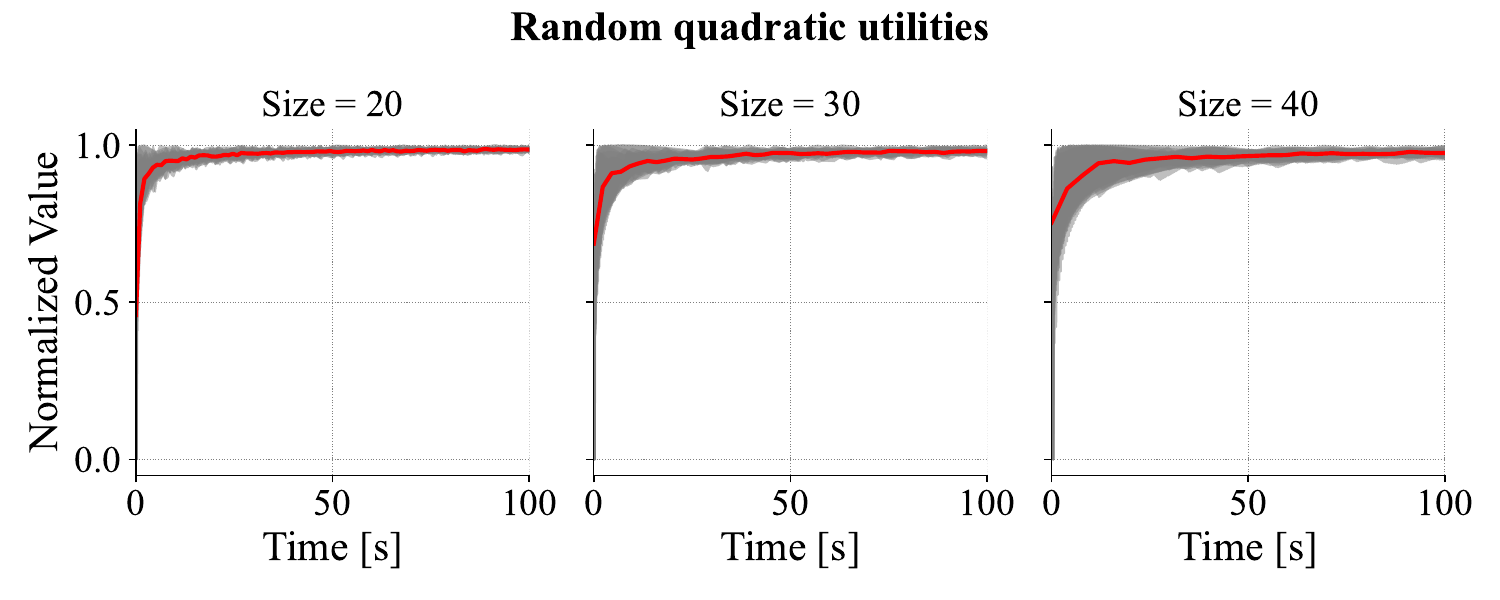}
\caption{Average (red) normalized objective value over 10 instances (gray) of a one-sided continuous two-level Blotto game with min aggregator and randomly generated quadratic battlefield utilities.}
\label{fig:quadr-indiv-values}
\end{figure}

\subsubsection{Security-inspired battlefield utilities}
We experiment on a real-world scenario inspired by security applications. We consider a one-sided continuous two-level Blotto game where Player 2 has 10 soldiers that she allocates over 3 battlefields. Each battlefield features a two‐player zero-sum security subgame in which Player 2 plays the defender’s role and Player 1 the attacker’s. For every battlefield $i$, we employ a payoff matrix $A_i$ from~\cite{krever2025guardconstructingrealistictwoplayer}, which generates realistic security‐game instances by leveraging diverse datasets, ranging from animal movement for anti‐poaching to demographic and infrastructure data for urban protection. This framework supplies both a suite of preconfigured instances and the ability to customize utility functions and game parameters. For our experiments, we select three urban‐protection scenarios set in Chinatown: Battlefield 1 models the northeast section over seven decision timesteps; Battlefield 2 uses the same region with eight timesteps; and Battlefield 3 covers the northwest section, also over eight timesteps. To incorporate Player 2’s continuous soldier allocation $\contsoldiers^2$ into the subgame payoffs, we set $u_i(\contsoldiers^2) = A_i + C_i\log(\contsoldiers+1)$ where $C_i$ is a positive normalization matrix used in subgame $i$ with entries $[C_i]_{ij}\sim\mathrm{Uniform}[0,1]$, and the logarithm is taken element-wise.
We executed~\Cref{alg:psa} for 1000 iterations using the same diminishing step-size as in the previous experiment, and tracking the objective value at each step. 
In the rightmost plot of \Cref{fig:conv_subgr_asc}, the convergence curve demonstrates the algorithm’s robustness in this real-world scenario.


\end{document}